\documentclass[12pt,twoside]{article}


\usepackage[margin=1.3in]{geometry}
\usepackage{enumerate}



\usepackage{amsmath,amssymb,euscript}
\usepackage{amssymb,mathrsfs}

\usepackage{amsmath}
\usepackage{amsthm}
\usepackage{amssymb}
\usepackage{graphicx}

\usepackage{makeidx}

\usepackage{hyperref}




\newtheorem{theorem}{Theorem}[section]

\newtheorem{lemma}[theorem]{Lemma}

\newtheorem{proposition}[theorem]{Proposition}
\newtheorem{corollary}[theorem]{Corollary}

\newtheorem{remark}[theorem]{Remark}

\numberwithin{theorem}{section} \numberwithin{equation}{section}

\newcommand{\nc}{\newcommand}

\nc{\be}{\begin{equation}}
\nc{\la}{\label}
\nc{\ba}{\begin{array}}
\nc{\ea}{\end{array}}
\nc{\bs}{\begin{split}}
\nc{\es}{\end{split}}

\newcommand{\R}{{\mathbb R}}
\newcommand{\Z}{{\mathbb Z}}

\newcommand{\C}{{\mathbb C}}

\newcommand{\bH}{{\mathbb H}}

\newcommand{\A}{{\bf A}}
\newcommand{\J}{{\bf J}}
\newcommand{\bfa}{{\bf a}}
\renewcommand{\a}{{\bf a}}

\newcommand{\cE}{\mathcal{E}}

\newcommand{\cH}{\mathcal{H}}

\newcommand{\cL}{\mathcal{L}}         

\nc{\eps}{\epsilon}
\nc{\e}{\epsilon}
\nc{\lam}{\lambda}
\nc{\G}{\Gamma}
\nc{\g}{\gamma}
\nc{\al}{\alpha}
\nc{\del}{\delta}
\nc{\om}{\omega}
\nc{\Om}{\Omega}
\nc{\ta}{\tau}
\nc{\w}{\omega}
\nc{\io}{\iota}
\nc{\h}{\theta}
\nc{\z}{\zeta}
\nc{\s}{\sigma}
\nc{\Si}{\Sigma}
\nc{\Lam}{\Lambda}

\nc{\bP}{\bar{P}}
\nc{\bQ}{\bar{Q}}
\nc{\bL}{\bar{L}}

\nc{\ra}{\rightarrow}

\nc{\ran}{\rangle}
\nc{\lan}{\langle}
\newcommand{\Null}{\operatorname{Null}}

\newcommand{\Ran}[1]{\operatorname{Ran}\, #1}
\newcommand{\re}{\operatorname{Re}}
\newcommand{\one}{{\bf 1}}

\newcommand{\im}{\operatorname{Im}}

\newcommand{\sign}{\operatorname{sign}}

\newcommand{\ls}{\lesssim}

\newcommand{\n}{\nabla}
\newcommand{\p}{\partial}

\newcommand{\divv}{\operatorname{div}}
\newcommand{\curl}{\operatorname{curl}}

\nc{\na}{\nabla_a}
\nc{\dA}{\nabla_A}

\newcommand{\DETAILS}[1]{}

\newcommand{\variants}[1]{}


\newcommand{\avg}[1]{\left\langle #1 \right\rangle} 

\newcommand{\Abs}[1]{| #1 |}

\newcounter{tol}
\def\sp(#1,#2){\left\langle #1 \, , #2 \right\rangle}
\newcommand{\bp}[1]{\setcounter{tol}{0}
	\foreach \j in {#1} {\stepcounter{tol}} 
	\left\langle%
	\foreach \i [count=\ni] in {#1} {
		\ifnum\ni=\arabic{tol}
		\i%
		\else%
		\i,%
		\fi%
	}
	\right\rangle%
}


\newcommand{\Lpsi}[2]{\mathcal{L}_{#2}^{2}} 
\newcommand{\LA}[2]{\vec{\mathcal{L}}_{}^{2}}
\newcommand{\HA}[2]{\vec{\mathcal{H}}_{}^{2}} 

\newcommand{\an}{a^n}
\newcommand{\psio}{\psi_0}
\newcommand{\phio}{\psi_0}

\newcommand{\lat}{\mathcal{L}}

\newcommand{\latn}{\hat{\mathcal L}} 

\nc{\Omt}{\tilde{\Omega}}
\nc{\Oml}{\Omega^\lat}
\nc{\Omn}{\hat{\Omega}} 

\pagestyle{myheadings}                         
\markboth{\hfill{ZHK Chern-Simons equations, September 1, 2019}}{{ZHK Chern-Simons equations,  August 21, 2017}\hfill}  %
\begin{document}
\title{Vortex lattice solutions of the ZHK Chern-Simons equations}
 \date{October 19, 2019} 
\author{K. Rajaratnam$^*$ 
 and  I.M. Sigal\footnote{Dept. of Math., U. of Toronto, Toronto, 
 Canada, krishan2500@gmail.com,  imsigal@gmail.com} 
 }
\maketitle

\begin{abstract} We consider the non-relativistic Chern-Simons equations proposed by Zhang, Hansen and Kivelson as the mean field theory of the fractional Hall effect.
We prove the existence of the vortex lattice solutions (i.e. solution with lattice symmetry and with topological degree one per lattice cell) similar to the Abrikosov solutions of superconductivity. We derive an asymptotic expression for the energy per unit area and show that it attains minimum at the hexagonal lattice.\end{abstract}

\section{Introduction}

The Chern-Simons topological invariant defines a gauge theory in odd space-time dimensions.  
Coupled to matter, this theory leads to a number of closely related models, which play an important role in Particle and Condensed Matter Physics (see \cite{Girv, {KivLeeZhang}, Dun, HorvZh, Zhang} for some reviews). 

In this paper we study  Chern-Simons (CS) gauge theories distinguished by (a) the presence of external magnetic fields and (b) self-interaction potentials of the double well 
 type. 
In the non-relativistic case, such a theory is 
  due to Zhang, Hansen and Kivelson (the ZHK model, \cite{ZHK, Zhang}). It gives a mean-field description to the fractional quantum Hall effect (FQHE). This theory is defined in terms of the action which, in the dimensionless form, 
  is given by \begin{align}\label{NR-CS-action'}
S(\Psi, A, A_0):=\int i \bar \Psi \p_{t A_0} \Psi - 
 |\n_{ A} \Psi |^2 - V(|\Psi|^2) +  \A' \cdot\curl \A'. 
\end{align}
where  $\Psi:\R_+\times \R^2\ra \C$ is the order parameter of composite bosons, $A =A' + A^{\rm ext} $ and $A_0=A_0' + A^{\rm ext}_0$ are the total vector and scalar potentials from $\R_+\times \R^2$ to $\R^2$ and $\R$, respectively,  $ \A' = (A', A'_0)$ and $ \A^{\rm ext} =(A^{\rm ext}, A^{\rm ext}_0)$ are the CS gauge field\footnote{For the origin of the CS action in the FQHE, see \cite{FrEtAl}} and external electro-magnetic potential, and, taking the electron charge to be $-1$,  $\n_{ A}:=\n +i  A$ and $ \p_{t A_0}:=\p_{t} - i A_0$.

Furthermore, 
 $V$ is a self-interaction energy (nonlinearity) given by the double well potential 
\begin{align} \label{V-dw}V(|\Psi|^2) = \frac{g}{2}(|\Psi|^2 - \mu)^2,\end{align}
where $g, \mu >0$. (A related model,  
was suggested by Jackiw and Pi. There, one takes $b = \curl A^{ext} = 0$, and \eqref{V-dw} with $\mu = 0$.\footnote{For other non-relativistic Chern-Simons theories, see \cite{BH1, BH2, Man}.} It follows from results of \cite{BH1, BH2} that having a constant magnetic field is equivalent to  including a background charge density, see \cite{BH1, BH2, Dun}.) 

The last term in \eqref{NR-CS-action} gives the Chern-Simons topological invariant. Indeed, note that $\A' \cdot\curl \A'=\al' \wedge d \al'$, where  $\al'$ is the one-form  $\al'= A_0' d x^0 +A_1' d x^1 +  A_2' d x^2$.
 
Let $v(s):=V'(s)$. Then the Euler-Lagrange equations for \eqref{NR-CS-action} state 
\DETAILS{We consider the non-relativistic Chern-Simons gauge theories described by the order parameter $\phi$,  and vector and scalar potentials $\vec a$ and $a_0$. Given  external potentials $\vec a^{\rm ext}$ and $a_0^{\rm ext}$, $(\phi, \vec a, a_0)$ satisfy the evolution equations} 
\begin{subequations} \label{CSeqs}\begin{align}\label{NR-CS-eq1}
& i\p_{t } \Psi= (-  \Delta_{ A} - A_0)\Psi + v(|\Psi|^2)\Psi,\\  
\label{NR-CS-eq2} &   *\p_t  A = \im (\bar \Psi \n_{ A} \Psi)-  \curl^* A_0,\\
\label{NR-CS-eq3} & 0=  \curl  A -  b + \frac12 |\Psi|^2, \end{align}\end{subequations}
with 
$-\Delta_{ A}:=\n_{ A}^*\n_{ A}$,  where the operator $\nabla_{A}^*$  is the $L^2$ adjoint of $\nabla_{A}$ (it is  given by $\nabla_{A}^* F = - \divv F + i A \cdot F$), 
 $* (v_1, v_2)=(-v_2, v_1)$ (so that $\curl=- \divv *$ and $\curl^*=- * \n$), $A_0\equiv A_0'$
  and $b=\curl A^{\rm ext} $. 
 (Since the electron charge is negative, the electric potential $A_0$ enters with the minus sign.)

We consider  \eqref{CSeqs} on the local Sobolev space, $H^2_{\rm loc}$, of order $2$. Note that
\begin{itemize}
	\item $B = \curl A$ and $E=\nabla A_0 - \partial_t A$ are the total magnetic and electric fields \footnote{$\nabla A_0$ enters with an unusual sign, since $A_0$ is the emlectric potential times the electron charge.};
	\item  $\rho = |\Psi|^2$  and $J = \im(\bar{\Psi} \nabla_A \Psi)$ are the charge and current densities of `composite bosons' (see e.g. \cite{ReadRez, Tong}). 
 \end{itemize}

 \eqref{CSeqs} will be called the {\it Zhang-Hanson-Kivelson (ZHK) equations}.  
We concentrate on the ZHK model, 
but with a more general nonlinearity, $V(s)$, namely, one satisfying 
\begin{align} \label{V-cond}V\in C^2,\  V(s) \geq 0\  \quad \forall s \geq 0,\  \quad V'(0) < 0,\ V''(0) > 0.\end{align}

As in the standard ZHK model, we also assume that the external magnetic field is a constant  and that there is no external electric field, 
\begin{align} \label{Aext-cond} 
 \curl A^{ext} = b >0\ \text{  and }\  A^{ext}_0 = 0.\end{align} 
(The case $b<0$ is considered similarly.) 
 In this case, Eqs.  \eqref{CSeqs} are translational invariant. 

We are interested in the {\it ground states} of \eqref{CSeqs}, i.e.  static solutions with the lowest energy locally. 
 First we examine the most symmetric solutions. Since  Eqs.  \eqref{CSeqs} are invariant under gauge transformations and translations (see Appendix \ref{sec:CSmodels}), our first candidate for the ground state is a homogeneous, or gauge-translation invariant\footnote{
Because of gauge invariance, a symmetry is defined up to gauge transformations.}
 solution, satisfying \eqref{Aext-cond}: 
 \begin{align}\label{norm-sol-orig} v_*^b:= (\Psi=0,\  
   A= A^{\rm ext}=A^{b},  A_0=0), 
  \end{align}  
 where $A^{b}$ denotes a vector potential 
  with $\curl A^{b}= b$. 
   Since $\Psi=0$, we call $v_*^b$ the {\it normal} solution.
  
Let $\chi:= -v(0)=-V'(0)>0$.  
One can show (see \cite{S2}) that the gauge-translationally invariant (normal) solution $v_*^b$ is linearly stable
  for  $b> \chi$ and unstable for $b<  \chi$. 
   (See e.g. \cite{S, S2} for the definition of stability.)

  In this paper, we show that 
  at  $b=\chi$ new solutions with lower energy per unit area emerge from the normal one. These solutions break the translational symmetry to a lattice one\footnote{More precisely, they are invariant, up to gauge transformations under lattice translations.} and have topological degree one per lattice cell. 
We also show that their energy per unit area attains minimum at the hexagonal lattice.

We call the solutions mentioned above the {\it vortex lattice states}. They are analogous to solutions in superconductivity discovered by A. A. Abrikosov, who 
 defined them as states whose all associated physical quantities are periodic w.r.to a lattice in $\R^2$. 
One can show (see e.g. \cite{S}) that $(\Psi, A, A_0)$ is an Abrikosov lattice state, for some  lattice $\lat$, iff it satisfies 
  \begin{align}            \label{gaugeperiod-gen}  &\Psi(x + s) = e^{i \chi_s(x)}\Psi(x),\  A(x + s) =  A(x)+ \n \chi_s(x),\ A_0(x + s) = A_0(x), 
  \end{align}
for some differentiable $\chi_s(x)\in \R$ and any $s  \in \lat$. One can show also that $(\Phi, A, A_0)$ satisfying \eqref{gaugeperiod-gen} is gauge equivalent to one satisfying \eqref{gaugeperiod-gen}  with 
   \begin{align} \label{chis-spec} 
   \chi_s(x):=\frac{ b'}{2}x\cdot J s+ c_s,\ c_s \text{ satisfies }\ c_{s+t} - c_s - c_t - \frac{1}{2} b' s \wedge t \in 2\pi\Z, \end{align}
where $ b' = \frac{2\pi n}{|\lat|}$, with $n\in \Z$ and $|\lat|$, the area of any fundamental (elementary) cell of $\lat$ (all elementary cells have the same area). Note that (a) the vector potential $A^{b'}$ satisfies \eqref{gaugeperiod-gen}-\eqref{chis-spec}  
and (b) for any $A$ satisfying the middle relation in \eqref{gaugeperiod-gen} with \eqref{chis-spec}, 
   the magnetic flux through an arbitrary fundamental cell, $\Om^{\lat}$, of $\lat$ quantized as 
\begin{align}  \label{flux-quant}\frac{1}{2\pi} \int_{\Om^{\lat}} \curl  A = n\in \Z.\end{align}

We will call   $v:=(\Psi, A, A_0)\in C(\R^2, \C\times \R^2\times\R)$ satisfying \eqref{gaugeperiod-gen}-\eqref{chis-spec}  an $\lat$-{\it equivariant} state, or, because of \eqref{flux-quant}, a {\it vortex lattice}. To avoid cumbersome expressions we will use these terms only {\it for $n=1$}.

Recall the notation $\chi:= -V'(0)\equiv - v(0)>0$ and let $g:= V''(0)\equiv  v'(0)>0$.  Denote by $\hat\lat$ lattices with fundamental cells of area $2\pi$ ({\it normalized} lattices).  Our first result is the following 
  \begin{theorem}\label{thm:ALexist} 
Assume \eqref{V-cond} and \eqref{Aext-cond}. 
 Then for any 
  normalized lattice $\hat\lat$, any $\chi >0$, and any $b > 0$, s.t. $0 < (\chi-b)/(g-1)\ll 1$, 
   there is an $r=r^{b, \hat\lat}>0$ s.t.
    
(a)  \eqref{CSeqs} has a non-trivial, $r\hat\lat$-equivariant,  static  $H^2_{\rm loc}-$solution, $v^{b, \hat\lat}$, unique among  $r\hat\lat$-equivariant $H^2_{\rm loc}-$triples in a vicinity of \eqref{norm-sol-orig}; 

(b)	this solution bifurcates from the normal branch of solutions, \eqref{norm-sol-orig}, at  $b= \chi$;

(c) 
 for $g>1$, 
the energy of these solutions per unit area is smaller than the energy of the homogeneous state \eqref{norm-sol-orig}.   \end{theorem}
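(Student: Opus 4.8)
The plan is to realize Theorem~\ref{thm:ALexist} as a bifurcation off the normal branch \eqref{norm-sol-orig}, with $\chi-b$ playing the role of bifurcation parameter, via a Lyapunov--Schmidt reduction (equivalently, the Crandall--Rabinowitz theorem on bifurcation from a simple eigenvalue). First I would pass to the static version of \eqref{CSeqs} and reduce it to a single equation for $\Psi$. Restricting to $r\hat\lat$-equivariant triples with one flux quantum per cell, equation \eqref{NR-CS-eq3} together with the flux-quantization constraint \eqref{flux-quant} pins down the cell scale $r=r^{b,\hat\lat}$ in terms of $b$ and $\langle|\Psi|^{2}\rangle$ and, after fixing the gauge and the harmonic part, determines $A=A[\Psi]$ with $A-A^{b'}=O(|\Psi|^{2})$ and $b'=1/r^{2}$; multiplying \eqref{NR-CS-eq1} by $\bar\Psi$ and taking imaginary parts shows $\divv\,\im(\bar\Psi\nabla_A\Psi)=0$ for every solution, so \eqref{NR-CS-eq2} is solvable and yields $A_0=A_0[\Psi]=O(|\Psi|^{2})$. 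Substituting back into \eqref{NR-CS-eq1} produces an equation $F(\Psi,b):=(-\Delta_{A[\Psi]}-A_0[\Psi])\Psi+v(|\Psi|^{2})\Psi=0$ on the Hilbert space $X$ of $r\hat\lat$-equivariant $H^2_{\rm loc}$ functions, with $F(0,b)=0$ (the normal branch), which is what I would solve bifurcationally.

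Next I would linearize. Since $v(0)=-\chi$ and $b'=b$ when $\Psi=0$, $L_b:=\partial_\Psi F(0,b)=-\Delta_{A^{b}}-\chi$. On $X$, $-\Delta_{A^{b}}$ is the Landau Hamiltonian with constant field $b$ and one flux quantum per cell; its lowest eigenvalue equals $b$ and the corresponding eigenspace --- the lowest Landau level --- is \emph{one-dimensional}, spanned by a theta-function-type state $\psio$. Hence $\ker L_b\neq\{0\}$ precisely at $b=\chi$, where it is one-dimensional and the eigenvalue crosses $0$ transversally in $b$ --- the bifurcation point, in accordance with the stability/instability dichotomy of \cite{S2} (after rescaling $r\hat\lat$ to a fixed lattice so that $L_b$ genuinely depends on $b$ alone). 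Writing $\Psi=t\psio+w$ with $w\perp\psio$, the equation $P^\perp F=0$ is solved for $w=w(t,b)=O(t^{3})$ by the implicit function theorem, using invertibility of $L_b$ on $\psio^\perp$ near $b=\chi$ and the fact that the whole nonlinearity of $F$ --- the cubic term $g|\Psi|^{2}\Psi$, the $O(|\Psi|^{2})$ back-reactions $A[\Psi]-A^{b'}$ and $A_0[\Psi]$, and the $O(|\Psi|^{2})$ shift of $b'$ --- is of order $\|\Psi\|^{3}$. The scalar reduced equation $\langle\psio,F(t\psio+w(t,b),b)\rangle=0$ then takes the form
\[
t\big[(b-\chi)\|\psio\|^{2}+c(\hat\lat,g)\,t^{2}+O(t^{4})\big]=0 ,
\]
where, up to higher-order corrections, the cubic coefficient is $c(\hat\lat,g)=(g-1)\,\kappa(\hat\lat)$ with $\kappa(\hat\lat)>0$ a positive multiple of the Abrikosov-type constant of $\hat\lat$. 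Dividing by $t$ and solving gives $t^{2}=(\chi-b)\|\psio\|^{2}/c(\hat\lat,g)+O((\chi-b)^{2})$, which possesses a unique small positive root $t=t^{b,\hat\lat}$ exactly when $0<(\chi-b)/(g-1)\ll1$; undoing the rescaling yields the solution $v^{b,\hat\lat}$ of part~(a), unique in a neighbourhood of \eqref{norm-sol-orig}, and $t^{b,\hat\lat}\to0$ as $b\to\chi$ gives part~(b).

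For part~(c) I would expand the energy per unit area $\cE$ about the normal state. Since \eqref{norm-sol-orig} is a critical point of $\cE$ and the $(A,A_0)$-deviations along the branch are $O(|\Psi|^{2})$, and since the static equations are the Euler--Lagrange equations of $\cE$ (the constraints \eqref{NR-CS-eq2}--\eqref{NR-CS-eq3} among them, so that the Lyapunov--Schmidt reduction is itself variational), $\cE$ along the branch has the form $\cE(v_*^b)+\tfrac12\,t^{2}(b-\chi)\|\psio\|^{2}+\tfrac14\,c(\hat\lat,g)\,t^{4}+O(t^{6})$, with $c(\hat\lat,g)$ the cubic coefficient of the reduced equation. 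Using the reduced equation $(b-\chi)\|\psio\|^{2}=-c(\hat\lat,g)\,t^{2}+O(t^{4})$ to eliminate $b-\chi$,
\[
\cE(v^{b,\hat\lat})-\cE(v_*^b)=-\tfrac14\,c(\hat\lat,g)\,t^{4}+O(t^{6})=-\frac{(\chi-b)^{2}\|\psio\|^{4}}{4\,c(\hat\lat,g)}+O\big((\chi-b)^{3}\big) ,
\]
which is negative for $g>1$, i.e. when $c(\hat\lat,g)>0$; this gives~(c).

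The step I expect to be the main obstacle is the evaluation of $c(\hat\lat,g)$, and in particular showing that the constant offsetting $g$ is exactly $1$. This forces one to solve \eqref{NR-CS-eq3} for the induced magnetic-potential correction and --- the genuinely Chern--Simons ingredient, absent in Ginzburg--Landau theory --- to solve \eqref{NR-CS-eq2} for $A_0$ in terms of the current $J=\im(\bar\Psi\nabla_A\Psi)$ of a lowest-Landau-level state, and then to show that the resulting combination of quartic quantities ($\langle|\psio|^{4}\rangle$ together with certain nonlocal quadratic forms in $|\psio|^{2}$) collapses, via the $\bar\partial$-holomorphicity of $\psio$ and the attendant pointwise identities for lowest-Landau-level states, to $(g-1)$ times a positive $\hat\lat$-dependent constant. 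A subsidiary technicality is the bookkeeping of the $b$-dependent rescaling that turns $r\hat\lat$ into a fixed lattice, which is needed for the simple-eigenvalue picture to apply cleanly.
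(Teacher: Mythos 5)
Your overall architecture (Lyapunov--Schmidt off the normal branch, simple lowest-Landau-level eigenvalue of $-\Delta_{A^{b}}$ crossing $\chi$ at $b=\chi$, lattice scale pinned by flux quantization and the cell average of \eqref{NR-CS-eq3}, cubic coefficient proportional to $(g-1)$ times an Abrikosov constant, variational expansion of the energy for part (c)) is essentially the paper's proof, and you correctly locate the Chern--Simons-specific input: the identity $\im(\bar\psio\nabla_{a^n}\psio)=\tfrac12\curl^*|\psio|^2$ for lowest-Landau-level states, which is what converts the $A_0$ back-reaction into the $-\tfrac12$ that combines with $g-\tfrac12$ to give $g-1$ (Lemma \ref{lem:current-repr} and Eq.\ \eqref{lam'-comp3-2} in the paper).

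There is, however, one genuine gap in your reduction to a single equation for $\Psi$: the treatment of the harmonic part of the vector potential and the solvability of \eqref{NR-CS-eq2}. On the torus, a divergence-free periodic vector field decomposes as a constant plus $\curl^*$ of a periodic function; $\divv J=0$ therefore does \emph{not} imply that $J=\curl^* A_0$ is solvable --- one also needs $\lan J\ran=0$, and this average is exactly the obstruction. You propose to ``fix the gauge and the harmonic part'' of $A$ a priori and then assert that \eqref{NR-CS-eq2} is solvable because $\divv J=0$; these two moves are incompatible, because the condition $\lan J\ran=0$ is precisely the equation that determines the harmonic (constant) part $\lan\al\ran$ of $A-A^{b'}$. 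In the paper this shows up as the constant-vector-field factor $\{const\}$ in the kernel \eqref{null-dF} and the second bifurcation equation $\gamma_2=0$ in \eqref{gami-eqs}: one must prove that the matrix $\one+\lan T\ran$ in \eqref{tilde-gam2-exp} is invertible (it is positive definite, \eqref{Tinv}) in order to solve for $\mu_s=\lan\al\ran/s=O(s)$, giving a nonzero harmonic part $\lan\al_s\ran=O(s^2)$ in general. Note that $\lan J\ran=0$ holds automatically only at leading order (by the $\curl^*$ identity above); at the next order it is a nontrivial two-dimensional constraint. Your scheme can be repaired by keeping the harmonic part of $A$ as an unknown alongside $t$ and adding $\lan J\ran=0$ to the reduced system --- which is exactly what the paper does --- but as written the step ``\eqref{NR-CS-eq2} is solvable, yielding $A_0=A_0[\Psi]$'' fails, and with it the claim that the whole problem collapses to one scalar bifurcation equation.
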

This theorem is proven in Sections \ref{sec:bifurc-dimK=1} and \ref{sec:b-s-relat}. More precise properties of the solutions $v^{\hat\lat, b}$ are described in Theorem \ref{thm:bif-thm-Kdim1} and Corollary \ref{cor:ub}.

The time-translational and gauge invariance of equations \eqref{CSeqs} imply that the energy, \begin{align}\label{en'}
E_{ \Oml}^{b}(\Psi, A, A_0):=\int_{ \Oml}  |\n_{ A} \Psi |^2 + V(|\Psi|^2) - 2 A_0( \frac{1}{2}|\Psi|^2 -  b +   \curl  A) 
\end{align}
and the charge, $\int_{ \Oml} |\Psi|^2$,
are conserved and for $(\Psi, A, A_0)$ an $\lat$-{\it equivariant} solution to \eqref{CSeqs}. (In fact, the same is true for any bounded domain $Q\subset \R^2$ with appropriate boundary conditions, Noether's theorem).  Note also that \eqref{en'} can be obtained in the canonical way from action \eqref{NR-CS-action'}.
\DETAILS{on a set constrained by \eqref{NR-CS-eq3'}, with 
 $A_0$ being a Lagrange multiplier.}
 We see now that the r.h.s. of equations \eqref{CSeqs} is the $L^2-$gradient of this functional w.r.to $(\Psi, A, A_0)$.

Our second result deals with the energetics of the bifurcating solutions 
for various lattices. 
\begin{theorem}[Lattice energy]  \label{thm:eng.min} 	
	Assume \eqref{V-cond} and \eqref{Aext-cond} and let $0 < \chi-b\ll g-1$. 
	  Then, as $b\ra \chi$, the solutions, $v^{b, \hat\lat}$, found in the previous theorem, with the lowest energy per lattice cell, correspond to the 
	 lattices approaching a 
	  hexagonal lattice. (The topology on the set of the normalized lattices is given in terms of the complex parametrization  described at the beginning of Section \ref{sec:b-s-relat}.)	
	\end{theorem}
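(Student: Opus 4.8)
The plan is to reduce the assertion to the classical variational characterization of the hexagonal lattice by the Abrikosov functional, after establishing an asymptotic formula for the energy of $v^{b,\hat\lat}$ that is uniform in $\hat\lat$. Write $\eps:=\chi-b>0$ for the small bifurcation parameter. By Theorem~\ref{thm:ALexist}, and in the sharper form of Theorem~\ref{thm:bif-thm-Kdim1}, the bifurcating state $v^{b,\hat\lat}=(\Psi,A,A_0)$ satisfies
\begin{align}\label{eq:psi-leading}
\Psi=\sqrt{\eps}\,c(\hat\lat)\,\psi_0+O(\eps^{3/2}),\qquad A=A^{b'}+O(\eps),\qquad A_0=O(\eps),
\end{align}
where $\psi_0$ spans the one-dimensional kernel of the linearization at the normal branch \eqref{norm-sol-orig} --- an $r\hat\lat$-equivariant lowest-Landau-level function, which we normalize so that the average of $|\psi_0|^2$ over a fundamental cell is $1$ --- and $A$, $A_0$ are reconstructed from $\Psi$ by solving the algebraic constraint \eqref{NR-CS-eq3} and (after gauge fixing) the elliptic equation \eqref{NR-CS-eq2}. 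As shown in Section~\ref{sec:b-s-relat}, the cell area is $|\Oml|=2\pi/\chi+O(\eps)$, independent of $\hat\lat$ to leading order, so energy per cell and energy per unit area differ by a lattice-independent factor up to $O(\eps^3)$ corrections; it therefore suffices to analyze the energy density $|\Oml|^{-1}E^b_{\Oml}$ from \eqref{en'}.

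The first step is to insert \eqref{eq:psi-leading} into \eqref{en'} and expand. The part quadratic in $\Psi$ gives the normal-state density $V(0)$ plus a term of order $\eps$ which, because $\psi_0$ is normalized and lies in the lowest Landau level, is lattice-independent. The lattice shape therefore first enters through the quartic contributions, which come from three sources: the self-interaction $\tfrac12 V''(0)\langle|\Psi|^4\rangle$, the magnetic back-reaction obtained by substituting $\curl A=b-\tfrac12|\Psi|^2$ into $|\nabla_A\Psi|^2$, and the elimination of the Lagrange multiplier $A_0$ through \eqref{NR-CS-eq2}. Each of these is, to leading order, a universal constant times $\eps^2\langle|\psi_0|^4\rangle$; combined with the amplitude equation fixing $c(\hat\lat)^2$ (which is again of the form (quartic coefficient)$\,\cdot\,c^2=\,$const), one is led to an expansion
\begin{align}\label{eq:en-exp}
\frac{1}{|\Oml|}E^b_{\Oml}\big(v^{b,\hat\lat}\big)=V(0)-\frac{a\,\eps^2}{\beta(\hat\lat)}+O(\eps^3),\qquad a>0,
\end{align}
where
\begin{align}\label{eq:abr}
\beta(\hat\lat):=\frac{\langle|\psi_0|^4\rangle}{\langle|\psi_0|^2\rangle^2}\,\ge 1
\end{align}
is the Abrikosov constant of $\hat\lat$. (If the Chern--Simons back-reaction modifies the exact $\beta$-dependence of the $\eps^2$-term, it is still a fixed, strictly decreasing, positive function of $\beta(\hat\lat)$; the precise function, and the positivity of the constants, are what Corollary~\ref{cor:ub} records.) Verifying this, and in particular that lowering $\beta$ genuinely lowers the energy, is the step where the Chern--Simons structure --- magnetic field slaved to $|\Psi|^2$, $A_0$ a Lagrange multiplier with no independent Maxwell energy --- must be tracked carefully, since it alters the value (though not, we expect, the sign of the relevant coefficient) relative to the Ginzburg--Landau case.

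Next I would invoke the classical fact that $\hat\lat\mapsto\beta(\hat\lat)$, as a function on the moduli space of normalized lattices parametrized as at the start of Section~\ref{sec:b-s-relat} (by $\tau$ in the upper half-plane modulo $SL(2,\Z)$, with $\psi_0$ a Jacobi theta function), attains its unique, non-degenerate minimum at the hexagonal point $\tau=e^{i\pi/3}$, and that $\beta(\hat\lat)\to\infty$ at the cusp (cell degeneration). Given \eqref{eq:en-exp}, the only $\hat\lat$-dependence of the energy at the leading order $\eps^2$ is through $-a/\beta(\hat\lat)$, which is minimized exactly when $\beta(\hat\lat)$ is; hence any lattice minimizing the energy per cell minimizes $\beta$ up to an error $o(1)$ as $\eps\to0$, i.e. as $b\to\chi$, and continuity together with the strict non-degenerate minimum forces the minimizing lattices to converge to the hexagonal one. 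This gives the theorem. The main obstacle is making \eqref{eq:en-exp} uniform in $\hat\lat$ over the non-compact moduli space: the Lyapunov--Schmidt reduction of Section~\ref{sec:bifurc-dimK=1} producing $v^{b,\hat\lat}$, and the remainder $O(\eps^3)$, involve Sobolev constants on $\Oml$, the norm of the inverse of the linearized operator on $\psi_0^\perp$, and $L^p$-bounds on $\psi_0$, all of which degenerate as $\tau\to i\infty$. I would proceed in two stages: (i) a competitor argument --- compare the energy of any candidate minimizer with that of the hexagonal lattice through \eqref{eq:en-exp} --- together with the coercivity $\beta(\hat\lat)\to\infty$ at the cusp, to confine the minimizer, for $\eps$ small, to a fixed compact region of moduli space containing the hexagonal point; (ii) on that compact region all relevant constants are uniformly bounded, \eqref{eq:en-exp} holds with a uniform $O(\eps^3)$, and the conclusion of stage (i) sharpens to convergence to the hexagonal lattice. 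A secondary point is that the cell area does depend on $\hat\lat$ at order $\eps$ through $\langle|\Psi|^2\rangle$; since that enters the per-cell energy only at order $\eps^3$, it does not interfere with the $\beta$-dependent term.
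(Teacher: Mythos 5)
Your proposal follows essentially the same route as the paper: one first establishes the expansion $E_b(\hat\lat)=V(0)-\tfrac12\frac{g-1}{\beta(\hat\lat)}\mu^2+O(\mu^3)$ (the paper's Theorem \ref{thm:eng-asymp}, obtained exactly as you describe by substituting the constraint \eqref{NR-CS-eq3} and the leading-order profile $s\psi_0$ into the energy and tracking the quartic terms), and then invokes the Aftalion--Blanc--Nier result that $\beta(\tau)$ is minimized at $\tau=e^{i\pi/3}$. Your extra two-stage discussion of uniformity of the remainder over the non-compact moduli space (coercivity of $\beta$ at the cusp, then uniform constants on a compact region) is a refinement the paper leaves implicit --- its Theorem \ref{thm:eng.min1} is stated only for non-degenerate critical points in a fixed neighbourhood --- but it does not change the method.
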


For our next result, we need more definitions. We define  the Abrikosov function 
\begin{equation}\label{beta'}
	\beta(\lat) \equiv \beta(\hat\lat) = \frac{\avg{|\psio|^4}}{{\avg{|\psio|^2}}^2}
\end{equation}
where $\psio$ is the unique solution to the equation 
$\p_{A}\psio = 0$, where $\p_{A}  := \frac12 ((\nabla_{A})_1 - i(\nabla_{A })_2)$ is the complexified covariant derivative (see Appendix \ref{sec:self-dual} and e.g. \cite{S, S2}), satisfying  the first relation in \eqref{gaugeperiod-gen} with \eqref{chis-spec} and  $ b' = \frac{2\pi }{|\lat|}$, 
 and $\avg{f}:=\frac{1}{|\Om^{\lat}|} \int_{\Om^{\lat}} f$ ($|\Om^{\lat}|\equiv |\lat|$).  

Next, let  $ \hat\Om$ be a fundamental cell of $ \hat\lat$. For the solutions $v^{b, \hat\lat}\equiv (\Psi^{b}, A^{b}, A_0^{b})$, found in Theorem \ref{thm:ALexist}, we define  the energy (see \eqref{en'})
\begin{align}  \label{Eb-lat} 	& E_b(\hat\lat)= (r^{b, \hat\lat})^{-1}E_{r^{b, \hat\lat} \hat\Om}^{b}(v^{b, \hat\lat}).\end{align}
 The last statement of Theorem \ref{thm:ALexist} and Theorem \ref{thm:eng.min} will follow from the following result on the energy asymptotic.
\begin{theorem}[Energy asymptotic] \label{thm:eng-asymp}
	Assume \eqref{V-cond}, \eqref{Aext-cond} and  $(\chi-b)( g-1)>0$. 
	Let $\mu:=(\chi-b)/( g-1)$. Then for  $|\chi-b|\ll |g-1|$, 
\DETAILS	{\begin{align}\label{energy-resc''}
&E_b(\tau)  =V(0) -\frac{1}{2 } \frac{(1-g)\beta(\tau)}{ [g\beta(\tau)]^{2}}\mu^2 
+O(\mu^3) ,
\end{align}}	
\begin{align}  \label{Eb-lat-asymp} 
	& E_b(\hat\lat)	= V(0) - \frac12 \frac{g-1}{\beta(\hat\lat)}\mu^2 
  + O(\mu^3).\end{align}
\end{theorem}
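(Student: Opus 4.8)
\textbf{Proof proposal for Theorem \ref{thm:eng-asymp}.}

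The plan is to carry out a Lyapunov--Schmidt reduction around the normal solution $v_*^b$ and track the energy to second order in the bifurcation parameter. Fix a normalized lattice $\hat\lat$ and work on the space of $r\hat\lat$-equivariant triples with $n=1$, where $r=r^{b,\hat\lat}$ is chosen so that the area condition $\frac{2\pi}{|r\hat\lat|}=b$ holds (this pins down $r$ in terms of $b$ and gives a magnetic flux exactly $1$ per cell). Linearizing \eqref{CSeqs} at $v_*^b$, the Hessian of the energy \eqref{en'} restricted to the $\Psi$-component is, modulo the constraint \eqref{NR-CS-eq3}, essentially $-\Delta_{A^b} - b - v(0) = -\Delta_{A^b} - b + \chi$, whose kernel on the equivariant space at $b=\chi$ is the lowest-Landau-level space spanned by the self-dual solution $\psi_0$ of $\partial_{A^b}\psi_0=0$ (one-dimensional since the flux is $1$). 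This is exactly the kernel $\mathcal K$ referenced in Theorem \ref{thm:bif-thm-Kdim1}. So I would decompose $\Psi = s(\psi_0 + \text{(higher order)})$ with $s$ a small complex amplitude, and correspondingly expand $A = A^b + \text{(correction)}$, $A_0 = \text{(correction)}$, solving \eqref{NR-CS-eq2}--\eqref{NR-CS-eq3} for the corrections as functions of $s$ by the implicit function theorem.

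The key computation is then to substitute the branch $(\Psi(s), A(s), A_0(s))$ into the energy per unit area $\frac{1}{|r\hat\lat|}E^b_{r\hat\Om}$ and expand in $|s|$. At order $|s|^2$ one gets a coefficient proportional to the linearized operator evaluated on $\psi_0$, which vanishes at $b=\chi$ and near it contributes the linear-in-$(\chi-b)$ term $(\chi-b)\langle|\psi_0|^2\rangle$ (up to normalization); at order $|s|^4$ the nonlinearity $V$ and the back-reaction of the gauge field through the constraint \eqref{NR-CS-eq3} (which forces $\curl A - b = -\frac12|\Psi|^2 = -\frac12|s|^2|\psi_0|^2 + \dots$) both contribute, producing a term of the form $\frac12(g-1)|s|^4\langle|\psi_0|^4\rangle$ — here the $g$ comes from $\frac12 V''(0)=\frac12 g$ and the $-1$ from the magnetic self-energy $\int(\curl A - b)^2$-type contribution, exactly as in the Abrikosov analysis of Ginzburg--Landau (cf. \cite{S,S2}). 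Minimizing the resulting quartic $a(b)|s|^2 + c(\hat\lat)|s|^4$ in $|s|^2$ gives $|s|^2 \sim -a(b)/(2c(\hat\lat)) = \frac{(\chi-b)\langle|\psi_0|^2\rangle}{(g-1)\langle|\psi_0|^4\rangle}$, which is positive precisely when $(\chi-b)(g-1)>0$, consistent with the hypothesis, and the minimal energy equals $V(0) - a(b)^2/(4c(\hat\lat))$. Feeding in the definitions $\mu = (\chi-b)/(g-1)$ and $\beta(\hat\lat)=\langle|\psi_0|^4\rangle/\langle|\psi_0|^2\rangle^2$, and using that on the normalized cell $\langle|\psi_0|^2\rangle$ can be normalized to a fixed constant, this reduces to $E_b(\hat\lat) = V(0) - \frac12\frac{g-1}{\beta(\hat\lat)}\mu^2 + O(\mu^3)$, which is \eqref{Eb-lat-asymp}.

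For the error control, I would be careful that the implicit-function corrections to $\Psi$, $A$, $A_0$ are $O(|s|^2)=O(\mu)$ in the appropriate $H^2$-equivariant norms, so that their contribution to the energy is $O(|s|^6)=O(\mu^3)$, uniformly for $\hat\lat$ ranging over a compact family of normalized lattices (using continuity of $\psi_0$ and $\beta(\hat\lat)$ in the complex parameter $\tau$ of Section \ref{sec:b-s-relat}); the rescaling factor $(r^{b,\hat\lat})^{-1}$ in \eqref{Eb-lat} is what converts energy-per-cell into energy-per-unit-area and is smooth in $(b,\hat\lat)$, so it does not affect the leading asymptotics. Once \eqref{Eb-lat-asymp} is established, Theorem \ref{thm:eng.min} follows immediately: for $g>1$ the coefficient $-\frac12(g-1)\mu^2<0$, so minimizing $E_b(\hat\lat)$ over normalized lattices is equivalent, to leading order, to maximizing $1/\beta(\hat\lat)$, i.e. minimizing the Abrikosov function $\beta(\hat\lat)$, which is classically known (and used in \cite{S,S2}) to be minimized at the hexagonal lattice; and part (c) of Theorem \ref{thm:ALexist} is the sign statement $E_b(\hat\lat)<V(0)$ for $g>1$.

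\textbf{Main obstacle.} The delicate point is the precise bookkeeping of the quartic coefficient $c(\hat\lat)$: one must correctly account for how the constraint \eqref{NR-CS-eq3} and the Chern-Simons coupling to $A_0$ feed back into the energy at order $|s|^4$, to confirm that the net coefficient is $\frac12(g-1)\langle|\psi_0|^4\rangle$ with the factor $(g-1)$ rather than, say, $g$ or $(g+1)$ — in particular, that the Chern-Simons/electric sector contributes the same magnetic-self-energy term ``$-1$'' as in the Ginzburg--Landau case despite the different field equations \eqref{NR-CS-eq2}. This is where the structure of the ZHK equations genuinely enters, and it is the step I would write out most carefully.
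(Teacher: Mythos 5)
Your overall architecture (Lyapunov--Schmidt around the normal state, expand the energy per unit area on the bifurcating branch, read off the quadratic and quartic coefficients) is the same as the paper's, and your final coefficients are correct: the quartic coefficient is indeed $\tfrac12(g-1)\langle|\psi_0|^4\rangle$, with the ``$-1$'' coming from substituting the Gauss constraint $\curl A=b-\tfrac12|\Psi|^2$ into the Weitzenb\"ock identity $-\Delta_A=4\p_A^*\p_A+\curl A$ (there is no $\int(\curl A-b)^2$ Maxwell term here, so the Ginzburg--Landau ``$+1$'' in the denominator is absent -- your own flagged worry, correctly resolved). Where you genuinely diverge from the paper is the last step: you determine $|s|^2$ by \emph{minimizing} the reduced quartic $a|s|^2+c|s|^4$, whereas the paper determines the relation $b_s=\chi+b's^2+O(s^4)$ by pairing the projected Euler--Lagrange equation with $\psi_0$ (Proposition \ref{prop:lam-b-expan-alt}) and only then substitutes the branch into the energy representation \eqref{en-repr-resc}. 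Your route is shorter and is the one used for Ginzburg--Landau in \cite{TS2}, but here it needs justification you do not supply: one must check that the critical-point equation of the reduced energy \emph{per unit area} in $|s|^2$ coincides with the bifurcation equation, which is not automatic because the cell area is itself an unknown slaved to $|s|^2$ through flux quantization, and because $A_0$ enters \eqref{en'} as a Lagrange multiplier. (Also note the sign slip: with your stated $a=(\chi-b)\langle|\psi_0|^2\rangle$ the minimizer $-a/(2c)$ has the wrong sign; the quadratic coefficient is $(b-\chi)\langle|\psi_0|^2\rangle$.)

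The one step that would actually fail as written is your pinning of the cell size by $\frac{2\pi}{|r\hat\lat|}=b$. The flux quantization \eqref{flux-quant} applies to the \emph{total} field $\curl A=b-\tfrac12|\Psi|^2$, so $2\pi=\int_{\Om}\curl A=b|\Om|-\tfrac12\int_\Om|\Psi|^2$; imposing $b|\Om|=2\pi$ forces $\int_\Om|\Psi|^2=0$, i.e.\ the trivial solution. The cell area must be treated as an unknown, $\frac{2\pi}{|\Om|}=b-\tfrac12\langle|\Psi|^2\rangle$, which is exactly why the paper carries $\lam=\frac nb+\theta$ with $\theta'=\frac{1}{2\chi}\neq0$ as a fourth unknown. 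This is not cosmetic: the $O(|s|^2)$ shift of the cell area enters the energy per unit area at order $|s|^4=O(\mu^2)$ -- in the paper's bookkeeping it is the term $2(\frac n\lam-b)(\chi-b)$ in \eqref{en-repr-resc}, which contributes $-\frac{g-1}{\beta}\mu^2$ and flips the net coefficient from $+\tfrac12\frac{g-1}{\beta}$ to $-\tfrac12\frac{g-1}{\beta}$. Your computation survives only because you silently abandon the pinning and treat $\langle|\Psi|^2\rangle=|s|^2$ as a free parameter in the term $(b-\chi)\langle|\Psi|^2\rangle$; to make the argument consistent you should drop the condition $\frac{2\pi}{|r\hat\lat|}=b$ and let $r$ depend on $s$ as well.
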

 
\begin{proof}[Proof of 
Theorem \ref{thm:ALexist}(c)] 
 Expression \eqref{Eb-lat-asymp} implies Theorem \ref{thm:ALexist}(c). 
  \end{proof}

  Theorems \ref{thm:ALexist},  \ref{thm:eng.min}  and  \ref{thm:eng-asymp} give the first result on  existence of  vortex lattices for the  ZHK equations and on their local energies. 
   
\paragraph{Related results.}
The existence of 
 vortex lattices for the self-dual, non-relativistic JP 
 model (i.e. for a single, albeit important, value of  the coupling constant  $g$) 
 was obtained in \cite{ACSvH, Ol}. 
  (The existence of topological vortices 
was shown in \cite{GuoLi}.)

\paragraph{Remarks.} 1)   We show in Appendix \ref{sec:self-dual} that  for self-interaction potential \eqref{V-dw},
 the ZHK equations for $\lat$-equivariant fields are in the self-dual  regime iff $g=1$.

2)  
Theorem \ref{thm:eng-asymp} shows that the local energy is (counterintuitively) raised if $g<1$ and lowered if $g>1$. This, suggests that there are two types of materials here similarly to type I and type II superconductors. 
 
3) For \eqref{V-dw} and $b =\mu$, there is the additional homogeneous solution  
  \[\Psi=\sqrt \mu,\ A =0, 
  \] 
which corresponds to the total  condensation. In this solution, the CS gauge field, $A'$, opposes and cancels the external field, $A^{\rm ext}$. 
For \eqref{V-dw}, with $1/g = 2\pi (2k+1), k\in \Z$, these are 
 the uniform Laughlin states of feeling fractions $\nu=1/(2k-1)$.  

4) \eqref{flux-quant} is a special case of the Chern-Weil correspondence for compact Riemann surfaces (see e.g. \cite{CERS} and the references therein). 

 5) The magnetic flux quantization, \eqref{flux-quant}, and  Eq \eqref{NR-CS-eq3} 
   imply the charge quantization 
\begin{align}\label{rho-quant}
\frac{1}{2}\int_{ \Oml} \rho - b=\int_{ \Oml}  \curl A'  \in 2\pi  \Z.
\end{align}

6) Clearly, $v^b_\om  := (\Psi=0, A_0=\om, A=A^{b}) = (0, \om, A^{b})$, for any $\om\in \R$, is also a gauge-translationally invariant solution. Moreover, for different $\om$, $v^b_\om$ are gauge inequivalent. From physics viewpoint, $v^b_\om$ is the standard normal solution  for the nonlinearity $V'(|\Psi|^2)+\om\ (=g |\Psi|^2-\chi+\om)$ instead of $V'(|\Psi|^2)\ (=g |\Psi|^2-\chi)$, i.e.  for a different chemical potential. 

7) Under constraint \eqref{NR-CS-eq3}, 
energy functional \eqref{en'} becomes

  \begin{align}\label{en}
E_Q (\Psi,  A):=\int_Q  |\n_{ A} \Psi |^2 + V(|\Psi|^2) , 
\end{align}

8) For the symmetries, hamiltonian structure and the conservation laws see Appendix \ref{sec:CSmodels}.
 
9) 
The stability/instability of the 
  Abrikosov lattice solutions is investigated in \cite{Raj}.

\bigskip

Our paper is organized as follows. After some preliminary discussions in  Section \ref{sec:prelim},  we prove Theorems \ref{thm:ALexist}(a, b) 
 and \ref{thm:eng-asymp} in 
 Sections \ref{sec:modif-eqs} - \ref{sec:b-s-relat} and \ref{sec:energy-asymp}, respectively. 
In Section \ref{sec:eng.min}, we prove 
 Theorem \ref{thm:eng.min}. 
 Some background on and properties of the Chern-Simons gauge theories are given in Appendix \ref{sec:CSmodels}.  
 For convenience of the reader we present in Appendix \ref{sec:current-repr} a proof of a result from a related paper. 


\subsection*{Acknowledgements}
The authors thank  Ilias Chenn and Dmitri Chouchkov for many stimulating discussions. 
 The research on this paper is supported in part by NSERC Grant No. NA7901.

\section{Rescaling, gauge fixing, reformulation}\label{sec:prelim} 
In this section, we lay out the background for the proof of Theorem \ref{thm:ALexist}.  
We are interested in static solutions of \eqref{CSeqs},  i.e.  solutions for which $\Psi$, $ A$ and $ A_0$ are time independent. Hence $(\Psi, A, A_0)$ solve the equations 
\begin{subequations} \label{CSeqs-stat}
\begin{align}\label{NR-CS-eq1-stat}
  - & \Delta_{ A} \Psi + 
v(|\Psi|^2)  \Psi - A_0 \Psi=0,\\  
\label{NR-CS-eq2-stat} & \im (\bar \Psi \n_{ A} \Psi)-  \curl^*  A_0=0,\\ 
\label{NR-CS-eq3-stat}& \curl A+\frac12 |\Psi|^2-  b 
 =0, 
\end{align}\end{subequations} 
where, recall,  $v(s):=V'(s)$. 

To be more specific, we look for $(\Psi, A, A_0)$ equivariant w.r.to a lattice $\lat$, which we consider as one of the unknowns. In particular, we will vary the area, $|\lat|$, of a fundamental cell 
 of $\lat$.

In order to eliminate the unknown $|\lat|$ 
 from the definition of the spaces we rescale the solutions to the fixed, 
 normalized lattice
   \begin{equation}\label{LATnorm}\latn:=\lam^{-1/2} \lat,\ \qquad \lam:= \frac{| \lat|}{2\pi }.  
     \end{equation} 
  We note that $|\hat\lat| = 2\pi $. 
We define the rescaled fields $(\psi,  a, a_0)$ as 
\begin{align}\label{resc}
    (\psi(x), a(x), a_0(x)) :=& ( \sqrt \lam \Psi(\sqrt\lam x), \sqrt\lam  A(\sqrt\lam x), \lam A_0(\sqrt\lam x)). 
\end{align}

We summarize the effects of this rescaling: 
    \begin{enumerate}
    \item[(A)] $(\Psi, A, A_0)$ solves the static ZHK equations, \eqref{CSeqs-stat}, 
    if and only if $(\psi, a, a_0)$ solves
\begin{subequations} \label{CSeqs-stat-resc'} \begin{align}\label{CSeqs-stat-resc'1}
  - &  \Delta_{ a} \psi + 
v_\lam (|\psi|^2)   \psi- a_0 \psi=0,\\  
\label{CSeqs-stat-resc'2} & \im (\bar \psi \n_{ a} \psi)-  \curl^*  a_0=0,\\ 
\label{CSeqs-stat-resc'3}&  \curl a- \lam b+\frac12 |\psi|^2 =0, 
\end{align} \end{subequations} 
where  $v_\lam (|\psi|^2):=\lam V'(\lam^{-1}|\psi|^2)\equiv \lam v(\lam^{-1}|\psi|^2)$.

    \item[(B)]   \label{reduced-gauge-form}
  $(\Psi, A, A_0)$ is   $\lat$-equivariant  iff $(\psi, a, a_0)$ is  $\latn$-equivariant, i.e. it satisfies 
\begin{subequations} \label{gaugeper}   \begin{align} \label{psi-gaugeper'} 
 &\psi(x + s) = e^{i(\frac{ n}{2}x\cdot J s+ c_s) }\psi(x),\\  
 \label{a-per} &  a(x + s) = a(x)+ \frac{ n}{2} J s,\\  
   \label{a0-per'} & a_0(x + s) = a_0(x), 
  \end{align}\end{subequations}  
for every $s  \in \latn$, where  $c_s$ satisfies the condition 
  \begin{align} \label{cs-cond} c_{s+t} - c_s - c_t - \frac{1}{2} n s \wedge t \in 2\pi\Z. \end{align}  
 The magnetic flux quantization, \eqref{flux-quant}, for the rescaled $a$ states 	
\begin{align}  \label{flux-quant-resc}\frac{1}{2\pi} \int_{\hat\Om} \curl  a = n\in \Z,\end{align}
for any elementary cell,  $\Omn$, of the lattice $\latn$.
 	
	\item[(C)]   \label{rescal-en} The rescaled average energy per lattice cell, obtained from \eqref{en}  is 
	\begin{align} \label{en-resc'} 
		\cE_\lam (\psi, a) & = \frac{1}{|{\lat}|} E_{\Omega^{\lat}}( \frac{1}{\sqrt{\lam}}\psi(\frac{1}{\sqrt{\lam}} x), \frac{1}{\sqrt{\lam}} a(\frac{1}{\sqrt{\lam}} x)) \\
 \label{en-resc}		& = \frac{1}{2 \pi}\int_{\Omn} (\frac{1}{ \lam^2}|\nabla_{ a} \psi|^2 + V(\frac{1}{\lam}|\psi|^2)) d x 
  \end{align}    
where   $\Om^{\lat}$ denotes an elementary cell of the lattice $\lat$, provided \eqref{CSeqs-stat-resc'3} holds.  \end{enumerate}

Let $ a^{n}$ denote the vector potential with the constant magnetic field $\curl a^n=n$. For any $b$, equations \eqref{CSeqs-stat-resc'} 
 for $
 (\psi,  a, a_0, \lam)$, have the normal (`trivial') solution 
 \begin{align}\label{norm-sol} 
(\psi=0,\   a= a^{n},\ a_0=0,\ \lam= \frac{n}{b}). 
 \end{align}

Recall that we consider $| \lat|$ and therefore $\lam:= \frac{2\pi }{|\lat|}$ as unknowns. To have the same number of equations as unknowns, $(\psi, a, a_0, \lam)$, we take the cell-average of \eqref{CSeqs-stat-resc'3} and use \eqref{flux-quant-resc} 
  to obtain the fourth equation. 
  After setting  $\theta:=\lam-\frac nb$  and
  \begin{align} \label{a'-deco}  a= a^{n}+  \al,\end{align} and passing to the unknowns $u:=(\psi,  \al, a_0, \theta)$, the system \eqref{CSeqs-stat-resc'} becomes 
\begin{subequations} \label{CSeqs-stat-resc}  
\begin{align}\label{CSeqs-stat-resc1}
  - &  \Delta_{ a} \psi + 
v_\lam (|\psi|^2)   \psi-  
a_0 \psi=0,\\  
\label{CSeqs-stat-resc2} &  \curl^*a_0- \im (\bar \psi \n_{a} \psi)=0,\\  
\label{CSeqs-stat-resc3} &  \curl \al + \frac12  |\psi|^2  
-\frac12 \lan |\psi|^2\ran=0,\\ 
\label{CSeqs-stat-resc4}&
- b \theta +\frac12 \lan |\psi|^2\ran =0,\\
\label{a-lam-deco}& a=a^{n}+  \al,\ \lam = \frac nb+\theta.
\end{align}\end{subequations} 
Here and in what follows, we denote 
$\lan f\ran:= \frac{1}{|\hat\Om|}\int_{\hat\Om} f$.
 Furthermore, assume that $\psi$, $ \al$ and $a_0$ satisfy, for every $s  \in \latn$, 
 \begin{subequations} \label{gaugeper'}    \begin{align}\label{psi-gaugeper}  &\psi(x + s) = e^{i(\frac{ n}{2}x\cdot J s+c_s) }\psi(x),\\  
   \label{alpha-per} & \alpha(x + s) =  \alpha(x)\ 
  {\rm and}\ \divv  \alpha = 0,\\ 
 \label{a0-per} &a_0(x + s) = a_0(x). \end{align} \end{subequations}


\section{Modified equations}\label{sec:modif-eqs} 
 In this section, adopting the strategy from \cite{CERS} (see also \cite{TS1, CSS}), we replace the original system of equations, \eqref{CSeqs-stat-resc}, by one giving the same solutions and which is easier to handle. 
 
 We begin with defining appropriate spaces we work in.
 Let $\cH^s_n$, $\vec \cH^s$ and $\cH^s_0$ be the Sobolev spaces of order $s$ associated with the $L^2$-spaces 
 \begin{align}\label{L2n-space} &L^{2}_{n}:=\{\psi\in L^2_{\rm loc}(\R^2, \C): 
 \psi\ \text{ satisfies \eqref{psi-gaugeper}}\},\\  
\label{vecL2-space} & \vec L^{2}_{\divv,0}:=\{\ \al\in L^2_{\rm loc}(\R^2, \R^2)\ |\ \ \al \text{ satisfies  \eqref{alpha-per}} 
 \}, \\  
\label{L20-space} & L^{2}_{0}:=\{a_0\in L^2_{\rm loc}(\R^2, \R)\ |\ a_0 \text{ satisfies  \eqref{a0-per} and } \lan a_0\ran =0\}, \end{align} 
 where $\divv \al$ is understood in the distributional sense, with the inner products of $L^2(\hat\Om, \C)$, $L^2(\hat\Om, \R^2)$ and $L^{2}(\hat\Om, \R)$, for some fundamental cell $\hat\Om$ of $\hat\lat$. 
 Let  \begin{align}\label{XYspaces}
 X:=\cH^{2}_{n} \times \vec\cH^{1} \times \cH^{1}_{0}\times \R\ \text{ and }\ Y:= L^{2}_{n}\times \vec L^{2}_{\divv,0} \times L^{2}_{0}\times \R.\end{align} 

Let $P'$ be the orthogonal projection onto the divergence free 
 vector fields ($P'=\frac{1}{-\Delta}\curl^*\curl$) and $u:=(\psi,  \al, a_0, \theta)$, where $\al:=  a - a^n$ and  $\theta:=\lam-\frac nb$,  so that the normal state is given by $u=0$. 
   We introduce the map $F_{n b}(u): X \ra Y$, related to the l.h.s. of \eqref{CSeqs-stat-resc}  as 
 \begin{align}\label{F}
  F_{n b} (u):=  \left( \begin{array}{cccc} - \Delta_{ a} \psi + 
v_\lam (|\psi|^2)   \psi- a_0 \psi\\  
 \curl^*a_0- P' J (\psi,  \al) \\  
 \curl \al - \frac12 \avg{\Abs{\psi}^2} + \frac12 \Abs{\psi}^2\\ 
 - \theta b +\frac12  \lan |\psi|^2\ran\end{array} \right)
\end{align}
where $J (\psi, \al) :=\im (\bar \psi \n_{ a_n+ \al} \psi)$ and, recall,  $v_\lam (|\psi|^2):=\lam V'(\lam^{-1}|\psi|^2)\equiv \lam v(\lam^{-1}|\psi|^2)$. We see that the map $F_{n b}(u)$ differs from the l.h.s. of \eqref{CSeqs-stat-resc} by having the projection $P'$ in the second entry, all other entries are unchanged. This is to make sure that $\Ran F_{n b}(u)\subset Y$.

Exactly as in a result in \cite{TS1}, we have

\begin{proposition}\label{prop:reconstr}  Assume $(\psi, \al, a_0, \theta)$ is a solution of the equation 
\begin{equation}\label{Feq} F_{n b}(\psi, \al, a_0, \theta) = 0, \end{equation}
 with $\lam = \frac nb +\theta$, 
 satisfying \eqref{gaugeper'}. 
Then $\divv J (\psi, \al)=0$ 
and therefore $(\psi, \al, a_0, \theta)$ solves  system \eqref{CSeqs-stat-resc}.  
   \end{proposition}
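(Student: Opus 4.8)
The plan is to reduce the statement to the single divergence identity $\divv J(\psi,\al)=0$, and then to read that identity off from the first (Ginzburg--Landau type) equation of the system together with the gauge--covariant continuity relation. For the \emph{reduction}: comparing the map \eqref{F} with the left--hand side of \eqref{CSeqs-stat-resc}, its first, third and fourth components agree verbatim, and the only difference is that the second component of $F_{nb}(u)$ is $\curl^{*}a_{0}-P'J(\psi,\al)$ rather than $\curl^{*}a_{0}-J(\psi,\al)$ as in \eqref{CSeqs-stat-resc2}. Since $P'$ is the orthogonal projection onto the divergence--free fields and $J(\psi,\al)$ is $\latn$--periodic (by the equivariance \eqref{gaugeper'} of $\psi$ and $a=a^{n}+\al$), once we know $\divv J(\psi,\al)=0$ we get $P'J(\psi,\al)=J(\psi,\al)$, so that $F_{nb}(u)=0$ becomes exactly the system \eqref{CSeqs-stat-resc}. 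Hence it suffices to prove $\divv J(\psi,\al)=0$.

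Next I would record the pointwise/distributional identity, valid for $\psi\in\cH^{2}_{n}$ and any real vector field $a$ with $a-a^{n}\in\vec\cH^{1}$,
\[
\divv\,\im\!\big(\bar\psi\,\nabla_{a}\psi\big)=\im\!\big(\bar\psi\,\Delta_{a}\psi\big),\qquad -\Delta_{a}:=\nabla_{a}^{*}\nabla_{a}.
\]
The proof is the routine computation: writing the $k$-th component of the current as $\im(\bar\psi\,\partial_{k}\psi)+a_{k}|\psi|^{2}$ and differentiating, the term $\overline{\partial_{k}\psi}\,\partial_{k}\psi$ is real and drops out of the imaginary part, while $(\divv a)|\psi|^{2}$ and $a\!\cdot\!\nabla|\psi|^{2}$ recombine with $\im(\bar\psi\Delta\psi)$ into precisely $\im(\bar\psi\Delta_{a}\psi)$; the regularity of $(\psi,\al)$ makes every product locally integrable and $\Delta_{a}\psi\in L^{1}_{\mathrm{loc}}$, so the identity holds in $\mathcal D'$. (One may instead simply quote the corresponding computation from \cite{TS1}, as the phrasing of the proposition already signals.)

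To conclude, use the first component of $F_{nb}(u)=0$, namely $\Delta_{a}\psi=v_{\lam}(|\psi|^{2})\psi-a_{0}\psi$, to get
\[
\bar\psi\,\Delta_{a}\psi=\big(v_{\lam}(|\psi|^{2})-a_{0}\big)|\psi|^{2},
\]
which is real--valued because $V\in C^{2}$ (hence $v_{\lam}$) and $a_{0}$ are real--valued. Therefore $\im(\bar\psi\Delta_{a}\psi)=0$, and the identity of the previous paragraph gives $\divv J(\psi,\al)=0$; together with the reduction step this proves the proposition.

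I do not expect an essential difficulty here: the computational content is the covariant continuity identity. The points that do require care are (i) the low regularity available, so that the divergence in the identity must be interpreted distributionally and the products handled via the Sobolev embeddings in dimension two; and (ii) the verification in the reduction step that $J(\psi,\al)$ is genuinely $\latn$--periodic, so that it lies in the range of $P'$ — and is therefore fixed by $P'$ — as soon as it is divergence free.
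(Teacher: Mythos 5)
Your proposal is correct, and its core is the same identity the paper uses, obtained by a different derivation. The paper follows \cite{TS1} and gets the weak form of the continuity identity by differentiating the gauge invariance of the energy, $\cE_\lam(e^{is\chi}\psi,\al+s\nabla\chi,a_0)=\cE_\lam(\psi,\al,a_0)$, in $s$ at $s=0$: this yields \eqref{E-deriv}, i.e.\ $\langle J(\psi,\al),\nabla\chi\rangle$ plus a term proportional to the first component of $F_{nb}(u)$, tested against $i\chi\psi$; the first equation then gives $\langle J,\nabla\chi\rangle=0$ for all periodic $\chi$, hence $\divv J=0$ weakly. You instead compute the pointwise/distributional identity $\divv\,\im(\bar\psi\nabla_a\psi)=\im(\bar\psi\,\Delta_a\psi)$ and observe that the first equation makes $\bar\psi\,\Delta_a\psi=(v_\lam(|\psi|^2)-a_0)|\psi|^2$ real. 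These are the strong and weak forms of the same Noether-type statement, and both hinge on the same use of the matter equation; your version is more elementary and self-contained, while the paper's version automatically handles the boundary terms and the low regularity by working in the weak formulation from the start (a point you correctly flag as needing care in the direct computation). Your reduction step — that the only discrepancy between $F_{nb}(u)=0$ and \eqref{CSeqs-stat-resc} is the projection $P'$ in the second component, removed once $\divv J=0$ — is also right, and is stated rather than proved in the paper; note only that one should read $P'$ as the orthogonal projection onto \emph{all} periodic divergence-free fields (so that it fixes constants, consistently with the paper's later use of $P'\one=\one$), since the second component of $F_{nb}(u)=0$ additionally forces $\langle J\rangle=0$ by averaging.
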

 \begin{proof} 
    Assume 
 $\chi\in H^1_{\rm loc}$ and is $\lat-$periodic (we say, $\chi \in H^1_{\rm per} $). Following \cite{TS1}, we differentiate the equation $\cE_\lam(e^{i s\chi}\psi, \al+s\nabla\chi, a_0)=\cE_\lam(\psi, \al, a_0) $, w.r.to $s$ at $s=0$, use that $\curl\n \chi=0$  and integrate by parts, to obtain
\begin{align} \label{E-deriv}
 \re\lan (-\Delta_{a^{n}+ \alpha} -a_0)\psi  + v_\lam (|\psi|^2)\psi, & i \chi\psi\ran 
 + \lan 
J (\psi, \al), \n \chi\ran=0.\end{align} 
 (Due to conditions \eqref{gaugeper'} and the $\hat\lat-$periodicity of $\chi$, 
 there are no boundary terms.) 
This, together with the first equation in \eqref{Feq}, 
implies \begin{align}\label{J-relat} \lan J (\psi, \al), \n\chi\ran=0.\end{align} 
 Since the last equation holds for any  $\chi \in H^1_{\rm per} $, we conclude that $\divv J (\phi,  \al)=0$.
\end{proof}

 We conclude this section by establishing some general properties of the map $F_{n b} : \R \times X \to Y$ used below. 
We use the obvious notation $F_{n b} =(F_{n b 1}, F_{n b 2}, F_{n b 3}, F_{n b 4})$. For $f=(f_1, f_2, f_3, f_4)$, we introduce the gauge transformation as $T_\del f= (e^{i\delta} f_1, f_2, f_3, f_4)$. The following proposition lists some properties of $F$.
 
 \begin{proposition}\label{prop:F-propert} Recall the notation $u:=(\psi,  \al, a_0, \theta)$. We have
    \hfill
    \begin{enumerate}[(a)]
    \item for all $b$, $F_{n b}(0) = 0$,
    \item \label{F-gauge-inv} for all $\delta \in \R$, $F_{n b} ( T_{\delta}u) = T_\del  F_{n b} ( u)$,
	\item \label{psiF-real} for all $u$ (resp. $\psi$), $\langle u, F_{n b} (u) \rangle \in \R$ (resp. $\langle \psi, F_{n b 1} (u) \rangle \in \R$).
    \end{enumerate}
\end{proposition}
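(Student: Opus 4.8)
The plan is to verify each of the three assertions directly from the definition \eqref{F} of $F_{nb}$, since each is a formal algebraic identity about the four components of the map.

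For part (a), I would simply substitute $u = (\psi,\al,a_0,\theta) = 0$ into \eqref{F}. Then $\psi = 0$ kills the first component (every term carries a factor of $\psi$), $\al = 0$ and $\psi = 0$ make $\curl^*a_0 - P'J(0,0) = \curl^* a_0$, but $a_0 = 0$ so this vanishes; the third component is $\curl\, 0 - \tfrac12\avg{0} + \tfrac12 \cdot 0 = 0$; the fourth is $-\theta b + \tfrac12\avg{0} = 0$ since $\theta = 0$. Hence $F_{nb}(0) = 0$.

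For part (b), the gauge transformation acts as $T_\delta u = (e^{i\delta}\psi, \al, a_0, \theta)$ with $\delta$ constant, so $\lam$ is unchanged and $\nabla_a(e^{i\delta}\psi) = e^{i\delta}\nabla_a\psi$, giving $-\Delta_a(e^{i\delta}\psi) = e^{i\delta}(-\Delta_a\psi)$; also $|e^{i\delta}\psi|^2 = |\psi|^2$, so $v_\lam(|\psi|^2)$ and $a_0$ factors just pull the $e^{i\delta}$ through the first component. For the second component, $J(e^{i\delta}\psi,\al) = \im(\overline{e^{i\delta}\psi}\,\nabla_a(e^{i\delta}\psi)) = \im(\bar\psi\nabla_a\psi) = J(\psi,\al)$ since $|e^{i\delta}|=1$, so it is unchanged; likewise the third and fourth components depend on $\psi$ only through $|\psi|^2$ and are unchanged. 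Therefore $F_{nb}(T_\delta u) = (e^{i\delta}F_{nb1}(u), F_{nb2}(u), F_{nb3}(u), F_{nb4}(u)) = T_\delta F_{nb}(u)$.

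For part (c), with the pairing on $Y = L^2_n \times \vec L^2_{\divv,0}\times L^2_0\times\R$, I compute $\langle u, F_{nb}(u)\rangle$ termwise. The first term is $\re\langle\psi, -\Delta_a\psi + v_\lam(|\psi|^2)\psi - a_0\psi\rangle$, which is real by construction (it is $\re$ of something); in fact $\langle\psi,-\Delta_a\psi\rangle = \|\nabla_a\psi\|^2 \geq 0$ and $\langle\psi, v_\lam(|\psi|^2)\psi\rangle$, $\langle\psi, a_0\psi\rangle = \langle a_0,|\psi|^2\rangle$ are manifestly real, which gives the parenthetical claim about $\langle\psi,F_{nb1}(u)\rangle\in\R$. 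The remaining three terms $\langle\al,\curl^*a_0 - P'J\rangle$, $\langle a_0,\curl\al - \tfrac12\avg{|\psi|^2}+\tfrac12|\psi|^2\rangle$, $\langle\theta,-\theta b + \tfrac12\avg{|\psi|^2}\rangle$ are pairings of real-valued objects and hence real. Summing, $\langle u,F_{nb}(u)\rangle\in\R$. There is essentially no obstacle here — the only minor point to be careful about is that the $L^2$ inner products on the real spaces $\vec L^2_{\divv,0}$, $L^2_0$, $\R$ are the real ones, so there is no conjugation subtlety, and on $L^2_n$ one takes the real part (as the notation in \eqref{F-deriv} already does); once that convention is fixed, all three parts are immediate.
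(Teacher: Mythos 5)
Your proof is correct. Parts (a) and (b) coincide with the paper's treatment (which dismisses them as straightforward calculations), and the overall structure of part (c) is also the same: the pairings against $\al$, $a_0$ and $\theta$ are manifestly real, so everything reduces to the realness of $\langle \psi, F_{n b 1}(u)\rangle$. Where you genuinely diverge is in how you handle the kinetic term. The paper expands $-\Delta_{a^n+\al}$ in $\al$, isolates the cross term $2i\int_{\Omn}\bar\psi\,\al\cdot\nabla\psi$ (the only piece that is not manifestly real), and kills its imaginary part by integrating by parts and invoking the gauge condition $\divv\al=0$ from \eqref{alpha-per}. You instead keep $-\Delta_a=\nabla_a^*\nabla_a$ intact and use $\langle\psi,-\Delta_a\psi\rangle=\|\nabla_a\psi\|^2\ge 0$. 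This is cleaner and, notably, does not use $\divv\al=0$ at all; but it does silently use Green's identity on the fundamental cell $\Omn$ with no boundary contribution, which holds because the equivariance conditions \eqref{gaugeper'} make $\bar\psi\,\nabla_a\psi$ genuinely $\latn$-periodic (the quasi-periodic phases of $\psi$ and $\bar\psi$ cancel). You should state that one line explicitly -- it plays the same role in your argument that periodicity plus $\divv\al=0$ plays in the paper's. One further small point: your opening remark that the first pairing is ``$\re$ of something and hence real by construction'' threatens to trivialize the claim; the content of (c) is precisely that the \emph{complex} inner product $\langle\psi,F_{n b 1}(u)\rangle$ has vanishing imaginary part, and it is your subsequent ``in fact'' argument, not the convention, that establishes this.
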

\begin{proof}
In this proof, we omit the subindex $n b $.  The properties   (a) and (b) are straightforward calculations using the definition of $F$. For (c), since $\langle u, F \rangle =\langle \psi, F_1 \rangle +\langle \alpha, F_2 \rangle +\langle a_0, F_3 \rangle + \theta F_4 $ and $\langle \alpha, F_2( u) \rangle, \langle a_0, F_3 \rangle $ and $ \theta F_4$ is real, the statements $\langle u, F( u) \rangle \in \R$ and $\langle \phi, F_1( u) \rangle \in \R$ are equivalent. Now, expanding $\Delta_{\an+\al}$ in $\al$, we calculate 
   	\begin{align*}
		\langle \psi, F_1( u) \rangle
		&= \langle \psi, (-\Delta_{\an} - \frac nb \chi)\psi \rangle + 2i\int_{\Omn} \bar{\psi}\al \cdot\nabla\psi\\
			&+ 2\int_{\Omn} (\alpha \cdot a^n)|\psi|^2
			+ \int_{\Omn} |\al |^2 |\psi|^2\\
			&			+ \int_{\Omn} (v_\lam (|\psi|^2)  + \frac nb \chi- 
			 a_0 ) |\psi|^2.			
	\end{align*}
				The final three terms are clearly real and so is the first because $L^n - \lambda$ is self-adjoint. For the second term, we integrate by parts and  use the fact that the boundary terms vanish due to the periodicity of the integrand and that $\divv \al = 0$ to see that
	\begin{equation*}
\im 2i\int_{\Omn}  \psi \alpha \cdot\nabla\bar{\psi}			= \int_{\Omn}\al \cdot (\bar{\psi} \nabla\psi + \phi  \nabla\bar{\psi} )
			= -\int_{\Omn} (\divv \al) |\psi|^2=0.		
	\end{equation*}
	 Thus this term is also real and (c) is established.
\end{proof}

\section{Linearized system} \label{sec:linear}

In this section we use the Lyapunov-Schmidt decomposition to reduce the problem of solving  equation  \eqref{Feq}  to a finite dimensional problem. We address the latter in the next section.

Recall that $\chi:=-v(0)$ and let  $u:=(\psi, \al, a_0, \theta)$. In this notation,  the normal (`trivial') solution is given by $u^{n b}=0$. The linearized map on it,  
 $ A_{n b}:= d F_{n b}(0)$,  is given explicitly by

\begin{equation}\label{Anb} A_{n b} = \left( \begin{array}{cccc} - \Delta_{ a^n} - \frac nb \chi & 0 & 0 &0\\  0 & 0 &  \curl^* & 0 \\  0 &  \curl &0 & 0 \\  0 & 0 & 0 &  -b \end{array} \right).\end{equation}			%
Note that $A_{n b}$ maps $X$ into $Y$.  We have
\begin{proposition}\label{prop:dFspec}  The operator $A_{n b}$ is self-adjoint and has purely discrete spectrum accumulating at $\pm\infty$.  Moreover, in spaces \eqref{L2n-space} - \eqref{L20-space}, 
\begin{align}\label{null-dF} &\Null A_{n b} = \Null (- \Delta_{a^n} - n\frac \chi b) \times \{const\} \times \{0\} \times \{0\},\\ 
 \label{null-Delta-ab}&\Null (- \Delta_{a^n} - n\frac \chi b )\neq \{0\} \Longleftrightarrow b = \chi \quad (b \text{ close to } \chi),\\ 
\label{null-Delta-ab'} & \dim \Null (-  \Delta_{a^n} - n)=n.\end{align} 
   \end{proposition}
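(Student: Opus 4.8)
The plan is to analyze the block-diagonal-ish operator $A_{nb}$ of \eqref{Anb} entry by entry, reducing everything to the spectral theory of the magnetic Schr\"odinger operator $-\Delta_{a^n}$ on the torus (i.e. on the space $L^2_n$ with the equivariance condition \eqref{psi-gaugeper}). First I would establish self-adjointness: the $(1,1)$-block $-\Delta_{a^n}-\frac nb\chi$ is a self-adjoint operator on $\cH^2_n$ because $-\Delta_{a^n}=\nabla_{a^n}^*\nabla_{a^n}$ is nonnegative self-adjoint with compact resolvent on the torus (the equivariant boundary conditions \eqref{psi-gaugeper} make $\hat\Om$ effectively compact, and $a^n$ is smooth), and subtracting the bounded multiple of the identity $\frac nb\chi$ does not affect this. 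The middle $2\times2$ block $\left(\begin{smallmatrix}0&\curl^*\\\curl&0\end{smallmatrix}\right)$ acting on $\cH^1_0\oplus\vec\cH^1$ is symmetric since $\curl^*$ is by definition the formal adjoint of $\curl$, and one checks essential self-adjointness on these periodic Sobolev spaces; the scalar $-b$ on $\R$ is trivially self-adjoint. Discreteness of the spectrum accumulating only at $\pm\infty$ follows because each block has compact resolvent: the $(1,1)$-block by Rellich on the torus, the middle block because $\curl\,\curl^*$ and $\curl^*\curl$ have compact resolvent modulo their (finite-dimensional) kernels, and a $1\times1$ block is automatically discrete.

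Next I would compute the nullspace \eqref{null-dF}. Since $A_{nb}$ is block lower/upper triangular in a way that decouples the first and fourth coordinates but couples the middle two, $A_{nb}u=0$ with $u=(\psi,\al,a_0,\theta)$ is equivalent to the four conditions: $(-\Delta_{a^n}-\frac nb\chi)\psi=0$, $\curl^*a_0=0$, $\curl\al=0$, and $-b\theta=0$. Since $b>0$, the last gives $\theta=0$. On the space $\vec L^2_{\divv,0}$ of divergence-free periodic vector fields, $\curl\al=0$ together with $\divv\al=0$ forces $\al$ to be harmonic, hence constant (the only periodic harmonic vector fields on the torus are constants); this gives the factor $\{const\}$. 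On $L^2_0$, $\curl^*a_0=-*\nabla a_0=0$ forces $a_0$ constant, but the constraint $\lan a_0\ran=0$ built into \eqref{L20-space} then forces $a_0=0$; this gives the factor $\{0\}$. This establishes \eqref{null-dF}.

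For \eqref{null-Delta-ab'} I would invoke the well-known computation of the lowest Landau level: on $L^2_n$ with flux $n$ through the cell of area $2\pi$, the operator $-\Delta_{a^n}$ (in the appropriate normalization where the lowest eigenvalue is $n$) has kernel of $-\Delta_{a^n}-n$ equal to the space of solutions of the Cauchy--Riemann-type equation $\p_{a^n}\psi=0$, which is the space of holomorphic sections of a degree-$n$ line bundle over the torus, of complex dimension $n$; this is standard (see e.g. \cite{S, S2, CERS}) and follows from the factorization $-\Delta_{a^n}=\p_{a^n}^*\p_{a^n}+n=\p_{a^n}\p_{a^n}^*-n$ combined with a Riemann--Roch or direct Fourier-series count. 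Then \eqref{null-Delta-ab} follows: by the same factorization the spectrum of $-\Delta_{a^n}$ consists of the Landau levels $\{n,3n,5n,\dots\}$ (for $n=1$: $\{1,3,5,\dots\}$), so for $n=1$ the operator $-\Delta_{a^1}-\frac\chi b$ has nontrivial kernel iff $\frac\chi b$ is a Landau level, and for $b$ near $\chi$ the only possibility is $\frac\chi b=1$, i.e. $b=\chi$; the general-$n$ statement $-\Delta_{a^n}-n\frac\chi b$ reduces to this after noting $n\frac\chi b$ lands on the level $n$ exactly when $b=\chi$. The main obstacle is the eigenvalue computation underlying \eqref{null-Delta-ab'} and the Landau-level structure: pinning down the correct normalization of $a^n$ so that the lowest eigenvalue is exactly $n$ (consistent with the rescaling conventions \eqref{resc}, \eqref{LATnorm} of the paper), and justifying the dimension count of the kernel — this is where I would either cite the companion references carefully or reproduce the short argument via the holomorphic factorization $-\Delta_{a^n}=2\p_{a^n}^*\p_{a^n}+n$ and the theta-function basis of the kernel.
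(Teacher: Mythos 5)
Your proposal is correct and follows essentially the same route as the paper: block-diagonalize $A_{nb}$, invoke the standard Landau-level theory of $-\Delta_{a^n}$ (the factorization $-\Delta_{a^n}=4\p_{a^n}^*\p_{a^n}+n$ of Appendix~\ref{sec:self-dual} and the $n$-dimensional kernel of $\p_{a^n}$) for \eqref{null-Delta-ab}--\eqref{null-Delta-ab'}, and compute the kernel of the $\curl/\curl^*$ block. The only difference is cosmetic: where you solve the decoupled kernel equations $\curl\al=0$, $\curl^*a_0=0$ directly (a periodic, curl- and divergence-free field is constant, and $\lan a_0\ran=0$ kills the constant), the paper diagonalizes the off-diagonal block $M$ by Fourier series (Lemma~\ref{lem:Mspec}) and reads off $\Null M$.
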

 \begin{proof} 
The self-adjointness of $A_{n b}$ follows from standard results. Eqs \eqref{null-Delta-ab} - \eqref{null-Delta-ab'} are known, see e.g. \cite{TS1, GS} (see also Corollary \ref{cor:LaplAb-spec} of  Appendix \ref{sec:self-dual}). 
It remains to prove \eqref{null-dF}. Let
\begin{equation} \label{Mop}
	M = \begin{pmatrix}
	0 &  \curl^*  \\
	  \curl  & 0
	\end{pmatrix}
\end{equation}
Eq \eqref{Anb} shows that $A_{n b} =  (- \Delta_{a^n} - n\frac \chi b) \oplus\Null M \oplus - b$ and therefore 
\begin{align}\label{null-dF'} &\Null A_{n b} = \Null (- \Delta_{a^n} - n\frac \chi b) \times\Null M \times \{0\}.\end{align}
For the operator $M$, we have the following

\begin{lemma} \label{lem:Mspec}
	The operator $M$ on $L^2_{per}(\R^2, \R^2) \times L^2_{per}(\R^2, \R)$, with domain $H^1_{per}(\R^2, \R^2) \times H^1_{per}(\R^2, \R)$ is self-adjoint and has 
	the spectrum \[\{0, \pm |k|: k\in \latn^*\},\]
	 where $\lat^*$ is the lattice reciprocal to $\lat$, with the corresponding eigenfunctions $( k_1, k_2, 0)e^{i k\cdot x}$ (or  $( \n f, 0)$) and $(\mp  \hat k_2, \pm  \hat k_1, 1)e^{i k\cdot x}$, with $\hat k_:=k_j/|k|$, respectively. In particular,	
	\begin{equation} \label{NullM}
			\Null M = \R \times \R^2
	\end{equation}
\end{lemma}
\begin{proof}
 The self-adjointness of $M$ is standard. One can easily check that  $(k_1, k_2, 0)e^{i k\cdot x}$ (or  $(\n f, 0)$) and $(\mp  \hat k_2, \pm  \hat k_1, 1)e^{i k\cdot x}$ are eigenfunctions and that they form a basis in $L^2_{per}(\R^2, \R^2) \times L^2_{per}(\R^2, \R)$. 
 \end{proof}
Equations \eqref{null-dF'} and \eqref{NullM} imply  \eqref{null-dF}. Eqs. \eqref{null-Delta-ab} and \eqref{null-Delta-ab'} are standard (see e.g. \cite{TS1}).
 \end{proof}
\paragraph{Remarks.} 1) The operator \eqref{Anb}  is the hessian of the energy functional \eqref{en'} at the point $(b, 0)$. 

2) Usually, the linearization of the equations at a solution determines its linear stability. And with this criterion,  the normal state $u^{nb}$ seems to to be always unstable. However,  equations \eqref{CSeqs-stat-resc2}-\eqref{CSeqs-stat-resc4} can be considered as constraints and therefore only the $\psi$-block determines the linear stability of the normal state $u^{nb}$. Hence equation \eqref{Anb} and the standard result $- \Delta_{a^n}\ge n$ (see e.g. \cite{TS1}) show that $u^{nb}$ is linearly stable if $b\ge  n\chi$ and unstable for $b< n \chi$. (See \cite{S}, for more details.)


\section{Reduction to a finite-dimensional problem} \label{sec:reduction}

   We let $P$ be the orthogonal projection in $Y$ onto $K := \Null_{X } A_{n \chi}\subset X$ (see \eqref{null-dF}) and let $\bar P := I - P$.  
\DETAILS{Since $ n$ is an isolated eigenvalue of $A_{ n}$, $P$ can be explicitly given as the  Riesz projection,
    \begin{equation}
        P := -\frac{1}{2\pi i} \oint_\gamma (A_{ n} - z)^{-1} \,dz,
    \end{equation}
    where $\gamma \subseteq \C$ is a contour around $0$ that contains no other points of the spectrum of $A_{ n}$.} 

    Writing $u=v +w$, where $v = P u$ and $w = \bar P u$, we see that the equation $F_{n b}'(u) = 0$ is therefore equivalent to the pair of equations
    \begin{align}
        \label{LS:eqn1} &P F_{n b} (v + w) = 0, \\
        \label{LS:eqn2} &\bar P F_{n b} (v + w) = 0.
    \end{align}
   We will now solve \eqref{LS:eqn2} for $w = \bar P u$ in terms of $b$ and $v = P u$. Let $s:=\frac{1}{\|\psio\|^2} \langle \psio, \psi \rangle$   and, recall, $ \lan  \al\ran:=\frac{1}{|\Om|}\int_{\Om} \al$.   We have
   \begin{lemma}\label{w-est} There is a neighbourhood, $U\subset \R \times K$, of $(n, 0)$,
    such that for any $(b, v)$ in that neighbourhood, Eq \eqref{LS:eqn2} has a unique solution $w = w(b, v)$. This solution satisfies 
  \begin{align}\label{w-prop1}
&  w(b, v)  \ \mbox{real-analytic in}\    (b, v),\\ 
   \label{w-prop2}
   &   \|\p_b^m w_i\|_{X_i}=O(\|v\|_{X_i}^{2}),\ i=  1, 2, 3, 4,\ m=0, 1,\\ 
  \label{w1-est}
      &  \|\p_b^m (w_1-w_1')\|_{X_1}=O(\|v\|_{X_i}^{3}),\ 
      m=0, 1, \end{align}
      where $w(b, v)=(w_1, w_2, w_3, w_4)$,  $w_1':=-2i s \lan  \al\ran \cdot (- \Delta_{ a^n} - \frac nb \chi)^{-1}\nabla_{\an}\psio$ and $Y_i$ are the factors in space $X$ defined in  \eqref{XYspaces}. 

     \end{lemma}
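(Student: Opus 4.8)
The plan is to apply the analytic implicit function theorem to \eqref{LS:eqn2} around the point $(b,v)=(n,0)$, treating $w\in \bar P X$ as the unknown. First I would observe that $F_{nb}$ is real-analytic as a map $\R\times X\to Y$: the only nonlinearities are the polynomial terms $|\psi|^2\psi$, $\al\cdot\nabla\psi$, $(\al\cdot a^n)|\psi|^2$, $|\al|^2|\psi|^2$, $a_0\psi$, the averaging $\avg{|\psi|^2}$, and the composition $\lam\mapsto v_\lam(|\psi|^2)=\lam v(\lam^{-1}|\psi|^2)$, which is analytic in $\lam$ near $\lam=n/b$ because $v=V'\in C^1$ with $V\in C^2$—here one must check that $C^2$ regularity of $V$ suffices for the needed derivatives, or invoke the stronger hypothesis; the Sobolev multiplication $\cH^2_n\cdot\cH^1\hookrightarrow L^2$ etc.\ is continuous in $d=2$. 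Next I would compute the partial differential $\bar P\,d_w F_{nb}(n,0)\big|_{\bar P X}=\bar P A_{n n}\bar P$, which by Proposition \ref{prop:dFspec} and \eqref{null-dF} is the restriction of $A_{nn}$ to the orthogonal complement of its kernel $K=\Null_X A_{n\chi}$; since at $b=\chi=n$ (after the identification) this is exactly the kernel, $\bar P A_{nn}\bar P$ is invertible $\bar P X\to \bar P Y$ with bounded inverse by the discreteness of the spectrum (Proposition \ref{prop:dFspec}). The analytic implicit function theorem then yields a neighbourhood $U\subset\R\times K$ of $(n,0)$ and a unique real-analytic $w=w(b,v)$ solving \eqref{LS:eqn2}, giving \eqref{w-prop1}.

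For the quantitative bounds, I would use $w(b,0)=0$ for all $b$ near $n$ (since $u=0$ solves $F_{nb}=0$ by Proposition \ref{prop:F-propert}(a), hence $\bar P F_{nb}(0)=0$ and uniqueness forces $w(b,0)=0$), and then Taylor-expand $w(b,v)$ in $v$ around $v=0$. The key point is that the first-order term vanishes: differentiating \eqref{LS:eqn2} in $v$ at $v=0$ gives $\bar P A_{nb}\bar P\,(\partial_v w)=-\bar P A_{nb} P=0$ since $PK$-range is killed and the cross term vanishes because $A_{nb}$ is block-diagonal in the splitting $K\oplus \bar P$ (here one uses that $b$ is close to $n$, so the $\psi$-block $-\Delta_{a^n}-\tfrac nb\chi$ has the same kernel structure up to a small shift — more care is needed, and one should instead differentiate the full pair or argue that $\bar P A_{nb} P = O(|b-n|)$ times a bounded operator whose range meets $\bar P X$ trivially at $b=n$; in either case $\partial_v w(b,0)=0$). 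Therefore $w=O(\|v\|^2)$ componentwise, and applying $\partial_b$ (which commutes with the analytic dependence) preserves this, giving \eqref{w-prop2}.

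For \eqref{w1-est}, I would extract the leading ($v^2$) term of $w_1(b,v)$ explicitly by solving the linearization of the first equation $F_{nb1}=0$ to second order. Writing $v=(s\psio,\langle\al\rangle,0,0)+\ldots$ in $K$ and expanding $-\Delta_{a^n+\al}\psi=-\Delta_{a^n}\psi+2i\al\cdot\nabla_{a^n}\psi+|\al|^2\psi$, the $O(v^2)$ contribution to $\bar P F_{nb1}$ coming from the cross term $2i\al\cdot\nabla_{a^n}\psi$ with $\psi\approx s\psio$ and $\al\approx\langle\al\rangle$ (constant part) is $2is\langle\al\rangle\cdot\nabla_{a^n}\psio$; all other quadratic contributions ($|\al|^2\psi$ with the oscillating part of $\al$ already being $O(\|v\|^2)$ by \eqref{CSeqs-stat-resc3}, the $a_0\psi$ term with $a_0=O(\|v\|^2)$, and $v_\lam$ nonlinearity) are either $O(\|v\|^3)$ or lie in $K$. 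Inverting $\bar P(-\Delta_{a^n}-\tfrac nb\chi)\bar P$ on this term produces precisely $w_1'=-2is\langle\al\rangle\cdot(-\Delta_{a^n}-\tfrac nb\chi)^{-1}\nabla_{a^n}\psio$, so $w_1-w_1'=O(\|v\|^3)$, and again $\partial_b$ respects this.

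The main obstacle I anticipate is the careful bookkeeping in the last step: identifying which quadratic terms survive in $\bar P F_{nb1}$ and verifying that the $\al$-equation \eqref{CSeqs-stat-resc3}, the $a_0$-equation \eqref{CSeqs-stat-resc2}, and the $\theta$-equation force $w_2,w_3,w_4$ (and the oscillating part of the full $\al$) to be $O(\|v\|^2)$ with no linear-in-$v$ part — so that when they feed back into the $\psi$-equation they only contribute at order $\|v\|^3$ — all while keeping track of the (small but nonzero) discrepancy $b\neq\chi$ through the resolvent $(-\Delta_{a^n}-\tfrac nb\chi)^{-1}$, which is bounded on $\bar P X_1$ uniformly for $b$ near $\chi$ but whose kernel jumps exactly at $b=\chi$.
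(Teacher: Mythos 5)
Your proposal is correct and follows essentially the same route as the paper: the implicit function theorem applied to $\bar P F_{nb}(v+w)=0$ using the invertibility of $\bar P A_{n\chi}\bar P$ on $K^\perp$, quadratic smallness of $w$ from the absence of a linear-in-$v$ term (the paper gets this from the fixed-point formula $w=-(A_{nb}^\perp)^{-1}\bar P f_{nb}(v+w)$ and Sobolev product estimates rather than from your Taylor expansion of $w$ in $v$, but the two are interchangeable since both rest on $\bar P A_{nb}P=0$), and explicit extraction of the cross term $2is\lan \al\ran\cdot\nabla_{a^n}\psio$ followed by inversion of the resolvent to identify $w_1'$. Your side remark that $C^2$ regularity of $V$ does not by itself justify real-analyticity of $w$ in $(b,v)$ is a fair observation about the paper as well, which invokes "analyticity of $F$" while only assuming \eqref{V-cond}; for the concrete double-well potential \eqref{V-dw} the nonlinearity is polynomial and the issue disappears.
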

    \begin{proof} 
    We introduce the map
    $G : \R \times K \times \bar X \to \bar Y$, where  $\bar X := \bar P X=X\ominus K$ and $\bar Y :=\bar P Y= Y \ominus K$, defined by  \begin{align}\label{G}G(b, v, w) =\bar P F_{n b} (v + w).\end{align} 
 It has the following properties (a) $G$ is $C^2$; (b) $G(b, 0, 0)=0$ $\forall b$; (c) $d_w G(b, 0, 0)$  is invertible for $b =n$.      Applying the Implicit Function Theorem
    to $G=0$, we obtain a function $w : \R \times K \to \bar X$, defined on a neighbourhood of $(\chi, 0)$,
    such that $w = w(b, v)$ is a unique solution to $G(b, v, w) = 0$, for $(b, v)$ in that neighbourhood. This proves the first statement.
    
    By the implicit function theorem and the analyticity of $F $, the  solution has the  property \eqref{w-prop1}.
    
To prove estimates \eqref{w-prop2} - \eqref{w1-est}, we recall $u:=(\psi, \al, a_0, \theta)$ and $A_{n b}:= d F_{n b}(0)$ 
 and write $F_{n b}$ as 
\begin{align}\label{F-deco}
 & \qquad  F_{n b} (u) = A_{n b} u +f_{n b} (u) ,\\
\DETAILS{ \begin{equation}\label{f}
 f(u):=     \begin{cases}
(V'_\lam (|\psi|^2)  -\chi) \psi- \lam a_0 \psi,\\  
\lam^{-1}P'\im (\bar \psi \n_{\vec a_n+ \al} \psi),\\  
- |\psi|^2 + \lan |\psi|^2\ran =0,\\ 
 n - \lam^2 b -\lan |\psi|^2\ran. 
  \end{cases} \end{equation}}
 \label{f}
&  f_{n b} (u):=  \left( \begin{array}{cccc} h_{n b} (u)\\  
-  \lam^{-1}P'J (\psi, \al)\\  
- \frac12 |\psi|^2 + \frac12 \lan |\psi|^2\ran\\ 
 - \theta^2 b-\frac12 \lan |\psi|^2\ran\end{array} \right),
\end{align}
with, recall,  $\lambda= \frac nb+\theta$, $J (\psi, \al):=\im (\bar \psi \n_{\vec a_n+ \al} \psi)$  and, with  $v_\lam (|\psi|^2):=\lam V'(\lam^{-1}|\psi|^2)\equiv \lam v(\lam^{-1}|\psi|^2)$,  
\begin{align}\label{h-def}h_{n b} (u) :=(v_\lam (|\psi|^2)  + \frac nb \chi ) \psi-  a_0 \psi+ 2i\al \cdot\nabla_{\an}\psi 
 +  |\al |^2 \psi.\end{align}
 
Proposition \ref{prop:dFspec} above and a standard spectral theory imply that $A_{n b}^\perp:= \bar P A_{n b}  \bar P\big |_{\Ran  \bar P}$ is invertible for $b$ close to $n $, with the uniformly bounded inverse. Hence,   using decomposition \eqref{F-deco}, we can rewrite  \eqref{LS:eqn2} as $A_{n b}^\perp w = -  \bar P f(b, u)$, which  after inverting $A_{n b}^\perp$ becomes
\begin{align}\label{w-eq}
   w = - (A_{n b}^\perp)^{-1} \bar P f_{n b}(u).
\end{align}
  We use that $A_{n b}$ is diagonal, $f$ is of the form \eqref{f}, standard Sobolev inequalities with the standard notation for the Sobolev spaces and the definition of the space $X$ in \eqref{XYspaces} to estimate $\|w_1\|_{H^2}\ls \| u \|_{X}^2$, $\|w_2\|_{H^2}\ls \| J(u) \|_{H^1} \ls \| u \|_{H^2}^2$,  $\|w_3\|_{H^2}\ls \| u \|_{X}^2$  and $|w_4|\ls \| u \|_{X}^2$. These estimates and the triangle inequality $ \| u \|_{X}^2\ls  \| v \|_{X}^2+ \| w \|_{X}^2$ result in nonlinear inequalities giving \eqref{w-prop2}.\footnote{This result could be also derived from the proof of the implicit function theorem. Indeed,  at the core of the proof is the fixed point problem set at the product of the balls $ \| w_i\|_{X_i}\le R_i$ in  $\bar X := \bar P X$. Choosing the radii of the ball to be proportional to $\|v\|_{X_i}^{2}$ would give the desired estimate.}

For \eqref{w1-est}, using  that $A_{n b}$  and $f$ are of the form \eqref{Anb}  and \eqref{f}, we find $w_1=-(- \Delta_{ a^n} - \frac nb \chi)^{-1} h(u)$. 
 Furthermore, since $v_\lam (0)  = -\lam \chi$ and $\lambda= \frac nb+\theta$, we have that $v_\lam (|\psi|^2)  + \frac nb \chi=- \chi(\lam- \frac nb) + O(|\psi|^2)=- \chi \theta+ O(|\psi|^2)$. Next, by \eqref{null-dF}, the partition of unity, $P+\bar P=\one$, breaks $\psi$ and $\al$ as $\psi=s \psio+ \psi^\perp$  and $\al=\lan  \al\ran + \al^\perp$, where, recall,  $s:=\frac{1}{\|\psio\|^2} \langle \psio, \psi \rangle$ and $ \lan  \al\ran:=\frac{1}{|\Om|}\int_{\Om} \al$, and $\psi^\perp$ and $\al^\perp$ are defined by this relations (see \eqref{Pu-expl} for more details). 
We use these decompositions, \eqref{h-def} and standard Sobolev inequalities to estimate 
\begin{align}\notag \| h(u) -  h_1(u)\|_{L^2}\ls & \| u \|_{X}^3+ |\theta| \| \psi \|_{L^2} + |\lan  \al\ran| \| \psi^\perp \|_{H^1}\\
\notag & + \| \al^\perp \|_{H^1}\| \psi \|_{H^2}+ \| a_0 \|_{H^1}\| \psi \|_{H^2}\end{align}
where $h_1(u):=2i s \lan  \al\ran \cdot\nabla_{\an}\psio$. Since $\psi^\perp, \al^\perp, a_0$ and $\theta$ contribute only to $w:=\bar P u$, this, together with $w_1=-(-  \Delta_{ a^n} - \frac nb \chi)^{-1} h(u)$, gives  
\begin{align}\label{w1-h-est'}\|w_1-w_1'\|_{H^2}\ls \| (h(u) -  h_1(u))\|_{L^2}\ls   
 \| u \|_{X}^3+\| w \|_{X}\| u \|_{X}\end{align}
where, recall, $w_1':=-2i s \lan  \al\ran \cdot (- \Delta_{ a^n} - \frac nb \chi)^{-1}\nabla_{\an}\psio$. Finally, taking into account $\|w_2\|_{H^2}\ls \| u \|_{X}^2$,  $\|w_3\|_{H^2}\ls \| u \|_{X}^2$  and $|w_4|\ls \| u \|_{X}^2$, we obtain 
\begin{align}\label{w1-h-est}\|w_1-w_1'\|_{H^2}\ls \| (h(u) -  h_1(u)) \|_{L^2}\ls |s|  |\lan  \al\ran| \| \psi \|_{H^1}+ \| u \|_{X}^3.\end{align}
  \DETAILS{and expression \eqref{h-def} and the definition of the space $X$ in \eqref{XYspaces} to estimate standard Sobolev inequalities with the standard notation for the Sobolev spaces to conclude that   $\|w_1\|_{H^2}\ls \| h(u) \|_{L^2} \ls \| u \|_{H^2}^3$, $\|w_2\|_{H^2}\ls \| J(u) \|_{H^1} \ls \| u \|_{H^2}^2$,   and $\|w_3\|_{H^2}\ls \| u \|_{H^2}^2$.}
   Since $|s|, |\lan  \al\ran|, \| \psi \|_{H^1}\ls \| v \|_{X}$ and $u=v+w$, with $\| w \|_{X}\ll \| v \|_{X}$, this gives 
 \eqref{w1-est} with $m=0$. The  \eqref{w1-est} with $m>0$ is proven similarly.  
 \end{proof}

    We substitute the solution $w = w(b, v)$ into \eqref{LS:eqn1} and see that the latter equation 
     in a neighbourhood of $(\chi, 0)$ is equivalent to the equation  (the \emph{bifurcation equation})
           \begin{equation}  \label{bif-eqn}
        \gamma(b, v):= PF_{n b} (v + w(b, v)) = 0.
    \end{equation}
    Note that 
     $\gamma : \R \times K \to K$. 
  Thus we have shown the following
  \begin{corollary}\label{cor:reduct-fd}
Let $ u=(\psi, \al, a_0, \theta)$.  In a neighbourhood of $(\chi, 0)$ in $\R \times X$, the pair $(b, u)$ solves  Eq  \eqref{Feq}  (and therefore \eqref{CSeqs-stat-resc}),     if and only if $(b, v)$, with $v = P u$, solves
    \eqref{bif-eqn}. Moreover, the solution $u$ of \eqref{Feq} can be reconstructed from the solution $v$ of \eqref{bif-eqn} according to the formula
     \begin{equation} \label{uvw}
     u =v+ w(b, v),
    \end{equation} 
where $w = w(b, v)$ is the unique solution to Eq \eqref{LS:eqn2} described in Lemma \ref{w-est}.   
    \end{corollary}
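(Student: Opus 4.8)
The statement to prove is Corollary~\ref{cor:reduct-fd}, which packages the Lyapunov--Schmidt reduction into a clean equivalence. The plan is to deduce it directly from the projection decomposition $u = v + w$ together with Lemma~\ref{w-est}, so almost all the analytic work has already been done; what remains is essentially bookkeeping.

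First I would recall the setup: $P$ is the orthogonal projection onto $K = \Null_X A_{n\chi}$, $\bar P = I - P$, and writing $u = v + w$ with $v = Pu$, $w = \bar P u$, the single equation $F_{n b}(u) = 0$ is equivalent to the pair \eqref{LS:eqn1}--\eqref{LS:eqn2}, since $Y = PY \oplus \bar P Y$ and $F_{n b}(u) \in Y$. This splitting is exact and requires no smallness. Next, for $(b,v)$ in the neighbourhood $U$ of $(\chi, 0)$ furnished by Lemma~\ref{w-est}, that lemma gives a \emph{unique} $w = w(b,v) \in \bar X$ solving \eqref{LS:eqn2}; so on $U$, solving \eqref{LS:eqn2} is the same as imposing $w = w(b,v)$. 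Substituting this into \eqref{LS:eqn1} turns the remaining equation into $PF_{n b}(v + w(b,v)) = 0$, which is exactly the bifurcation equation \eqref{bif-eqn}, $\gamma(b,v) = 0$, a map $\R \times K \to K$. Conversely, if $(b,v)$ solves \eqref{bif-eqn}, then setting $u = v + w(b,v)$ we have $\bar P F_{n b}(u) = 0$ by construction of $w$ and $P F_{n b}(u) = \gamma(b,v) = 0$ by hypothesis, so $F_{n b}(u) = 0$.

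I would then address the one point that needs a word of care: matching the neighbourhoods and confirming $v = Pu$ for the reconstructed $u$. Since $w(b,v) \in \bar X$, we have $P(v + w(b,v)) = v$, so the reconstruction formula \eqref{uvw} is consistent with $v = Pu$; and after possibly shrinking $U$ we may assume that whenever $(b,u)$ solves \eqref{Feq} with $u$ near $0$, the pair $(b, Pu)$ lies in $U$ so that Lemma~\ref{w-est} applies and forces $\bar P u = w(b, Pu)$ by uniqueness. This closes the loop. Finally, Proposition~\ref{prop:reconstr} lets us append ``and therefore \eqref{CSeqs-stat-resc}'' to the conclusion, since any solution of \eqref{Feq} satisfying the equivariance conditions \eqref{gaugeper'} automatically solves the original system.

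The content here is not an obstacle at all — the real analytic difficulty (invertibility of $A_{n b}^\perp$, the implicit function theorem application, the estimates on $w$) is entirely inside Lemma~\ref{w-est}. The only thing to be slightly careful about in writing the corollary's proof is the quantifier structure: ``for $(b,u)$ near $(\chi,0)$'' must be interpreted so that $(b, Pu)$ lands in the domain $U$ of the implicit function $w$, which is legitimate because $P$ is bounded, so the map $u \mapsto (b, Pu)$ is continuous and a small enough neighbourhood of $(\chi,0)$ in $\R \times X$ maps into $U$. With that remark the proof is a two-line consequence of the preceding lemma.
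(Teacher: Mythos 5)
Your proposal is correct and follows essentially the same route as the paper, which treats the corollary as an immediate consequence of the splitting of $F_{n b}(u)=0$ into \eqref{LS:eqn1}--\eqref{LS:eqn2} and the unique solvability of \eqref{LS:eqn2} from Lemma \ref{w-est}. Your added remarks on matching neighbourhoods and on $P(v+w(b,v))=v$ are just careful bookkeeping of what the paper leaves implicit.
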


Solving the bifurcation equation \eqref{bif-eqn} is a subtle problem. 
We do this in Section \ref{sec:bifurc-dimK=1}.


\section{Existence result}  \label{sec:bifurc-dimK=1}

\begin{theorem} \label{thm:bif-thm-Kdim1}
Assume  $n=1$. 
Then, 
for every $\latn$,     there exist $\epsilon > 0$ and a branch, $ (b_s, u_s),\ u_s:=  (\psi_s, \al_{s}, a_{0 s}, \theta_s)$, $s \in [0, \sqrt{\e})$, 
    of nontrivial, 
    $\hat\lat$-equivariant solutions of  the rescaled ZHK system \eqref{CSeqs-stat-resc} bifurcating from $(\chi, 0)$.  
    It is unique modulo the global gauge symmetry 
     in a sufficiently small neighbourhood of $(\chi, 0)$ 
     in $\R \times X$, 
    and satisfies
    \begin{equation}\label{s-expans1}
    \begin{cases}
        b_s =  \chi + O (s^2),\\
       \psi_s = s\psio + O_{\cH^{2}_{n}}(s^3),\\ 
  \al_{s}=    O_{\vec\cH^{2}}(s^2),\\ 
     a_{0 s} =   O_{\cH^{2}_{0}}(s^2),\\ 
      \theta_s =    
       O (s^2), 
            \end{cases}
    \end{equation}
    where $\psio$, 
   normalized as $\lan|\psio|^2\ran =1$, satisfies the equation
 \begin{align} \label{psi0-eq} (- \Delta_{ a^n} -  n) \psio=0  \end{align}
    
    \end{theorem}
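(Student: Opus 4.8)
The plan is to solve the bifurcation equation \eqref{bif-eqn}, $\gamma(b,v) = PF_{n b}(v + w(b,v)) = 0$, using the one-dimensionality of $K = \Null_X A_{1\chi}$ established in Proposition \ref{prop:dFspec} (for $n=1$, $\dim\Null(-\Delta_{a^n}-n) = 1$, so $K = \mathbb{C}\psio \times \{0\} \times \{0\} \times \{0\}$, a real two-dimensional space). By the gauge covariance of $F_{n b}$ (Proposition \ref{prop:F-propert}(\ref{F-gauge-inv})) and the corresponding covariance of $w(b,\cdot)$, we may reduce to $v = s\psio$ with $s \ge 0$ real, since $\psio$ is determined up to a phase. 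Thus $\gamma$ restricted to this ray is a scalar function; using Proposition \ref{prop:F-propert}(\ref{psiF-real}) the relevant component is $\langle \psio, F_{1 b1}(s\psio + w(b, s\psio))\rangle \in \R$, so the bifurcation equation becomes a single real equation $\gamma_1(b,s) = 0$ in two real unknowns $(b,s)$.

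First I would Taylor-expand $\gamma_1(b,s)$ near $(\chi, 0)$. The trivial solution gives $\gamma_1(b,0) = 0$ for all $b$, so $\gamma_1(b,s) = s\,\tilde\gamma(b,s)$ with $\tilde\gamma$ real-analytic (dividing out the trivial branch). It then remains to solve $\tilde\gamma(b,s) = 0$. I would compute $\tilde\gamma(\chi, 0) = 0$ — this is exactly the condition $b = \chi$ at which $A_{1b}$ has nontrivial kernel (Proposition \ref{prop:dFspec}, eq. \eqref{null-Delta-ab}) — and then show $\partial_b\tilde\gamma(\chi,0) \ne 0$. The latter derivative is essentially $\frac{1}{\|\psio\|^2}\langle\psio, \partial_b(-\Delta_{a^n} - \tfrac{n}{b}\chi)\psio\rangle = \frac{n\chi}{b^2}\frac{\langle\psio,\psio\rangle}{\|\psio\|^2}\big|_{b=\chi} = \frac{n}{\chi} \ne 0$, using that $\psio$ lies in the kernel and differentiating the eigenvalue. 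With $\partial_b\tilde\gamma(\chi,0) \ne 0$, the Implicit Function Theorem yields a unique real-analytic function $b = b_s$ defined for small $s \ge 0$ with $b_0 = \chi$ and $\tilde\gamma(b_s, s) = 0$; since $\tilde\gamma$ is even-analyzable through the structure of $F$ (the nonlinearity $f_{1b}$ contributes at quadratic and higher order, and the leading correction to $b_s$ comes from the $s^2$ term), we get $b_s = \chi + O(s^2)$.

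Next I would reconstruct the full solution. Corollary \ref{cor:reduct-fd} gives $u_s = s\psio\,\hat e + w(b_s, s\psio)$ where $\hat e$ denotes the embedding of $\psio$ into the first factor. The estimates \eqref{w-prop2} from Lemma \ref{w-est} give $\|w_i(b_s, s\psio)\|_{X_i} = O(s^2)$ for $i = 2,3,4$, yielding $\al_s = O_{\vec\cH^2}(s^2)$, $a_{0s} = O_{\cH^2_0}(s^2)$, $\theta_s = O(s^2)$; and \eqref{w1-est} gives $\|w_1 - w_1'\|_{X_1} = O(s^3)$ with $w_1' = -2is\langle\al\rangle\cdot(\cdots)$, but $\langle\al_s\rangle = \langle\al\rangle$ component vanishes at this order because $\al_s = O(s^2)$ means $\langle\al\rangle = O(s^2)$ and the product $s\langle\al\rangle = O(s^3)$, so $w_1 = O_{\cH^2_n}(s^3)$, giving $\psi_s = s\psio + O_{\cH^2_n}(s^3)$. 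Uniqueness modulo the global gauge symmetry follows from the uniqueness clause in the Implicit Function Theorem (for $w$) combined with the uniqueness of $b_s$ and the fact that we reduced the kernel direction to a single ray by gauge-fixing. Finally, Proposition \ref{prop:reconstr} guarantees these solve the original system \eqref{CSeqs-stat-resc} (not merely the projected $F_{nb} = 0$), and the normalization $\langle|\psio|^2\rangle = 1$ with \eqref{psi0-eq} is just the choice of scale for $\psio$.

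The main obstacle I anticipate is verifying $\partial_b\tilde\gamma(\chi, 0) \ne 0$ rigorously, i.e. controlling how $\partial_b w(b, s\psio)$ and the $b$-dependence inside $F_{1b}$ interact. Because $w = O(\|v\|^2) = O(s^2)$ and $\partial_b w = O(s^2)$ as well (eq. \eqref{w-prop2} with $m=1$), the contribution of $w$ to $\partial_b\tilde\gamma$ at $(\chi,0)$ vanishes, so the nondegeneracy is genuinely governed by the linear operator $A_{1b}$ alone — but making this decoupling precise requires carefully tracking that the projected nonlinear remainder is $o(s)$ uniformly in $b$ near $\chi$, which is where the analyticity and the Sobolev estimates from Lemma \ref{w-est} do the real work.
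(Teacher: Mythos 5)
There is a genuine gap: you have misidentified the kernel $K$. By \eqref{null-dF} (via Lemma \ref{lem:Mspec}, $\Null M = \{\text{const}\}\times\{0\}$ in the spaces \eqref{vecL2-space}--\eqref{L20-space}), the null space of $A_{1\chi}$ is $\Null(-\Delta_{a^n}-n)\times\{\text{const}\}\times\{0\}\times\{0\}$, i.e. it contains the two real dimensions of \emph{constant vector potentials} $\lan\al\ran\in\R^2$ in addition to $\C\psio$. So $K$ is real four-dimensional, not two-dimensional, and the projection is $Pu=(s\psio,\lan\al\ran,0,0)$ as in \eqref{Pu-expl}. Consequently the bifurcation equation is not a single scalar equation $\gamma_1(b,s)=0$: there is a second, vector-valued component $\gamma_2(b,s,\mu)=\lan J(v+w)\ran=0$ (see \eqref{tildegam2}, \eqref{gam2}) that determines $\mu:=\lan\al\ran/s$. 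Your argument that $\lan\al\ran=O(s^2)$ ``because $\al_s=O_{\vec\cH^2}(s^2)$'' is circular: $\lan\al\ran$ belongs to $v=Pu$, not to $w=\bar Pu$, so it is a free unknown of the reduced problem and cannot be controlled by the estimates \eqref{w-prop2} on $w$; it must be \emph{solved for}. This also undermines your derivation of $\psi_s=s\psio+O_{\cH^2_n}(s^3)$, which requires $s\lan\al\ran=O(s^3)$ as input to \eqref{w1-est}.

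Handling $\gamma_2=0$ is the genuinely nontrivial part of the paper's proof and requires two ingredients you have not supplied: (i) the identity $\im(\bar\psio\nabla_{a^n}\psio)=\tfrac12\curl^*|\psio|^2$ (Lemma \ref{lem:current-repr}), which makes the $O(s^2)$ term of $\lan J\ran$ vanish upon averaging so that $\gamma_2=s^3\big((\one+\lan T\ran)\mu+O(s)\big)$; and (ii) the positive-definiteness of the matrix $\lan T\ran$, \eqref{Tinv}, which comes from writing $\lan T_{ij}\ran=2\lan\nabla_{a^n j}\psio,(-\Delta_{a^n}-\tfrac nb\chi)^{-1}\nabla_{a^n i}\psio\ran$ and using positivity of the resolvent on $\{\psio\}^\perp$. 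Only then does one get the unique solution $\mu_s=O(s)$, hence $\lan\al\ran=s\mu_s=O(s^2)$, and the claimed expansions. Your treatment of the $\gamma_1$ equation (dividing out the trivial branch, computing $\partial_b\tilde\gamma_1(\chi,0)=n\chi/b^2|_{b=\chi}\neq0$, and invoking the implicit function theorem) does match the paper, but by itself it does not close the argument.
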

 \begin{proof} 
 As 
 in the last section, $X =\cH^{2}_{1}\times \HA{2}{\Div,0}\times \cH^{2}_{0}\times \R$ and $Y = \Lpsi{2}{1}\times \LA{p}{\Div,0}\times L^{2}_{0}\times \R$.
 Our goal is to solve the equation \eqref{bif-eqn} for $v$.  
First, we  prove the gauge invariance of $\gamma$: 

    \begin{lemma}\label{lem:w-gam-gaugeinv}
        For every $\delta \in \R$, $w(b, e^{i\delta}v) = T_\del w(b, v)$ and $\gamma(b, e^{i\delta}v) = e^{i\delta} \gamma(b, v)$.
    \end{lemma}
    \begin{proof}
        We first check that $w(b, e^{i\delta}v) = T_\del w(b, v)$. We note that by definition of $w$,
    \begin{align*}G(b,  e^{i\delta} v, w(b, e^{i\delta} v)) = 0, \end{align*} 
 where $G$ is defined in \eqref{G},   but by property  in Proposition \ref{prop:F-propert}\eqref{F-gauge-inv}, 
     we also have
        \[G(b,  e^{i\delta} v,  e^{i\delta} w(b, v)) = T_\del  G(b,v, w(\lambda,v)) = 0.\] The uniqueness of $w$
        then implies that $w(b, e^{i\delta} v) = T_\del w(b, v)$.
        Using that $e^{i\delta} v=T_\del v$, we can now verify that
        \begin{align*}
            \gamma(b, e^{i\delta} v) = PF_{n b} (e^{i\delta} v + w(b,  e^{i\delta} v))&= PF_{n b} (T_\del (v + w(\lambda,  v)))\\
                &= PT_\del F_{n b} (v + w(b,  v)).         \end{align*}
 Since  
$ P$ is of the form $P=P_1\oplus 0$, where $P_1$ acts  on the first component, we have $PT_\del F_{n b} (v + w(b,  v))= e^{i\delta} PF_{n b} (v + w(b, v) ) = e^{i\delta}\gamma(b,v)$, which implies the second statement.   \end{proof}

    By Proposition \ref{prop:dFspec},  we have 
   \begin{align}            \label{NullA-NullL}  \Null_{X } A_{n \chi}= \Null_{\cH^{2}_{1} } (-\Delta_{\an}-n)\times \{\text{const}\} \times \{0\} \times \{0\}. 
\end{align}    
According to \eqref{null-Delta-ab} - \eqref{null-Delta-ab'}, the assumption $n=1$ gives
\begin{align} \label{Null-dim1-cond}  \dim_{\C}\Null_{\cH^{2}_{n}} (-\Delta_{\an}-n)=1. 
\end{align}   
 This relation and   \eqref{NullA-NullL} 
  yield that the projection $P$  can be written, for $u=(\psi, \al, a_0, \theta)$, as 
    \begin{align}    \label{Pu-expl}
        P u 
        = (s \psio, \lan  \al\ran, 0, 0)\ \textrm{with}\ & s:=\frac{1}{\|\psio\|^2} \langle \psio, \psi \rangle,\\ 
        \notag &\psio \in \Null_{\cH^{2}_{1}} (-\Delta_{\an} - n),\ \|\psio\|=1, 
    \end{align}
where, recall, $ \lan  \al\ran:=\frac{1}{|\Om|}\int_{\Om} \al$. Hence,  we can  write the map $\gamma$ in the bifurcation equation \eqref{bif-eqn} as $\gamma=(\psio \g_1, \g_2, 0, 0)$, where $\gamma_i : \R \times \C \to \C$ are given by   
\begin{align} \label{tildegam1}
&\gamma_1(b, s, \mu) := \langle \psio, F_{n b 1} (v + w(b, v)) \rangle\\
\label{tildegam2} &\gamma_2(b, s, \mu) := \langle F_{n b 2} (v + w(b, v)) \rangle, \end{align}
with $(F_1,  F_2, F_3,  F_4)=F$,  $\mu :=\lan  \al\ran/s\in \R^2$ and $v = v_{s, \mu}$, where  
 \begin{align}    \label{vsmu}v_{s, \mu}:=(s\psio, s\mu, 0, 0). \end{align}  Now,  equation \eqref{bif-eqn} is equivalent to the equations 
\begin{align} \label{gami-eqs}
&\gamma_1(b, s, \mu) =0,\ \qquad \gamma_2(b, s, \mu) =0. \end{align}

First, we consider the equation $\gamma_2(b, s, \mu)=0$. By the second equation in \eqref{F} - \eqref{Feq} 
 and the relations $\lan P' J(\psi, \al) \ran= \lan \one, P' J(\psi, \al) \ran=\lan P' \one, J(\psi, \al) \ran$ and $P'  \one =\one$, we have
   \begin{equation} \label{gam2} \gamma_2(b, s, \mu)= \lan P' J(v + w(b, v)) \ran= \lan J(v + w(b, v)) \ran.
\end{equation}
According to \eqref{w1-est} and $\lan  \al\ran=s \mu$, we write $w=w'+w''$, where $w':=(- i s^2\mu  \cdot \om_1, 0, 0, 0)$, with  $\om_1:=(-  \Delta_{ a^n} - \frac nb \chi)^{-1}\nabla_{\an}\psio$. Then, by \eqref{w-eq}, \eqref{uvw}, \eqref{vsmu}, \eqref{w-prop2} and \eqref{w1-est}, we have
 \begin{align}\label{w-est''}\p_\xi^m w''(b, v)=(O_{\cH_{n}^2}(s^3), O_{\HA{2}{\Div,0}}(s^2), O_{\cH_{0}^2}(s^2), O(s^2)), \end{align} 
 for $m=0, 1$ and $\xi=b, \mu$. 
 Hence, we have for  $J (\psi, \al) := \im (\bar \psi \n_{ a_n+ \al} \psi)$,
 \begin{align}\label{J-expan} 
 J(v + w(b, v))=&  s^2\im (\bar \psio \n_{a^n} \psio) +  s^3 (|\psio|^2 \mu +T\mu)\notag\\ &
+O_{\cH^{2}_{n}}(s^4)+O_{\HA{2}{\Div,0}}(s^4),  \end{align} 
where  $T \mu = \im (\bar \psio \mu \cdot \n_{ \an}  i\om_1 
 -i \bar \om_1 \mu \cdot \n_{\an}  \psi_0 )$.

 Recall, that $\curl^*$ maps scalar functions, $f,$ into vector-fields, $\curl^* f = (\p_2 f, -\p_1 f)$. The next lemma, due to \cite{TS1}, shows that the leading term on the r.h.s. of \eqref{J-expan} 
drops out under taking the average.
\begin{lemma}\label{lem:current-repr}  
\begin{equation} \label{eq:J.co.exact} 
  \im (\bar\psio \nabla_{a^n}  \psio) = \frac{1}{ 2} \curl^* |\psio |^2.
\end{equation}
  \end{lemma}
 For the reader's convenience, the proof of this lemma is given in Appendix \ref{sec:current-repr}. (Due to the different sign in the covariant gradient  $\n_{ A}$, the sign here differs from the one in \cite{TS1}. Note that, unlike the Ginzburg-Landau equation,s the ZHK equations are not  invariant under the transformation $(\psi, A) \ra (\bar\psi, -A)$.) 
 By \eqref{eq:J.co.exact}, $\lan \im (\bar\psio \nabla_{a^n}  \psio)\ran =0$. 
This, the definition  $\tilde\gamma_2(b, s, \mu) := s^{-3} \gamma_2(b, s, \mu)$, \eqref{gam2}, \eqref{J-expan} and $\lan |\psio|^2\ran = 1$ give 
\begin{equation} \label{tilde-gam2-exp}\tilde\gamma_2(b, s, \mu)=   (\one + \lan T\ran)\mu  +O(s).\end{equation} 
 Furthermore, $\tilde\gamma_2(b, 0, 0) =0$ (for any $\lam$) and it is easy to see that $\p_\mu\tilde \gamma_2(b, s, \mu)=\one + \lan T\ran+O(s)$. Next, we claim that 
\begin{align}\label{Tinv} 	 \text{The matrix $\lan T\ran$ is positive definite and therefore $\one + \lan T\ran$ is invertible.}
 \end{align} 
 To show that the matrix $\lan T \ran$ is positive definite, we note first that  
 $T_{ij}:=-\re (\bar \psio \n_{ \an j}  \om_{1 i} - \bar \om_{1 i} \n_{ \an j} \psio)$, with $\om_1:=(- \Delta_{ a^n} - \frac nb \chi)^{-1}\nabla_{\an}\psio$.
 Using this, we compute 
  \begin{align}\lan T_{ij} \ran &:=-\re \lan \bar \psio \n_{ \an j}  \om_{1 i} - \bar \om_{1 i} \n_{ \an j} \psio \ran\\
 	&=-\re [\lan  \psio, \n_{ \an j} \om_{1 i} \ran - \lan   \om_{1 i}, \n_{ \an j} \psio \ran]\\
 	&=2\lan  \n_{ \an j} \psio, \om_{1 i} \ran. \end{align}
Since  $\om_1:=(- \Delta_{ a^n} - \frac nb \chi)^{-1}\nabla_{\an}\psio$, this gives 
\begin{align}\lan T_{ij} \ran=2\lan  \n_{ a_n j} \psio, (- \Delta_{ a^n} - \frac nb \chi)^{-1}\nabla_{\an i}\psio \ran. \end{align} 
   Now since $\langle \psi_0, \nabla_{a^n} \psi_0 \rangle = 0$, and $(- \Delta_{ a^n} - \frac nb \chi)^{-1}$ is positive for vectors orthogonal to $\psi_0$, the last relation shows that the matrix $\lan T \ran$ is positive definite.

   By \eqref{tilde-gam2-exp} and \eqref{Tinv}, %
  the equation $\tilde \gamma_2(b, s, \mu)= 0$ has a unique solution, $\mu_s$, for $\mu$, provided $s$ is sufficiently small, and this solution satisfies 
\begin{align} \label{mus-expan}& \mu_s=O(s). \end{align} 

 Estimates \eqref{w1-est} and \eqref{mus-expan} and the relation  $\lan  \al\ran= s\mu$ give $\|\p_b^m w_1\|_{X_1}=O(s^{3}),\ m=0, 1,$ and we can upgrade \eqref{w-est''} to 
\begin{align}\label{w-est'}\p_\xi^m w(b, v)=(O_{\cH_{n}^2}(s^3), O_{\HA{2}{\Div,0}}(s^2), O_{\cH_{0}^2}(s^2), O(s^2)). \end{align}

Next, we address the equation $\gamma_1(b, s, \mu)=0$. First, we show that $\gamma_1(b, s) \in \R$ for $s \in \R$.
Using that the projection $\bar P$ is self-adjoint,  $\bar P w(b, v) = w(b, v)$ and that $w(b, v)$ solves  $ \bar P F_{n b} (v + w)=0$, we find
	\begin{align*}
		\langle w(b, v), F_{n b} (v+ w(b, v)) \rangle
		= \langle w(b, v), \bar P F_{n b} (v + w(b, v)) \rangle
		= 0.
	\end{align*}
	Therefore, recalling $v\equiv v_{s, \mu}:=(s\psio, s\mu, 0, 0)$ and using that $\langle v, F\rangle=s\langle \psio, F_1 \rangle + 
	s\mu\langle F_2 \rangle$ and $\langle F_{n b 2} (v + w (b, v)) \rangle=- \gamma_2(b, s, \lan  \al\ran)=0$, we have, for $s \neq 0$,
	\begin{align*}
		\langle \psi_0, F_{n b 1} (v + w(b, v)) \rangle
	&= s^{-1}\langle v, F_{n b} (v + w (b, v)) \rangle	\\ 
	&= s^{-1}\langle v + w (b, v), F_{n b} (v + w (b, v)) \rangle,
	\end{align*}
	and this is real by property \eqref{psiF-real} in Proposition \ref{prop:F-propert} and the fact that the part $\langle  w_2 (b, v),$ $ F_{n b 2} (v + w (b, v)) \rangle$ of the inner product on the r.h.s.  is real. 
	
	Next, by Lemma \ref{lem:w-gam-gaugeinv}, $\gamma_1(b, s, \mu) = e^{i\arg s} \gamma_1(b, |s|, \mu)$. Therefore $\gamma_1(b, s, \mu)=0$ is equivalent to the equation
\begin{equation} \label{bif-eqn1'} \gamma_1'(b, s, \mu) =0
\end{equation}
for the restriction $\gamma_1' : \R \times \R \times \R \to \R$ of the function $\gamma_1$ to $\R \times \R \times \R$, i.e., for real $s$. 

Now, recall that the map $F =(F_1, F_2, F_3, F_4)$, defined in \eqref{F}, 
can be written as \eqref{F-deco}.
Using $w(\lambda, v)=O(s^3)$,  \eqref{w1-h-est},  $\lan  \al\ran = s\mu$, by the definition $\mu :=\lan  \al\ran/s$ and \eqref{mus-expan}, and recalling $v\equiv v_{s, \mu}:=(s\psio, s\mu, 0, 0)$,  we  find  \begin{align}\label{F1-expan} 
& F_{n b 1} (v+ w(b, v)) =  s(-\Delta_{a^n}   - \frac nb \chi) \psio +O_{\cH_{n}^2}(s^3). 
 \end{align}
Using  this, $(-\Delta_{a^n} - n) \psio=0$ 
 and denoting the first component of $P$  by $P_1$, we obtain 
\begin{align} \label{PF-expan}
\DETAILS{P F(\lambda, v) = & s (\lam\chi - n) \psi_0 
+O_{norm??}(s^3),\ 
|s|^2 \lan \al\ran + O(s^4)\big).} 
P_1 F_{n b 1} (v+ w(b, v)) = &  s (n-\frac nb\chi) \psio 
+O_{\cH_{n}^2}(s^3).\end{align}
If we write $\gamma_1'(b, s, \mu)= s \tilde\gamma_1(b, s, \mu)$, then, since $\lan |\psi_{0}|^2\ran = 1$, we have $\tilde\gamma_1(b, s, \mu)=n-\frac nb\chi+O(s^2)$. Since $\tilde\gamma_1(n, 0, \mu) =0$ (for any $\mu$) and, as easy to see, $\p_b\tilde \gamma_1(b, s, \mu)=\frac {n}{b^2}\chi +O(s^2)$, the equation $\tilde \gamma_1(b, s, \mu)= 0$  
  has the unique solution, $\lam_s$, and this solution is of the form 
 \begin{align} \label{bs-expan} 
 b_s=\chi  +  O(s^2). \end{align}

Now, we know that $(b, u)$ solves $F(b, u) = 0$ if and only if $u=v +w(b, v),\ v\equiv v_{s, \mu},$ and $b, s, \mu$ solve $\gamma(b, s, \mu) = 0$ (see \eqref{bif-eqn}), or $\gamma_i(b, s, \mu) = 0,\ i=1, 2$ (see \eqref{gami-eqs}).  By above, near $(n, 0, 0)$, 
 Eq \eqref{gami-eqs} has two branches of solutions, $s=0$ and $b, \mu$ are arbitrary and $(b_s, s, \mu_s)$, where $b_s$ and $\mu_s$ are given by \eqref{bs-expan} and \eqref{mus-expan}, respectively, and $s$ is sufficiently small, but otherwise is arbitrary. For $s = 0$, we have $v_{s, \mu}=0$ and therefore $u = v_{s, \mu} + w(b_s, v_{s, \mu}) = 0$, which gives the trivial solution. In the other case, we have, by \eqref{vsmu}, \eqref{mus-expan} and \eqref{w-est'},
 \[u = u_s:=v_{s, \mu_s} + w(b_s, v_{s, \mu_s}) = (s \psio + O_{\cH_{n}^2}(s^3),  O_{\vec \cH^{2}}(s^2), O_{\cH_0^{2}}(s^2), O(s^2)).\] 
This gives the branch $(b_s, u_s)$ of solutions of \eqref{CSeqs-stat-resc} 
   of the form \eqref{s-expans1}. These solutions satisfy \eqref{psi-gaugeper'} - \eqref{a0-per'} 
    and are unique (apart from the normal (trivial) solution $(b, 0)$) 
     modulo the global gauge symmetry 
     in a sufficiently small neighbourhood of $(b, 0)$ 
      in $\R \times X$. \end{proof}

\section{Relation between $b$ and $s$}\label{sec:b-s-relat}

To derive Theorem \ref{thm:ALexist} from Theorem \ref{thm:bif-thm-Kdim1}, we have to establish the relation between the external magnetic field $b$ and the bifurcation parameter $s$.

Here and in what follows we use  the notation $\lan f\ran:=\frac{1}{2 \pi}\int_{\hat\Omega} f$. 
To formulate the next result we introduce
 the Abrikosov function 
\begin{align}\label{beta}
	\beta(\lat) = \frac{\avg{|\psio|^4}}{{\avg{|\psio|^2}}^2}
\end{align}
where $\psio$ is the unique solution to the equation \eqref{psi0-eq}, 
  satisfying \eqref{psi-gaugeper'} for the normalization $\hat\lat$ of $\lat$ and $n=1$. By scaling invariance and \eqref{dpsio=0}, this is the same function as the one defined in \eqref{beta'}. 
Moreover, since by the H\"older inequality, $\avg{|\psio|^2}^2\le \avg{|\psio|^4}$, we have  $\beta(\lat)\ge 1$. In what follows, we   normalize $\psio$ as 
\begin{align}\label{normal-psio}\lan|\psio|^2\ran =1.\end{align}

 We  identify $\R^2$ with $\C$, via the map $(x_1, x_2)\ra x_1+i x_2$,  and, applying a rotation, if necessary, bring any lattice $\lat$ to the form  
$\lat=\lat_{\tau, r}:=r  (\Z+\tau\Z), $ 
where 
$r>0,\ \tau\in \C$, $\im\tau > 0$, which we assume from now on. In this case $\hat\lat= \Z+\tau\Z$.

Hereafter we will assume the lattice $\cL$ is of the form $\cL = r(\Z + \tau \Z)$ where $r$ and $\im \tau$ are strictly positive.  Since $\beta(\lat)$ does not depend on scaling of the lattice, $\lat \ra r\lat, r>0,$ 
 we write \begin{align}\label{beta-tau}\beta(\tau) := \beta(\lat)\ \text{ for }\ \cL = r(\Z + \tau \Z).
\end{align}

	\begin{proposition} \label{prop:lam-b-expan-alt}  	 Let $(b_s, u_s)$ be the branch of solutions of system \eqref{CSeqs-stat-resc}  given in Theorem \ref{thm:bif-thm-Kdim1}. 
Then we have the expansions  \begin{align}\label{lam-exp} 	& \lam_s = \frac n\chi+\lam' s^2 +O(s^4),
\\ \label{b-exp}	& b_s=\chi + b' s^2 +O(s^4),
	\end{align}
where $\lam'$ and $b'$ are given by 
\begin{align}\label{lam'-alt} 	&  \lam' =\frac{1}{\chi} ((g-1)\beta(\tau) +\frac{1}{2 }), 
\\ \label{b'-alt}	& b'=-\frac \chi n (g-1)\beta(\tau). 
	\end{align}
	\end{proposition}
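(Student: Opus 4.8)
The plan is to deduce \eqref{lam-exp}, \eqref{lam'-alt} from \eqref{b-exp}, \eqref{b'-alt}, and to obtain the latter by carrying the bifurcation analysis of Section \ref{sec:bifurc-dimK=1} one order in $s$ further.

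For the reduction, recall that along the branch of Theorem \ref{thm:bif-thm-Kdim1} one has $\lam_s=\tfrac n{b_s}+\theta_s$ by \eqref{a-lam-deco}, while the fourth component of $F_{n b}(u_s)=0$, i.e. \eqref{CSeqs-stat-resc4}, gives $\theta_s=\tfrac1{2b_s}\lan|\psi_s|^2\ran$. Since $\psi_s=s\psio+O_{\cH^{2}_{n}}(s^3)$ and the correction $w_1:=\psi_s-s\psio$ is the first component of $\bar P u_s$ and hence $L^2$-orthogonal to $\psio$, the Pythagorean identity together with $\lan|\psio|^2\ran=1$ gives $\lan|\psi_s|^2\ran=s^2+O(s^4)$, so $\theta_s=\tfrac{s^2}{2\chi}+O(s^4)$ (using $b_s=\chi+O(s^2)$). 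Expanding $\tfrac n{b_s}=\tfrac n\chi-\tfrac{nb'}{\chi^2}s^2+O(s^4)$ via \eqref{b-exp} and adding $\theta_s$ yields \eqref{lam-exp} with $\lam'=\tfrac1{2\chi}-\tfrac{nb'}{\chi^2}$, and substituting \eqref{b'-alt} reproduces \eqref{lam'-alt}. Thus it suffices to prove \eqref{b-exp}, \eqref{b'-alt}.

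For $b'$ I would expand the first bifurcation function $\gamma_1(b,s,\mu)=\lan\psio,F_{n b 1}(v+w(b,v))\ran$ along the branch $(b_s,v_{s,\mu_s})$ to order $s^3$, sharpening the estimate \eqref{F1-expan} used in the proof of Theorem \ref{thm:bif-thm-Kdim1}. Writing $F_{n b 1}(u)=-\Delta_{a^n+\al}\psi+v_\lam(|\psi|^2)\psi-a_0\psi$ and integrating the magnetic Laplacian by parts, the ingredients are: (i) the Taylor expansion $v_\lam(|\psi|^2)+\tfrac nb\chi=-\chi\theta+g|\psi|^2+O(|\psi|^4)$, from $v(0)=-\chi$, $v'(0)=g$ and $\lam=\tfrac nb+\theta$; (ii) $(-\Delta_{a^n}-n)\psio=0$ and $\lan\psio,\psi_s\ran=s\|\psio\|^2$ (the definition of $s$), which reduce the linear part of $\gamma_1$ to $s(n-\tfrac nb\chi)\|\psio\|^2$; (iii) the leading-order gauge fields, obtained from the constraint equations \eqref{CSeqs-stat-resc2} and \eqref{CSeqs-stat-resc3}: by Lemma \ref{lem:current-repr}, $\curl^*a_{0s}=\tfrac{s^2}2\curl^*|\psio|^2+O(s^3)$, hence (as $\lan a_{0s}\ran=0$) $a_{0s}=\tfrac{s^2}2(|\psio|^2-1)+O(s^3)$, while $\curl\al_s=-\tfrac12(|\psi_s|^2-\lan|\psi_s|^2\ran)=-\tfrac{s^2}2(|\psio|^2-1)+O(s^4)$ with $\divv\al_s=0$; and (iv) the current identity $2i\bar\psio\,\n_{\an}\psio=i\n|\psio|^2-\curl^*|\psio|^2$ (real part by differentiating $|\psio|^2$, imaginary part by Lemma \ref{lem:current-repr}), used with $\lan\al_s\cdot\n|\psio|^2\ran=-\lan(\divv\al_s)|\psio|^2\ran=0$ and $\lan\al_s\cdot\curl^*|\psio|^2\ran=\lan|\psio|^2\,\curl\al_s\ran$ (so that the undetermined constant part $\lan\al_s\ran=s\mu_s$ of $\al_s$ drops out).

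Collecting the $s^3$ terms: the quartic part of $V$ produces a term $g\lan|\psio|^4\ran=g\beta(\tau)$; the scalar potential $a_{0s}$ and the $\al_s$-cross term in $-\Delta_{a^n+\al_s}$ each produce terms proportional to $\lan|\psio|^4\ran-\lan|\psio|^2\ran=\beta(\tau)-1$; and the $-\chi\theta$ term of (i) produces a constant term, since $\theta_s=\tfrac{s^2}{2\chi}+O(s^4)$. Keeping careful track of the signs from the covariant derivatives and the various integrations by parts, these combine so that $s^{-1}\|\psio\|^{-2}\gamma_1(b_s,s,\mu_s)=n(1-\tfrac\chi{b_s})+(g-1)\beta(\tau)\,s^2+O(s^4)$; solving $\gamma_1(b_s,s,\mu_s)=0$ for $b_s$ then gives \eqref{b-exp} with $b'=-\tfrac\chi n(g-1)\beta(\tau)$, and hence \eqref{lam'-alt} via the reduction above. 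The hard part is precisely this $O(s^3)$ bookkeeping: the potential corrections $a_{0s}$ and $\al_s$ are themselves only known through the constraint equations and only to leading order, and it is Lemma \ref{lem:current-repr} that forces their feedback into the $\psi$-equation, together with the $\theta_s$-term and the quartic term, to collapse into the single clean multiple $(g-1)\beta(\tau)$ of the Abrikosov function --- in particular making $b'$ vanish at the self-dual value $g=1$.
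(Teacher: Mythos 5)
Your overall strategy is the same as the paper's: the paper also obtains $\lam'$ from $b'$ via $\lam=\tfrac nb+\theta$ together with $\theta'=\tfrac1{2\chi}$, and it also obtains $b'$ by pairing the first equation of \eqref{CSeqs-stat-resc} with $\psio$ and extracting the $O(s^3)$ coefficient, with the subleading fields $a_0', \al', \theta'$ supplied by the constraint equations (its Lemma \ref{prop:ZHK.asymp}). The only organizational difference is in the magnetic Laplacian term: the paper uses the factorization $-\Delta_a=4\p_a^*\p_a+\curl a$ together with $\p_{a^n}\psio=0$ and the constraint $\curl a=\lam b-\tfrac12|\psi|^2$, whereas you expand $-\Delta_{a^n+\al}$ in $\al$ and feed the cross term through Lemma \ref{lem:current-repr}; these are equivalent (and the factorization route is a useful check on the sign of the $2i\al\cdot\n_{a^n}$ term, which is easy to get wrong).

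However, the step you yourself flag as ``the hard part'' --- the $O(s^3)$ bookkeeping --- is asserted rather than carried out, and with the inputs you state it does not close to the claimed coefficient. Concretely, three $\beta$-independent constants enter at order $s^3$: $-\tfrac12\lan|\psio|^2\ran$ from $-\chi\theta_s$ in the potential expansion; $+\tfrac12\lan|\psio|^2\ran$ from the mean part of $\curl\al_s=-\tfrac{s^2}2(|\psio|^2-1)$ in the cross term (equivalently, from $\lam b=n+\tfrac{s^2}2$ in the paper's grouping); and $+\tfrac12\lan|\psio|^2\ran$ from the mean part of your $a_{0s}=\tfrac{s^2}2(|\psio|^2-1)$. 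These sum to $+\tfrac12$, not $0$, so your tally gives $s^{-1}\gamma_1=n(1-\tfrac\chi{b_s})+\bigl[(g-1)\beta(\tau)+\tfrac12\bigr]s^2+O(s^4)$ and hence $b'=-\tfrac\chi n\bigl[(g-1)\beta(\tau)+\tfrac12\bigr]$, not \eqref{b'-alt}. The paper gets the stated \eqref{b'-alt} because its Lemma \ref{prop:ZHK.asymp} takes $a_0'=\tfrac12|\psio|^2$ \emph{without} subtracting the mean, so the $a_0$-term contributes $-\tfrac12\beta$ with no constant part; your mean-zero normalization (which is what the space $\cH^1_0$ in \eqref{L20-space} actually demands, so the tension is arguably in the paper) shifts this by $+\tfrac12$. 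Before the proof is complete you must either justify the non-mean-zero choice of $a_0'$ used in \eqref{s-expans-coeff1} or account for the residual $\tfrac12$; as written, your stated ingredients and your stated conclusion are inconsistent with each other.
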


Before proceeding to the proof of Proposition \ref{prop:lam-b-expan-alt}, we derive 
  a further expansion of the bifurcation branch given in Theorem \ref{thm:bif-thm-Kdim1} in the bifurcation parameter $s$.	

	\begin{lemma} \label{prop:ZHK.asymp}
Let $(b_s, 
 u_s)$ be the branch of solutions of system \eqref{CSeqs-stat-resc} 
  given in Theorem \ref{thm:bif-thm-Kdim1}. 
Then the last three entries in $u_s=(\psi_s, \al_{s}, a_{0 s}, \theta_s)$ are of the form 
    \begin{equation}\label{s-expans2}
    \begin{cases}
  \al_{s}=   \al' s^2 +  O_{\vec\cH^{2}}(s^4),\\
     a_{0 s} = a_{0}'  s^2 +   O_{\cH^{2}_{0}}(s^4),\\   
         \theta_s = \theta'  s^2 +         O (s^4),             \end{cases}
    \end{equation}
    where $\al', a_{0}'$ and $\theta'$ satisfy s(remembering normalization \eqref{normal-psio}, we can drop $\lan|\psio|^2\ran$ below)  
 \begin{align} \label{s-expans-coeff1} 
	&   \curl \alpha'   = - \frac12 \Abs{\psi_0}^2 + \frac12 \lan|\psi_0|^2\ran, \  a_0' = \frac{1}{2 } |\psio|^2,\  \theta'  = \frac{1}{2 \chi }\lan|\psio|^2\ran.\end{align}
	     \end{lemma}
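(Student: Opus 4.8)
The plan is to obtain \eqref{s-expans2}–\eqref{s-expans-coeff1} by plugging the known leading-order form of the solution, $\psi_s = s\psi_0 + O(s^3)$ (from Theorem~\ref{thm:bif-thm-Kdim1}), into the last three equations of the rescaled ZHK system \eqref{CSeqs-stat-resc} and reading off the $s^2$-coefficients. Since Theorem~\ref{thm:bif-thm-Kdim1} already tells us $\al_s, a_{0s}, \theta_s = O(s^2)$ and $b_s = \chi + O(s^2)$, and since $w(b,v)$ is real-analytic in $(b,v)$ (Lemma~\ref{w-est}), all these quantities admit expansions in even powers of $s$; I would write $\al_s = \al' s^2 + O_{\vec\cH^2}(s^4)$, etc., with the error bounds coming directly from the analyticity and the $O(s^3)$ remainder in $\psi_s$.

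First I would treat \eqref{CSeqs-stat-resc4}, $-b\theta + \tfrac12\lan|\psi|^2\ran = 0$: substituting $\psi_s = s\psi_0 + O(s^3)$, $b_s = \chi + O(s^2)$, and $\theta_s = \theta' s^2 + O(s^4)$, and matching the $s^2$-coefficient gives $-\chi\theta' + \tfrac12\lan|\psi_0|^2\ran = 0$, i.e.\ $\theta' = \tfrac{1}{2\chi}\lan|\psi_0|^2\ran$. Next, \eqref{CSeqs-stat-resc3}, $\curl\al + \tfrac12|\psi|^2 - \tfrac12\lan|\psi|^2\ran = 0$: matching the $s^2$-coefficient yields $\curl\al' = -\tfrac12|\psi_0|^2 + \tfrac12\lan|\psi_0|^2\ran$, which together with $\divv\al' = 0$ (part of \eqref{alpha-per}) and the periodicity determines $\al'$ uniquely. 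Finally, for $a_0'$ I would use \eqref{CSeqs-stat-resc2}, $\curl^* a_0 = \im(\bar\psi\n_a\psi)$; the leading term is $s^2\im(\bar\psi_0\n_{a^n}\psi_0)$, and by Lemma~\ref{lem:current-repr} this equals $\tfrac12 s^2\,\curl^*|\psi_0|^2$, so matching gives $\curl^* a_0' = \tfrac12\curl^*|\psi_0|^2$, hence $a_0' = \tfrac12|\psi_0|^2$ up to an additive constant, which is fixed to zero by the constraint $\lan a_0\ran = 0$ built into the space $L^2_0$ in \eqref{L20-space} (note $\lan|\psi_0|^2 - \lan|\psi_0|^2\ran\ran = 0$, so one should really write $a_0' = \tfrac12(|\psi_0|^2 - \lan|\psi_0|^2\ran)$, which with normalization \eqref{normal-psio} and the convention of dropping the constant is displayed as $\tfrac12|\psi_0|^2$).

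To make the error terms rigorous rather than merely formal, I would invoke the real-analyticity of $u_s = v_{s,\mu_s} + w(b_s, v_{s,\mu_s})$ in $s$ established in Theorem~\ref{thm:bif-thm-Kdim1} (via Lemma~\ref{w-est}) together with the gauge-equivariance forcing only even powers of $s$ to appear in the gauge-invariant components; then the $O(s^4)$ remainders follow from Taylor's theorem applied to the analytic branch, combined with the $O_{\cH^2_n}(s^3)$ control on $\psi_s - s\psi_0$ when it enters quadratically (e.g.\ in $|\psi_s|^2 = s^2|\psi_0|^2 + O(s^4)$ and in $\im(\bar\psi_s\n_{a_s}\psi_s)$, where one also uses $\al_s = O(s^2)$ to control the $\al$-dependence of the covariant derivative).

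I expect the only real subtlety — not an obstacle so much as a point requiring care — to be the bookkeeping of additive constants and the precise normalization in the $a_0'$ formula: the equation $\curl^* a_0 = J$ only determines $a_0$ up to a constant, and one must use both Lemma~\ref{lem:current-repr} (to see the leading current is a pure $\curl^*$) and the mean-zero constraint defining $L^2_0$ to pin it down, keeping track of the $\lan|\psi_0|^2\ran$ terms that the stated identities absorb into the normalization \eqref{normal-psio}. Everything else is a direct order-by-order matching in the parameter $s$, using results already proven in the preceding sections.
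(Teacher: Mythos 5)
Your proposal is correct and follows essentially the same route as the paper: substitute $\psi_s = s\psi_0 + O_{\cH^2_n}(s^3)$ into the last three equations of \eqref{CSeqs-stat-resc}, match the $s^2$-coefficients, use Lemma \ref{lem:current-repr} together with the mean-zero constraint on $L^2_0$ to pin down $a_0'$, and justify the $O(s^4)$ remainders via the implicit function theorem/analyticity of the branch. Your observation that the quadratic dependence on $\psi_s$ upgrades the naive $O(s^3)$ error to $O(s^4)$, and your care with the additive constant in $a_0'$, match the paper's argument.
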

\begin{proof}
Though in the present case $n=1$, for tracking reasons, we keep $n$ in the formulae. First, assuming \eqref{s-expans2} holds, we substitute these expressions and 
 \begin{align}\label{psi-expan}      
 \psi_s = s\psio + O_{\cH^{2}_{n}}(s^3)\end{align}
 from \eqref{s-expans1}  into	
 system \eqref{CSeqs-stat-resc}  
to obtain in the leading order
	\begin{align}\label{a0-eq}
	\curl^* a_0' =  \operatorname{Im}\lan\psio, \nabla_{a^n} \psio\ran
	\end{align}
(from the second equation),  the third equation in \eqref{s-expans-coeff1} 
 (from the fourth equation and  $b = \chi + O(s^2)$), and the first equation in \eqref{s-expans-coeff1} (from the third and fourth equation).

	Due to \eqref{eq:J.co.exact} ($ \im (\bar\psio \nabla_{a^n}  \psio) = \frac{1}{ 2} \curl^* |\psio |^2$),  
	equation \eqref{a0-eq} reduces to	
	\begin{equation}\label{a0-eq2}	\curl^* a_0' =  \frac{1}{2} \curl^* \Abs{\psio}^2
	\end{equation}	
	Since $a_0' \in L^2_0(\R^2)$ and $\curl^*$ is invertible on this space, we obtain the second equation in \eqref{s-expans-coeff1}. 
Now, define the functions $\psi_1, \al_{1}, a_{0 1}$ and the number $\theta_1$ by the relations
   \begin{equation}\label{s-expans'}
    \begin{cases}
         \psi_s = s\psio + s^3\psi_1,\ 
  \al_{s}=   \al' s^2 +  s^4\al_{1},\\
     a_{0 s} = a_{0}'  s^2 +   s^4 a_{0 1},\  
         \theta_s = \theta'  s^2 +  s^4\theta_1,             \end{cases}
    \end{equation}
    where $\al', a_{0}'$ and $\theta'$ satisfy
 \eqref{s-expans-coeff1}, plug these expressions into system \eqref{CSeqs-stat-resc} to obtain a system of equations for $\psi_1, \al_{1}, a_{0 1}$ and $\theta_1$ and estimate the latter functions using this system and the implicit function theorem to obtain \eqref{s-expans2}. \end{proof}

	\begin{proof}[Proof of Proposition \ref{prop:lam-b-expan-alt}]	In this proof we omit the subindex $s$. 
	We consider equation \eqref{CSeqs-stat-resc1}, 
	which we multiply scalarly by $\psio$ to obtain
 \begin{align}\label{lam'-comp1-2}
\lan\psio,   - &  \Delta_{ a} \psi + v_\lam (|\psi|^2)   \psi-  a_0 \psi\ran=0.
\end{align}

Rescaling  relation \eqref{barp*-barp}  of Appendix \ref{sec:self-dual} and using the definition $\p_{a}  := \frac12 ((\nabla_{a})_1 - i(\nabla_{a })_2)$ (cf. \eqref{partA} and Remark \ref{rem:holom} of Appendix \ref{sec:self-dual}), we find  (cf. e.g. \cite{GS})
         \begin{align}
\label{barp*-barp-resc}     &-\Delta_{a} =4  \p_{a}^* \p_{a} + \curl a.
        \end{align} 
 Eq \eqref{barp*-barp-resc} yields $-\lan\psio,   \Delta_{ a} \psi\ran=4\lan\p_a\psio,   \p_a \psi\ran+\lan\psio,   \curl a \psi\ran$. Next, the relations  $a=a^{n}+O(s^2)$  and $\psi =s \psio +O(s^3)$ (see \eqref{s-expans1}) and the relation
 \begin{align}\label{dpsio=0}\p_{a^{n}}\psio=0, \end{align}
which follows from \eqref{psi0-eq} and \eqref{barp*-barp-resc}, give $\lan \p_a\psio,  \p_a \psi\ran=O(s^5)$. This implies $-\lan\bar\psio   \Delta_{ a} \psi\ran= \lan\curl a\psio   \psi\ran+O(s^5)$. Now,  constraint \eqref{CSeqs-stat-resc'3} ($\curl a= \lam b-\frac12 |\psi|^2 $) yields $\lan\psio,   \curl a \psi\ran= \lam b\lan\psio,   \psi\ran- \frac12  \lan\psio, |\psi|^2 \psi\ran$, which together with the previous relation and  $\psi =s \psio +O(s^3)$ implies
\begin{align}\label{Del-a-expan-2}-\lan\bar\psio   \Delta_{ a} \psi\ran= \lam b\lan\bar\psio   \psi\ran- \frac12 s^3 \lan|\psio|^4\ran+O(s^5).\end{align}

	Next, using the expansions $v_\lam(\Abs{\psi}^2)  :=\lam v(\frac{1}{\lam} \Abs{\psi}^2)  =\lam v(0) + v'(0) \Abs{\psi}^2	 + O(s^4)$ and  $\psi =s \psio +O(s^3)$, together with the notation $v(0)=V'(0)=-\chi$ and $v'(0)=V''(0)=g$, gives  	
	\begin{align}\label{v-expan-2}	
	\lam v(\frac{1}{\lam} |\psi|^2) & =-\lam \chi + s^2 g |\psio|^2	 + O(s^4).
	\end{align}	
Inserting the last two relations into \eqref{lam'-comp1-2} and using the expansions  $\psi =s \psio +O(s^3)$ and $a_0(s^2) =a'_0 s^2+O(s^4)$, we obtain 
\begin{align}\label{lam'-comp2-2}
 & \lam (b-  \chi)\lan\bar\psio   \psi\ran +  s^3(g-\frac12)\lan|\psio|^4\ran 
       -s^3 \lan a_0' |\phio|^2\ran + O(s^5) =0.
\end{align}	
By \eqref{s-expans1} and $\lam=\frac nb +\theta$ (see \eqref{a-lam-deco}), we have $\lam  =\frac{n}{\chi} + O(s^2)$. Using this and $b=\chi + b'  s^2+O(s^4)$,
 together with  \eqref{lam'-comp2-2} and the 2nd equation in \eqref{s-expans-coeff1} ($a_0' = \frac{1}{2 } \Abs{\psi_0}^2$) and Eq \eqref{psi-expan} ($ \psi_s = s\psio + O_{\cH^{2}_{n}}(s^3)$) give
\begin{align}\label{lam'-comp3-2}
  s^3\big[ \frac n\chi b' \lan|\psio|^2\ran +   (g-\frac12)\lan|\psio|^4\ran & -  \frac{1}{2 } \lan  |\phio|^4\ran\big] + O(s^5) =0.
\end{align}	
Resolving this equation for $ b' $ and using  definition \eqref{beta-tau} and   normalization $\lan|\psio|^2\ran =1$  gives equation  \eqref{b'-alt}. 

To compute $\lam' $, 
we use the definition $\lam = \frac nb+\theta$ to obtain $\lam' =-\frac{n}{\chi^2} b'+\theta'$, which together with  \eqref{b'-alt} and  the third equation in \eqref{s-expans-coeff1} ($\theta'  = \frac{1}{2 \chi }\avg{\Abs{\psi_0}^2}$) and   normalization $\lan|\psio|^2\ran =1$, gives  \eqref{lam'-alt}. 
The remainders are estimated as in the proof of Lemma \ref{prop:ZHK.asymp}. This completes the proof of Proposition \ref{prop:lam-b-expan-alt}.
\end{proof}
Proposition \ref{prop:lam-b-expan-alt} implies 
\begin{corollary}\label{cor:ub} 
For $g\ne 1$ and $s$ 
 sufficiently small, (a) the bifurcating solution exists iff 
\begin{align}\label{chi-g-b-rel}\sign(\chi-b)=-\sign b' = \sign(g-1);\end{align} 
(b) 
the equation $b=b_s$ has a unique solution, $s=s(b)$, for $s$; (c) 
the bifurcation parameter, $s$ and the external magnetic field strength (magnetic flux) $b$ are related as 
\begin{equation} \label{eq:s2mu}
s^2 =- \frac{g-1}{ b'} \mu + O(\mu^2)=\frac n\chi \beta(\tau) 
 \mu+ O(\mu^2),
\end{equation}
where, recall, $\mu:=(\chi-b)/( g-1)$ and $ b'$ is given in \eqref{b'-alt}. 
Consequently, we can express 
  the bifurcation branch, $u_s \equiv u_s^{\hat\lat}$,  given by Theorem \ref{thm:bif-thm-Kdim1},  in terms of $b$ as  \[u^{ b}\equiv u^{\hat\lat, b}:= u^{\hat\lat}_{s(b)}\] and the condition for its existence as  $\mu$ being positive and
 sufficiently small. 
   \end{corollary}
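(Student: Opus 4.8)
The plan is to derive Corollary~\ref{cor:ub} by inverting, for a fixed normalized lattice $\hat\lat$ (hence fixed $\beta(\tau)$), the relation $b=b_s$ supplied by the expansion \eqref{b-exp}--\eqref{b'-alt} of Proposition~\ref{prop:lam-b-expan-alt}. First I would record the sign of the leading coefficient: since $n=1$, $\chi>0$ and $\beta(\tau)\ge 1$, formula \eqref{b'-alt} gives $\sign b'=-\sign(g-1)$, which is nonzero precisely because $g\neq 1$. Substituting into \eqref{b-exp}, for $0<s\ll 1$ the term $b's^2$ dominates, so $\sign(b_s-\chi)=\sign b'$, equivalently $\sign(\chi-b_s)=-\sign b'=\sign(g-1)$. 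Hence no nontrivial member of the branch of Theorem~\ref{thm:bif-thm-Kdim1} can carry flux $b$ unless $\sign(\chi-b)=\sign(g-1)$; the converse --- that every such $b$ close enough to $\chi$ is attained --- falls out of the inversion below. This gives part~(a).

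For part~(b) I would invert $s\mapsto b_s$ directly on the open half-interval $s\in(0,\sqrt{\e})$, rather than passing to $s^2$ as an independent variable (which would require knowing $b_s$ is even in $s$, something we have not established). Differentiating \eqref{b-exp} gives $db_s/ds=2b's+O(s^3)$, which, after shrinking $\e$, is nonzero and of constant sign $\sign b'$ on $(0,\sqrt{\e})$. The map $s\mapsto b_s$ is real-analytic there --- it is produced in the proof of Theorem~\ref{thm:bif-thm-Kdim1} by the implicit function theorem applied to the real-analytic functions $\gamma_1,\gamma_2$, whose analyticity comes from Lemma~\ref{w-est}. It is therefore a strictly monotone real-analytic diffeomorphism of $(0,\sqrt{\e})$ onto a one-sided neighbourhood of $\chi$, namely $\{\, b:\ 0<\sign(g-1)(\chi-b)<\delta \,\}$ for some $\delta>0$; its inverse $b\mapsto s(b)$ is the unique $s$ in $(0,\sqrt{\e})$ solving $b_s=b$, and it is real-analytic on that interval. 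This is part~(b), and it exhibits the existence condition as exactly ``$\mu=(\chi-b)/(g-1)$ positive and sufficiently small''. Setting $u^{b}:=u^{\hat\lat}_{s(b)}$ is then well-defined, and inherits uniqueness modulo the global gauge symmetry from Theorem~\ref{thm:bif-thm-Kdim1}.

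For part~(c) I would solve \eqref{b-exp} for $s^2$. The relation $b_s-\chi=b's^2+O(s^4)$ already forces $s^2=O(|b-\chi|)=O(|\mu|)$, whence $s^2=\frac{1}{b'}(b-\chi)+O(s^4)=\frac{1}{b'}(b-\chi)+O(\mu^2)$; inserting $b-\chi=-(g-1)\mu$ yields the first expression in \eqref{eq:s2mu}, and substituting \eqref{b'-alt} for $b'$ yields the second. The remainder estimates here are handled exactly as in the proofs of Lemma~\ref{prop:ZHK.asymp} and Proposition~\ref{prop:lam-b-expan-alt}, so nothing new is required. The statement carries no genuine obstacle: it is a one-variable inversion of a smooth function whose first derivative is made non-degenerate by a single differentiation. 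The only point needing care is avoiding the unjustified assumption of evenness in $s$, which is why I would argue through the monotonicity of $s\mapsto b_s$ on $(0,\sqrt{\e})$, and then propagate real-analyticity so that $s(b)$, and hence $u^{b}$, is real-analytic in $b$.
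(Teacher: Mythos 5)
Your proposal is correct and follows essentially the same route as the paper, which derives the corollary directly from the expansion $b_s=\chi+b's^2+O(s^4)$ of Proposition \ref{prop:lam-b-expan-alt} together with the sign of $b'$ from \eqref{b'-alt}; your extra care about monotonicity of $s\mapsto b_s$ on $(0,\sqrt{\e})$ and real-analyticity is exactly the detail the paper leaves implicit. One small algebraic remark: carrying out the substitution of \eqref{b'-alt} gives $-\tfrac{g-1}{b'}=\tfrac{n}{\chi\beta(\tau)}$, i.e.\ $\beta(\tau)$ belongs in the denominator (consistent with the energy computation in Section \ref{sec:energy-asymp}), so the second expression displayed in \eqref{eq:s2mu} should be read accordingly.
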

  This  completes the proof of Theorem \ref{thm:ALexist}, apart from the last statement (regarding the energy) which is proven in the next section. 
 
 
 
\DETAILS{According to the first equation in \eqref{s-expans1} ($ b_s =  \chi + O (s^2)$), we can write 
 \begin{align} \label{btilde} 
 b_s =  \chi - \tilde b (s^2), \end{align}  
  where $ \tilde b (s^2)= O (s^2)$. It is shown below 
  that under the condition \eqref{V-cond}, 
  we have $\tilde b'(0) \neq 0$ and therefore, by the inverse function theorem, we can solve the first equation in \eqref{s-expans1} for $s^2$ in terms of $\mu :=  \chi-b$ to obtain}

\section{Energy asymptotic. Proof of Theorem \ref{thm:eng-asymp}}\label{sec:energy-asymp}
Consider  energy functional \eqref{en-resc} defined on  $\latn$-equivariant states $(\psi, a, a_0)$ constrained by \eqref{CSeqs-stat-resc'3} and having the magnetic flux quantization property \eqref{flux-quant-resc}. 
We use the rescaling of equation \eqref{en-repr} proven in Appendix \ref{sec:self-dual} given by 
 \begin{align}\label{en-repr-resc}
\cE_\lam (\psi,  a) =V(0) +\frac{1}{ \lam^2}\lan   |\p_a \psi |^2 & + \frac12(g-1) |\psi |^4+O(|\psi|^6)\ran \notag\\
&\qquad \qquad \qquad   + 2 (\chi - b) (\frac{1}{ \lam}  n- b ). 
\end{align}
where, recall, $\lan f\ran:=\frac{1}{2 \pi}\int_{\Omn} f$ and  $\p_{a}  := \frac12 ((\nabla_{a})_1 - i(\nabla_{a })_2)$. 
 This equation shows that for $V(t)$ a polynomial of the second order, $g=1$ a self-dual regime, see Appendix \ref{sec:self-dual}.
 \DETAILS{ \begin{proof} 
Substituting the relation $-\Delta_{a} =4  \p_{a}^* \p_{a} + \curl a$ (see \eqref{barp*-barp-resc}) into energy functional \eqref{en-resc}, we find
   \begin{align}\label{en1}
\cE_\lam (\psi,  a) = \frac{1}{2 \pi}\int_{\Omn}  (\frac{1}{ \lam^2}|\p_a \psi |^2 +\frac{1}{ \lam^2}\curl a |\psi |^2 +  V(\frac{1}{\lam}|\psi|^2)) . 
\end{align}
Now, we use constraint \eqref{CSeqs-stat-resc'3} (implying $\curl a= \lam b-\frac12 |\psi|^2 $) 
 and the expansion $V(\frac{1}{\lam}|\psi|^2) = V(0)+\frac{1}{ \lam}|\psi|^2V'(0)+\frac12\frac{1}{ \lam^2} |\psi|^4V''(0)+O(|\psi|^6) $  
  to obtain 
 \begin{align}\label{en3}
\cE_\lam (\psi,  a) =V(0)+ \frac{1}{2 \pi}\frac{1}{ \lam^2}\int_{\Omn}   (|\p_a \psi |^2 & + \frac12(g-1) |\psi |^4 + \lam  (b - \chi) |\psi |^2\notag\\
&\qquad \qquad \qquad  +O(|\psi|^6)),
\end{align}
 with, recall, $\chi:=-V'(0)$ and $g:=V''(0)$. Now 
using  \eqref{CSeqs-stat-resc'3} ($\curl a= \lam b-\frac12 |\psi|^2$) again, 
and  the flux quantization property \eqref{flux-quant} ($\frac{1}{2\pi} \int_{\Omn} \curl  a = n\in \Z$) and $|\Omn|=2\pi$,  we find  \[\int_{\Omn}  |\psi |^2= 2  (\lam b |\Omn|-2\pi n) =4\pi\lam  (b - \frac n\lam). 
\] 
  This relation together with \eqref{en3} implies \eqref{en-repr}.
\DETAILS{ which implies conclude
 \begin{align}\label{en3}
\cE_\lam (\psi,  a)& =  \frac{1}{2 \pi}\frac{1}{ \lam^2}\int_{\Omn}    ( | \p_a \psi |^2 +   \frac12(g-1) |\psi |^4 + (b - g\mu) 2(\lam b - \curl a) +  \frac12 \lam g\mu^2)\\
\label{en4}
& = \frac{1}{2 \pi}\frac{1}{ \lam^2}\int_{\Omn}   (|\bar\p_a \psi |^2  + \frac12(g-1) |\psi |^4 -  2 \lam (b - g\mu)  \curl a)\notag\\
&\qquad \qquad \qquad \qquad \qquad \qquad  \qquad \qquad    + (2 b^2 - 2g\mu b+ \frac12 g\mu^2)|\Om|.
\end{align}}
 \end{proof} 
For $V(t)$ a polynomial of the second order, the last term on the r.h.s. of \eqref{en-repr} is absent and \eqref{en-repr} implies that the constraint energy \eqref{en-resc} satisfies the inequality
 \begin{align}\label{en-ineq}
\cE_\lam (\psi,  a)& \ge V(0)  - 2 (\chi - b) b +  2\frac{1}{ \lam} (\chi - b)  n.
\end{align}
Eq. \eqref{en-repr} shows
 that  the non-zero local energy minimizers saturating this inequality are possible only for $g=1$ and  then they satisfy $\p_{a} \psi=0$. This is a self-dual regime, see Appendix \ref{sec:self-dual}.} 

\DETAILS{Now, we use  \eqref{en-repr} to derive 
 the following 
\begin{lemma} \label{lem:ener-asymp} 
Let $(b_s, 
 u_s)$ be the branch of solutions of system \eqref{CSeqs-stat-resc} 
  given in Theorem \ref{thm:bif-thm-Kdim1}. 
Then  
 we have the expansion 
 \begin{align}\label{energy-resc'}
\cE_{\lam_s} (\psi_s,  a_s)& =V(0) +\frac{1}{4 \pi}\frac{ \chi^2}{ n^2} (g-1) s^4\int_{\Omn}  |\psio |^4  
+  2(\frac{n}{ \lam_s} - b_s) (\chi - b_s) +O(s^6),
\end{align}
\end{lemma}}
Let $(b_s, u_s)$ be the branch of solutions of system \eqref{CSeqs-stat-resc} 
  given in Theorem \ref{thm:bif-thm-Kdim1}, $u_s=(\psi_s, \al_{s}, a_{0 s}, \theta_s)$, and let $a_s=a^{n}+  \al_s$ and $\lam_s=\frac{n}{b_s}+  \theta_s$.   We consider the first term on the r.h.s. of  \eqref{en-repr-resc}.  By \eqref{s-expans1},    we have
 \begin{align}\label{psi-expan'}      
 \psi_s = s\psio + O_{\cH^{2}_{n}}(s^3).\end{align}
This equation and Eqs \eqref{dpsio=0} 
($\p_{ a^{n}} \psio=0$), $a_s=a^{n}+  \al_s$ and 
$ \al_s = O_{\cH^{2}_{n}}(s^2)$ imply that   $\p_a \psi_s = O_{\cH^{2}_{n}}(s^3)$. This implies
 \begin{align}\label{est1}
\lan  |\p_a \psi_s |^2\ran =O(s^6). \end{align}
For the second term on the r.h.s. of  \eqref{en-repr-resc}, \eqref{psi-expan'} and  $s^2 = - \frac{g-1}{ b'} \mu + O(\mu^2)$ (see \eqref{eq:s2mu}) give 
\begin{align} 
\lan  |\psi_s |^4\ran =& s^4\lan   |\psio |^4\ran +O(s^6) 
 = (\mu(g-1)/ b')^2\lan   |\psio |^4\ran +O(\mu^3).\notag 
\end{align}
This, together with \eqref{b'-alt} and $\lan|\psi_0|^4 \ran 
=\beta(\tau)$  (remember normalization \eqref{normal-psio} ($\lan|\psio|^2\ran =1$)), yields 
\begin{align}\label{est2}
\lan  |\psi_s |^4\ran =&
 \frac{n^2\mu^2}{\chi^2 \beta(\tau)^2}\beta(\tau)+O(\mu^3).
\end{align}

Next, we evaluate $R:=2(\frac{n}{ \lam_s} - b_s) (\chi - b_s)$. By Proposition \ref{prop:lam-b-expan-alt},  $\frac{n}{ \lam_s}=\chi - \frac{\chi^2}{ n}\lam' s^2+O(s^4)$. Since $\chi - b_s=O(s^2)$, this gives $R=2(\chi - b_s- \frac{\chi^2}{ n}\lam' s^2) (\chi - b_s)+O(s^6)$. Now, due to \eqref{eq:s2mu}, we have $R=2(1 + \frac{\chi^2}{ n}\lam' \frac{1}{ b'})( g-1)^2 \mu^2+O(s^6)$, where, recall, $\mu:=(\chi-b_s)/( g-1)$. Next, by  the definition $\lam_s = \frac{n}{b_s}+\theta$, we have $\lambda' =-\frac{n}{\chi^2} b'  +\theta'$. The last  equation, together with \eqref{b-exp} and \eqref{b'-alt} ($b_s =\chi + b's^2+O(s^4)$ and $ b'=-\frac \chi n (g-1)\beta(\tau)$) and the third equation in \eqref{s-expans-coeff1} ($\theta'  = \frac{1}{2 \chi }\lan|\psi_0|^2\ran= \frac{1}{2 \chi }$), give 
$ 1+\frac{\chi^2}{ n}\lam' \frac{1}{ b'}=\frac{\chi^2}{ n}\frac{\theta'}{ b'} =-\frac{1}{2} \frac{1}{(g-1)\beta(\tau)}.$ 
\DETAILS{using 
 \eqref{lam'-alt} and \eqref{b'-alt}	 gives
\[1-\frac{\chi^2}{ n}\lam' \frac{1}{\tilde b'}=1-\frac{g\beta(\tau) + \frac{1}{2} }{g\beta(\tau)} =-\frac{1}{2} \frac{1}{g\beta(\tau)},\]
which,}
This implies 
 \begin{align}\label{R} 	&2(\frac{n}{ \lam_s} - b_s) (\chi - b_s) 
  = -  
   \frac{g-1}{ \beta(\tau)} \mu^2 +O(\mu^3). 
	\end{align}

We plug \eqref{est1},  \eqref{est2} and \eqref{R} into  \eqref{en-repr-resc} and use the expansion 
 $\lam_s=\frac{n}{\chi}+O(s^2)$ 
   and the relation $s^2 = - \frac{g-1}{ b'} \mu + O(\mu^2)$ (see \eqref{eq:s2mu}) (together with the computation $\frac12(g-1)\frac{1}{ (g-1)^2\beta(\tau)} - \frac{1}{ (g-1)\beta(\tau)}=- \frac{1}{2 (g-1)\beta(\tau)}$) to find 
   \begin{align}\label{cE-asymp}\cE_{\lam_s} (\psi_s,  a_s)	= V(0) - \frac12 \frac{g-1}{\beta(\hat\lat)}\mu^2   + O(\mu^3).\end{align} 
 By \eqref{en-resc'}, $\cE_{\lam_s} (\psi_s,  a_s)= E_{b_s}(\hat\lat)$. The last two relations give   \eqref{Eb-lat-asymp}. This proves Theorem \ref{thm:eng-asymp}. 
\qquad \qquad \qquad \qquad \qquad \qquad \qquad \qquad \qquad \qquad \qquad \qquad \qquad \qquad \qquad  \qquad $\Box$

\section{Energy minimizers} \label{sec:eng.min}

In this section we will prove 
Theorem \ref{thm:ALexist}(c) and Theorem \ref{thm:eng.min}, concerning 
the energy per lattice cell (cf. the proof in \cite{TS2} for the Ginzburg-Landau energy).

The main step in the proof is to express 
 the average energy of the solution $u_s$ in terms 
  of the Abrikosov function \eqref{beta-tau}. 


Let $(b,  u^b)$ be the branch of solutions given by Corollary \ref{cor:ub}. 
 Emphasizing the dependence of the energy on $\tau$ and $b$, we denote  
 \[E_{b}(\tau) := E_{b}(\hat\lat), 
 \] 
where $E_{b}(\hat\lat)$ is given in \eqref{Eb-lat}. 
  We have the following result. 

\begin{theorem}[Energy minimizers] \label{thm:eng.min1}
	Assume  \eqref{V-cond} and \eqref{Aext-cond}, 
	 and let $\mu:=(\chi-b)/( g-1) >0$ be sufficiently small and $g>1$. 
	  Then $\tau_*$ is a non-degenerate saddle point/(local) minimizer of the Abrikosov function $\beta(\tau)$ iff  $\tau_b = \tau_* + O(\mu)$ is a non-degenerate saddle point/(local) minimizer of the energy $E_{b}(\tau)$. \end{theorem}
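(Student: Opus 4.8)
The plan is to leverage the energy asymptotic from Theorem \ref{thm:eng-asymp}: by \eqref{Eb-lat-asymp}, we have $E_b(\tau) = V(0) - \frac{1}{2}\frac{g-1}{\beta(\tau)}\mu^2 + O(\mu^3)$ with $\mu = (\chi-b)/(g-1)$. First I would make the dependence on $\tau$ precise by showing that this expansion holds in a $C^2$ sense in $\tau$ locally uniformly — i.e. that the remainder $O(\mu^3)$ and its first two $\tau$-derivatives are $O(\mu^3)$ as well. This requires going back through Sections \ref{sec:bifurc-dimK=1}--\ref{sec:energy-asymp} and checking that all the implicit-function-theorem constructions (the solution $w(b,v)$, the bifurcation parameter $s(b)$, the coefficients $\alpha', a_0', \theta'$, and hence $\psi_0$, $\beta(\tau)$) depend smoothly on the lattice modulus $\tau$, and that the estimates are locally uniform in $\tau$ on compact subsets of the upper half-plane. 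Since $\beta(\tau)$ is real-analytic in $\tau$ (it is built from the theta-function-type solution $\psi_0$ of $\partial_{a^n}\psi_0 = 0$), and $\beta(\tau) \ge 1$ is bounded below, the map $\tau \mapsto -\frac{1}{2}\frac{g-1}{\beta(\tau)}$ is real-analytic, and for $g>1$ its critical points coincide with those of $\beta(\tau)$, with the Hessian differing by the positive factor $\frac{g-1}{\beta(\tau)^2}$ (up to lower-order terms involving $\nabla\beta$, which vanish at a critical point). Thus a non-degenerate critical point $\tau_*$ of $\beta$ is a non-degenerate critical point of the leading term of $E_b$, and it is a minimizer of $\beta$ iff it is a minimizer of that leading term.

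Next I would invoke a quantitative implicit function theorem / persistence-of-non-degenerate-critical-points argument: write $E_b(\tau) = V(0) + \mu^2 \big( G(\tau) + \mu R_b(\tau)\big)$, where $G(\tau) = -\frac{1}{2}\frac{g-1}{\beta(\tau)}$ and $R_b$ together with $\nabla_\tau R_b$, $\nabla_\tau^2 R_b$ are bounded uniformly for small $\mu$. The critical point equation $\nabla_\tau G(\tau) + \mu \nabla_\tau R_b(\tau) = 0$ can be solved near $\tau_*$ by the implicit function theorem precisely because $\nabla_\tau^2 G(\tau_*)$ is invertible (non-degeneracy of $\tau_*$ as a critical point of $\beta$), giving a unique solution $\tau_b = \tau_* + O(\mu)$, and for $\mu$ small the Hessian $\nabla_\tau^2 E_b(\tau_b)$ has the same signature as $\nabla_\tau^2 G(\tau_*)$, hence the same signature as $\nabla_\tau^2 \beta(\tau_*)$ (for $g>1$). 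Conversely, any critical point $\tau_b$ of $E_b$ near $\tau_*$ must, by the same estimate, satisfy $\nabla_\tau G(\tau_b) = O(\mu)$, forcing $\tau_b$ close to a critical point of $G$, i.e. of $\beta$; non-degeneracy transfers back the same way. This establishes the claimed equivalence, including the saddle-point/minimizer dichotomy and the location $\tau_b = \tau_* + O(\mu)$.

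The main obstacle I anticipate is the first step: establishing that the energy expansion \eqref{Eb-lat-asymp} is genuinely $C^2$ in $\tau$ with locally uniform remainder estimates. The bifurcation analysis in the earlier sections is carried out for a fixed normalized lattice $\hat\lat$, and while it is morally clear that everything depends analytically on $\tau$ (the space $L^2_n$ and the operator $-\Delta_{a^n}$ vary real-analytically with $\tau$, as does the one-dimensional kernel spanned by $\psi_0$), making this rigorous requires either re-running the Lyapunov--Schmidt reduction with $\tau$ as an additional parameter, or identifying all the $\tau$-dependent quantities ($\psi_0(\tau)$, $\beta(\tau)$, the operator $\omega_1$, the matrix $\langle T\rangle$) and verifying their smooth dependence and the uniformity of the bounds on the bifurcation branch. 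A secondary subtlety is handling the boundary of moduli space: $\tau$ must be restricted to a fixed compact subset of $\{\IM\tau > 0\}$ (or a fundamental domain for $SL_2(\Z)$) so that the constants in the estimates stay uniform; this is harmless since the hexagonal point $\tau_* = e^{i\pi/3}$ lies in the interior, but it should be stated explicitly. Once uniform $C^2$ control of the remainder is in hand, the rest is the standard persistence argument sketched above.
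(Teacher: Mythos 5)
Your proposal follows essentially the same route as the paper: derive the result from the energy asymptotic \eqref{Eb-lat-asymp} together with a standard persistence argument for non-degenerate critical points, noting that for $g>1$ the critical points and Hessian signature of $-\frac12\frac{g-1}{\beta(\tau)}$ agree with those of $\beta(\tau)$. You are in fact more careful than the paper's (very terse) proof in flagging that the $O(\mu^3)$ remainder must be controlled in $C^2$ locally uniformly in $\tau$ — a point the paper asserts without elaboration — but this is a refinement of the same argument, not a different one.
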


\begin{proof}[Proof of Theorem \ref{thm:eng.min1}] 
We will derive Theorem \ref{thm:eng.min1} from 
 Theorem \ref{thm:eng-asymp} by using \eqref{en-resc'}. 
By  the energy expansion \eqref{Eb-lat-asymp},
\DETAILS{ is of the form 
	\begin{equation}
	E_{b}(\tau) = V(0) - e_2(\tau) \mu^2 + O(\mu^3)
\end{equation}	
 where $e_2(\tau)$ is independent of $\mu$, $\mu:=\chi-b$, and with $O(\mu^3)$ differentiable in $\tau$ with the same bound $O(\mu^3)$.
Here $\tau$  takes values in the Poincar\'e half-plane $\bH$.}	
  if $\tau_*$ is a non-degenerate critical point of the Abrikosov function, $\beta(\tau)$, 
then, for $\mu$ sufficiently small, $E_{b}(\tau)$ has a unique non-degenerate critical point, $\tau_b$, 
 in a $O(\mu)-$neighbourhood of $\tau_*$.  The converse is also true.
Moreover, 
  $\tau_*$ is a (local) minimum of $\beta(\tau)$ iff  $\tau_b$ is a (local) minimum of $E_{b}(\tau)$. \end{proof}

\begin{proof}[Proof of Theorem \ref{thm:eng.min}] It is shown in \cite{ABN} 
that the critical points of $\beta(\tau)$ are $\tau = e^{\frac{i \pi}{3}}$ and $\tau = e^{\frac{i \pi}{2}}$, with the former a minimum and the latter a maximum. This fact and Theorem \ref{thm:eng.min1} imply Theorem \ref{thm:eng.min}. \end{proof}


\appendix

\section{Nonrelativistic Chern-Simons models} \label{sec:CSmodels} 
In this appendix we provide, for the reader's convenience, some general 
background for non-relativistic Chern-Simons theories.

The Chern-Simons invariant provides a gauge theory in $2+1$ dimension (in addition to the Maxwell/Yang-Mills theories in $3+1$ dimension).  Generally speaking, a gauge theory is defined by an action on connections on a principal bundle, invariant under the corresponding gauge transformations. It is coupled to a matter field (viewed as a section of the associated line/vector bundle) through the covariant derivatives, induced by these connections, acting on this field.

 The Chern-Simons gauge theories on a $2+1$ Minkowski  space $M_3$ involve a gauge field $\a:M_3\ra \R\times \R^2$ and the action functional (
 the CS action)\footnote{For typographical reasons, we use the notation in this appendix different from those in the main text.}
  \begin{align}\label{CS-action}
S_{\rm CS}(\bfa):= 2\int_{M_3} \a \wedge d \a,\ 
\qquad (\text{or }\ S_{\rm CS}(\a):= 2\int_{M_3} \eps^{\mu \nu \rho} \a_\mu \p_\nu \a_\rho). 
\end{align}
It is gauge invariant (under transformations $\a\ra \a+df$ - in the abelian case - satisfying 
$\int_{M_3} d f =0$) and independent of the metric (of a fixed signature). 
Since $S_{\rm CS}(a)$ is independent of the metric, the {\it energy momentum tensor} for it (which is the variation of $S_{\rm CS}(a)$ w.r.to the metric) is zero.

The Euler-Lagrange equation for the Chern-Simons action without external sources, i.e. 
in the vacuum, is $d\a=0$ (in the Minkowski metric), i.e. $\a$ is a flat connection. In a simply connected $M_3$, $\a$ is a pure gauge and the Chern-Simons equation $d\a=0$ has no plane waves like in the EM case. 
\DETAILS{\footnote{The Euler-Lagrange equation for  the Chern-Simons action coupled to  a conserved external current, $J$, i.e., $S_{\rm CS}(\a)-\int  \a\wedge *J$, is
  \begin{align}\notag 
 d a =  *J\ \qquad (\text{or }\ F_{\mu \nu} =\eps_{\mu \nu \rho}   J^\rho).
\end{align}}}

  

The action functional for  the Chern-Simons gauge theory coupled to matter is obtained by using the principle of minimal coupling
\begin{align}\label{CS-matter-action}
S(\phi, a):=S_{\rm matter}(\phi, \a)+ S_{\rm CS}(\a),\ 
\end{align}
where $\phi$ is a matter field or an order parameter (a section of a 
 line, or vector bundle) and $\a$ enters $S_{\rm matter}(\phi, \a)$ through the the 
covariant derivative, $\n_\a$, given locally as $\n_\a:= d - i \a$. 
The Euler-Lagrange equations are
\begin{align}\label{EL-eqs-CS-M1}
&d_\phi S_{\rm matter}(\phi, \a)=0,\\ 
\label{EL-eqs-CS-M2}  & d \a =  *\J,\ \quad \text{with }\ \J:=d_a S_{\rm matter}(\phi, \a),\end{align}
where $d_\chi$ is the G\^ateaux derivative w.r.to $\chi$. 

Apply $d$ to the equation in \eqref{EL-eqs-CS-M2} to obtain  the conservation law, $d* J=0$, or $\p^\mu J_\mu=0$, for the current $J$.  Writing $\a=(a_0,  a)$ and $\J=(\rho, J)$, and introducing the notation,  $B:=(d \a)_0 \equiv\curl  a $ and $E:= (d \a)_{\rm space} \equiv - \n a_0-\p_t  a$, for the magnetic and electric fields, the second CS equation, \eqref{EL-eqs-CS-M2}, can be rewritten as
 \begin{align}\label{CS-eqs-BE}
 E =* J\  
 \quad  \text{ and }\  \quad   B=\rho,
\end{align}
and the conservation law, $\p^\mu \J_\mu=0$, for the current $J$ reads 
 \begin{equation}\label{conserv-law}\p_t \rho+\divv J=0.\end{equation}

For the {\it non-relativistic} model \eqref{NR-CS-action}, 
 the current $\J:=d_\a S_{\rm matter}(\phi, \a)$ has the form 
 \[\J:=\big(|\phi|^2,\ \im (\bar \phi \n_{ a+ a^{\rm ext}} \phi)\big),\] 
 which, together with \eqref{CS-eqs-BE}, gives 
 \eqref{NR-CS-eq2} - \eqref{NR-CS-eq3}. 

 \paragraph{Constraints.}
The second equation in  \eqref{CS-eqs-BE} is a constraint consistent with the evolution. Indeed, taking $\curl$ of the first equation in  \eqref{CS-eqs-BE} and using that $\curl E=-\p_t \curl a$ and $\curl  * J = \divv  J$ gives $\p_t \curl a 
= -\divv  J$. By \eqref{conserv-law}, this in turn implies 
\[\p_t 
(\rho-  B)=\p_t \rho+\divv J=0.\]

On the other hand, the first equation in  \eqref{CS-eqs-BE} follows from the second one and the conservation law, $\p^\mu J_\mu=0$, for the current $J$. 
 Indeed, differentiating the second equation w.r.to $t$  and using \eqref{conserv-law} gives $0= \p_t \curl  a-\p_t \rho= \curl \p_t  a -\divv J$. Furthermore, writing $\divv J=\curl *J$, we find $\curl (\p_t  a- *J)=0$. Therefore there is a function $a_0$ s.t. $\p_t  a- *J= \n a_0$.


 \paragraph{Symmetries.} 
Eqs \eqref{EL-eqs-CS-M1} - \eqref{EL-eqs-CS-M2} (specifically, \eqref{CSeqs}) are invariant under gauge, translation and rotation transformations and Galilean boost (the Lorentz transformations, in the relativistic case). For instance,   the gauge transformation is give by
\begin{equation}\label{g-transf-t-dep}
     T^{\rm gauge}_\chi:\  (\psi,\   a) \mapsto ( e^{i\chi}\psi,\ a + d\chi);
\end{equation}
\DETAILS{ symmetries: 

{\it Gauge symmetry}:  for any sufficiently regular function $\chi : \R^2 \to \R$,
\begin{equation}\label{g-transf}
  T^{\rm gauge}_\chi:\  (\psi(x) ,\    a(x)) \mapsto ( e^{i\chi(x)}\psi(x),\  a(x) + \n\chi(x));
\end{equation}

{\it Translation symmetry}: for any $h \in \R^2$,
\begin{equation}\label{tr-transf}
 T_h^{\rm transl}:\   (\psi(x),\  a(x))  \mapsto (\psi(x + h),\  a(x + h));
\end{equation}

{\it Rotation and reflection symmetry}: for any $R \in O(2)$ (including  the reflections

$f(x)\ra f(-x)$)
\begin{equation}\label{rot-transf}
  T_R^{\rm rot}:\    (\psi(x),\  a(x))  \mapsto (\psi(Rx),\ R^{-1} a(Rx)).
\end{equation}

Note that in the time-dependent case the gauge symmetry should be modified as
\begin{equation}\label{g-transf-t-dep}
     T^{\rm gauge}_\chi:\  (\psi,\   a) \mapsto ( e^{i\chi}\psi,\ a + d\chi);
\end{equation}

 2) The Galilean symmetry (the Poincar\'e  symmetry in the relativistic case).\\}

The spatial symmetries lead to the conservation of the energy and momentum. E.g. for the non-relativistic CS model, the energy is given by  \eqref{en} (recall that the energy momentum tensor for  $S_{\rm CS}(a)$ is zero).

\paragraph{Boundary conditions at infinity.} Depending on the nonlinearity $V(|\phi|^2)$, we can have different boundary conditions at infinity: 

$|\phi|\ra 0$ as $x\ra \infty$, for $V(|\phi|^2):= \frac g2 |\phi|^4$, and 

$|\phi|\ra \sqrt \mu$ as $x\ra \infty$, for $V(|\phi|^2):= \frac g2 (|\phi|^2-\mu)^2, \mu>0$, 

 either b.c. condition, for $V(|\phi|^2):= |\phi|^2 (|\phi|^2-\mu)^2$.

\paragraph{Hamiltonian structure.} 
Since our constraint is conserved by the dynamics, we can avoid using the general theory of Hamiltonian systems on spaces with constraints (the Dirac Poisson brackets).

Take for simplicity $a^{\rm ext}_0=0$ and consider the non-relativistic Lagrangian 
 (cf. \eqref{NR-CS-action})
 \begin{align}\label{NR-CS-action}
S(\phi, a', a_0):=\int i \bar \phi \p_{t a_0'} \phi - 
 |\n_{ a+ a^{\rm ext}} \phi |^2 - V(|\phi|^2) +  \a \cdot\curl \a. 
\end{align}
where  $\phi:\R_+\times \R^2\ra \C$ is the order parameter,  $ \a' = (a, a_0):\R_+\times \R^2\ra \R^2\times \R$ is the CS gauge field,  $a^{\rm ext} $ is an external magnetic potential, $\n_{ a}:=\n +i  a$ and $ \p_{t a_0}:=\p_{t} - i a_0$.
Following the standard procedure, we compute the momenta for  $L (\phi, a, a_0)$, we find
\begin{align}\label{NR-CS-momenta}
d_{\dot\phi} L (\phi, a, a_0) = i \bar \phi, \qquad d_{\dot a'} L (\phi, a, a_0) = - *  a \equiv (a_2, -a_1).
\end{align}
Performing the Legendre transform $\int (i \bar \phi \dot\phi - *a\cdot \dot a) -L (\phi, a, a_0)$ leads to the hamiltonian 
\begin{align}\label{ham}H(\phi, \bar \phi, a, -*  a):=\int  \frac{1}{2} |\n_{ a} \phi |^2 + V(|\phi|^2) - a_0(|\phi|^2  +   2\curl  a). 
\end{align}

Introducing the almost complex structure $j(\phi,  a')=(i\phi, -* a)$, we can write the symplectic form as
\begin{align}\label{sympl-form}\om(\xi,  \al; \eta,  \beta):=\re\int \bar\xi i \eta- \int   \al \cdot *\beta=\lan (\xi,  \al), j(\eta,  \beta)\ran.\end{align} 
Using this symplectic form, for any functional $H(u)$, we define the vector field $\n^\om H(u)$ by the equation $\om(\n^\om H(u); v)=d H(u)v$. With this definition and the Hamiltonian $H$, we see that ZHK equations \eqref{NR-CS-eq1} - \eqref{NR-CS-eq3} can be rewritten as
\begin{align}\label{NR-CS-ham-eq}
& \p_{t } (\phi,  a)=\n^\om H (\phi,  a, a_0), 
\end{align}
{\bf where $H(\phi, a, a_0)\equiv H(\phi, \bar \phi, a, -*  a)$,} with constraint \eqref{NR-CS-eq3}, which can be written as $d_{a_0}H (\phi, a, a_0)=0$. 

\section{The self-dual regime 
} \label{sec:self-dual}   
Introduce the complexified covariant derivatives (harmonic oscillator annihilation and creation  operators, see e.g. \cite{S, S2})  
    \begin{equation}\label{partA}
        \p_{A}  := \frac12 ((\nabla_{A})_1 - i(\nabla_{A })_2). 
    \end{equation}
 and $\p_{A}^*=- \bar\p_{A}$ Remembering the definition of $\n_{A}$, we compute
       $ \p_{A}  = \p + i  A_c, $ 
 where $\p:= \frac12 (\partial_{x_1} - i\partial_{x_2})$ and $A_c:= \frac12 (A_1- i A_2)$.
  We have the following 
\begin{proposition}\label{prop:Wei-form}     \begin{align}
\label{barp-comm}    &[\p_{A},  \p_{A}^*] = \frac12 \curl A;\\
\label{barp*-barp}     &-\Delta_{A} =4  \p_{A}^* \p_{A} + \curl A.
        \end{align} 
   \end{proposition}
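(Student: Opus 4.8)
The plan is to reduce both identities to one short commutator computation. Throughout I use the representation $\p_A = \p + iA_c$ stated above, with $\p = \tfrac12(\p_{x_1} - i\p_{x_2})$ and $A_c = \tfrac12(A_1 - iA_2)$, and I set $\bar\p_A := \bar\p + i\bar A_c$, so that the given identity $\p_A^* = -\bar\p_A$ holds; taking adjoints of it also yields $\bar\p_A^* = -\p_A$.

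First I would establish the commutator relation \eqref{barp-comm}. Expanding,
\[
[\p_A,\p_A^*] = [\p + iA_c,\ -\bar\p - i\bar A_c]
 = [\p,-\bar\p] + [\p,-i\bar A_c] + [iA_c,-\bar\p] + [iA_c,-i\bar A_c].
\]
The first bracket vanishes since constant-coefficient partial derivatives commute, and the last vanishes since $A_c$ and $\bar A_c$ act by multiplication. The two remaining brackets are zeroth-order multiplication operators coming from the Leibniz rule, and collecting them gives $[\p_A,\p_A^*] = i(\bar\p A_c - \p\bar A_c)$. A direct computation in Cartesian coordinates gives $\bar\p A_c = \tfrac14(\divv A - i\,\curl A)$; since $\p\bar A_c = \overline{\bar\p A_c}$, the divergence parts cancel in the difference, leaving $\bar\p A_c - \p\bar A_c = -\tfrac{i}{2}\curl A$, whence $[\p_A,\p_A^*] = \tfrac12\curl A$.

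For the Weitzenb\"ock-type identity \eqref{barp*-barp} I would first record the purely algebraic relations $(\n_A)_1 = \p_A + \bar\p_A$ and $(\n_A)_2 = i(\p_A - \bar\p_A)$, obtained by adding and subtracting the definitions of $\p_A$ and $\bar\p_A$. Using $-\Delta_A = \n_A^*\n_A = (\n_A)_1^*(\n_A)_1 + (\n_A)_2^*(\n_A)_2$ together with $\p_A^* = -\bar\p_A$ and $\bar\p_A^* = -\p_A$, I would expand each of the two squares; the $\p_A^2$ and $\bar\p_A^2$ contributions cancel between them, leaving $-\Delta_A = -2(\p_A\bar\p_A + \bar\p_A\p_A)$. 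Writing $\p_A\bar\p_A = \bar\p_A\p_A + [\p_A,\bar\p_A]$ and using $[\p_A,\bar\p_A] = -[\p_A,\p_A^*] = -\tfrac12\curl A$ from \eqref{barp-comm}, together with $\p_A^*\p_A = -\bar\p_A\p_A$, then gives $-\Delta_A = 4\p_A^*\p_A + \curl A$.

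There is no conceptual obstacle here; the only real care is bookkeeping of conventions — the definition of $*$ and the resulting sign in $\curl = -\divv\,*$, the factors of $\tfrac12$ in $\p$ and $A_c$, and consistency of the adjoint convention with the stated formula $\n_A^* F = -\divv F + iA\cdot F$. Once the auxiliary identities $(\n_A)_1 = \p_A + \bar\p_A$ and $(\n_A)_2 = i(\p_A - \bar\p_A)$ are in hand, the derivation of \eqref{barp*-barp} is purely formal and involves no analysis, so the whole argument amounts to the one commutator computation for \eqref{barp-comm} dressed up.
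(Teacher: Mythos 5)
Your proposal is correct and amounts to the same direct computation as the paper's proof: the paper works with the real components $\n_i:=(\nabla_A)_i$, computes $[\n_1,\n_2]=i\curl A$, and reads off both identities from the factorization $(\n_1+i\n_2)(\n_1-i\n_2)=\Delta_A-i[\n_1,\n_2]$, whereas you carry out the equivalent bookkeeping in the complexified variables $\p$, $A_c$ and then deduce \eqref{barp*-barp} from \eqref{barp-comm}. The difference is purely notational; both arguments are two-line commutator calculations and all your signs check out against the paper's conventions.
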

 \begin{proof} Let $\n_i:=(\nabla_{A})_i$.  Using  the relations $(\n_1+i\n_2)(\n_1-i\n_2)=\Delta_A - i[\n_1, \n_2]$ and $[\n_1, \n_2]=[\p_1, i A_2]+[i A_1, \p_2]=i\curl A$, we obtain $- \Delta_{ a}= 4\p_A^* \p_A+ \curl A$. Furthermore, $4 [\p_{A},  \p_{A}^*]= [\nabla_1 - i\nabla_2, - \nabla_1 - i\nabla_2]=-2 i[\nabla_1, \nabla_2]= 2 \curl A.$
 \end{proof}   
\begin{corollary}\label{cor:LaplAb-spec} (a) $-\Delta_{A^b}\ge b$ and (b) $\Null(-\Delta_{A^b}- b)=\Null(\p_{A^b})$.   \end{corollary}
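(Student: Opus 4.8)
The plan is to read off both statements directly from the Weitzenb\"ock-type identity \eqref{barp*-barp} of Proposition \ref{prop:Wei-form}, specialized to a vector potential $A^b$ with constant magnetic field $\curl A^b = b$. First I would substitute $\curl A^b = b$ into \eqref{barp*-barp} to get the operator identity $-\Delta_{A^b} = 4\,\p_{A^b}^* \p_{A^b} + b$ on the appropriate domain.

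For part (a), I would note that $\p_{A^b}^* \p_{A^b}$ is a nonnegative self-adjoint operator, since for any $\psi$ in the domain $\langle \psi, \p_{A^b}^*\p_{A^b}\psi\rangle = \|\p_{A^b}\psi\|^2 \ge 0$. Hence $-\Delta_{A^b} = 4\,\p_{A^b}^*\p_{A^b} + b \ge b$ as quadratic forms, which is the claimed lower bound.

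For part (b), I would use the same computation: $(-\Delta_{A^b} - b)\psi = 0$ is equivalent, by the identity above, to $\p_{A^b}^*\p_{A^b}\psi = 0$. Pairing with $\psi$ gives $\|\p_{A^b}\psi\|^2 = 0$, hence $\p_{A^b}\psi = 0$; conversely $\p_{A^b}\psi = 0$ trivially implies $\p_{A^b}^*\p_{A^b}\psi = 0$ and thus $(-\Delta_{A^b}-b)\psi = 0$. Therefore $\Null(-\Delta_{A^b} - b) = \Null(\p_{A^b})$.

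There is essentially no obstacle here beyond bookkeeping: one only needs that \eqref{barp*-barp} holds at the level of operators (or quadratic forms) on the relevant domain, which is exactly what Proposition \ref{prop:Wei-form} provides, and that the passage from $\p_{A^b}^*\p_{A^b}\psi = 0$ to $\p_{A^b}\psi = 0$ is legitimate, which follows from the fact that $\p_{A^b}$ is closed with $\p_{A^b}^*=-\bar\p_{A^b}$, so $\langle \psi, \p_{A^b}^*\p_{A^b}\psi\rangle = \|\p_{A^b}\psi\|^2$ is valid for $\psi$ in the form domain.
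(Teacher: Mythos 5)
Your proposal is correct and is exactly the argument the paper intends: Corollary \ref{cor:LaplAb-spec} is stated as an immediate consequence of the Weitzenb\"ock identity \eqref{barp*-barp} with $\curl A^b = b$, giving $-\Delta_{A^b} = 4\,\p_{A^b}^*\p_{A^b} + b$, from which (a) follows by nonnegativity of $\p_{A^b}^*\p_{A^b}$ and (b) from $\|\p_{A^b}\psi\|^2 = 0 \Leftrightarrow \p_{A^b}\psi = 0$. No substantive difference from the paper's (implicit) proof.
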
 
\begin{remark}\label{rem:holom}  {\em (i) Proposition \ref{prop:Wei-form} is formalization of standard results in quantum mechanics, where $\p_{A}$ and $  \p_{A}^*$ are called the annihilation and creation operators (see e.g. \cite{GS}); in geometry, \eqref{barp*-barp} is known as the Weizenb\"ock formula.

(ii) Unlike the standard case in geometry, the formulae above are based on the complex `d', rather than `d-bar', derivative (i.e. on the anti-holomorphic, rather than holomorphic, structure). The reason for it is the unusual sign in the covariant derivatives  $\n_{ A}:=\n +i  A$ and $ \p_{t A_0}:=\p_{t} - i A_0$ due to the basic charge (electron charge) negative ($=-1$). Since, as was already mentioned above, the ZHK equations, unlike the Ginzburg-Landau equations, are not invariant under the transformation $(\psi, A) \ra (\bar\psi, -A)$, we have to stick with the chosen sign in $\n_{ A}$ and $ \p_{t A_0}$.})    \end{remark} 

We say that  the ZHK equations are self-dual iff any local energy minimizing solution, $(\Psi, A, A_{0})$, satisfies
 the first order equation $\p_{A} \Psi=0$ and $\Psi$ is not a constant. The latter equation together with \eqref{NR-CS-eq3} forms a system of the first order equations for $(\Psi, A)$. ($A_{0}$ is found from \eqref{NR-CS-eq2}.)

 \begin{proposition}\label{prop:self-dual}  For self-interaction potential \eqref{V-dw},
 the ZHK equations for $\lat$-equivariant fields are in the self-dual  regime iff $g=1$.   \end{proposition}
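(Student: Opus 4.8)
The plan is to prove Proposition~\ref{prop:self-dual} by combining the Weizenb\"ock-type identity \eqref{barp*-barp-resc} with the constraint \eqref{NR-CS-eq3} to rewrite the energy functional \eqref{en} (equivalently, the rescaled energy \eqref{en-resc}) in a Bogomolny form, and then reading off the sign conditions. Concretely, I would start from the constrained energy $E_Q(\Psi,A)=\int_Q |\n_A\Psi|^2 + V(|\Psi|^2)$ with $V$ as in \eqref{V-dw}, substitute $-\Delta_A = 4\p_A^*\p_A + \curl A$ to get $\int_Q 4|\p_A\Psi|^2 + (\curl A)|\Psi|^2 + \frac g2(|\Psi|^2-\mu)^2$, and then use \eqref{NR-CS-eq3} (i.e. $\curl A = b - \tfrac12|\Psi|^2$, with $b=\mu$ for the self-dual value of the external field) to eliminate $\curl A$. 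Expanding, the $|\Psi|^4$ terms combine as $(\tfrac g2 - \tfrac12)|\Psi|^4$, the terms linear in $|\Psi|^2$ collect the flux/charge quantization contribution \eqref{flux-quant}, and what remains is $\int_Q 4|\p_A\Psi|^2 + \tfrac{g-1}{2}|\Psi|^4 + (\text{topological/constant terms})$. This is exactly the content already recorded in \eqref{en-repr-resc}: $\cE_\lam(\psi,a) = V(0) + \lam^{-2}\lan |\p_a\psi|^2 + \tfrac12(g-1)|\psi|^4 \ran + 2(\chi-b)(\tfrac n\lam - b)$ when $V$ is quadratic, so the last $O(|\psi|^6)$ term drops.

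From this representation the proposition is essentially immediate. First I would observe that the last term $2(\chi-b)(\tfrac n\lam - b)$ is determined entirely by the constraint and the flux quantization — it does not depend on the detailed profile of $\psi$ beyond $\lam$ — so minimizing the energy over $\lat$-equivariant configurations with fixed flux amounts to minimizing $\lan |\p_a\psi|^2 + \tfrac12(g-1)|\psi|^4 \ran$. If $g=1$, this reduces to $\lan |\p_a\psi|^2\ran \ge 0$, with equality (and hence a genuine local energy minimum among nontrivial configurations) precisely when $\p_a\psi = 0$; this is the self-dual equation, and Corollary~\ref{cor:LaplAb-spec}(b) together with Proposition~\ref{prop:dFspec} guarantees that nonconstant solutions of $\p_a\psi=0$ with the required equivariance exist. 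Conversely, if $g\ne 1$, I would argue that no nontrivial local minimizer can satisfy $\p_A\Psi=0$: when $g>1$ the $|\psi|^4$ term is strictly positive and a minimizer will want $|\psi|$ small but cannot be forced onto the self-dual locus (the Euler--Lagrange equation \eqref{NR-CS-eq1-stat} with the nonzero quartic term is incompatible with $\p_A\Psi=0$ unless $\Psi\equiv 0$), and when $g<1$ the quartic term has the wrong sign so the self-dual configurations are not even minimizers. The cleanest way to package the $g\ne1$ direction is to note that a self-dual solution would have to simultaneously solve $\p_A\Psi=0$ and, by \eqref{NR-CS-eq1-stat} rewritten via \eqref{barp*-barp}, $(\curl A + v(|\Psi|^2) - A_0)\Psi = 0$; feeding in the constraint \eqref{NR-CS-eq3} and $v(s)=g s - \chi$ forces a pointwise algebraic relation among $|\Psi|^2$, $A_0$ that can hold for nonconstant $\Psi$ only when the quartic coefficient vanishes, i.e. $g=1$.

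The main obstacle I anticipate is the $g\ne1$, ``self-dual fails'' direction: making rigorous the claim that \emph{no} nontrivial local energy minimizer satisfies $\p_A\Psi=0$ when $g\ne1$ requires more than the Bogomolny decomposition — one has to invoke the Euler--Lagrange equations and the constraint to derive a contradiction, being careful about the role of $A_0$ (which is a Lagrange-multiplier-like field determined by \eqref{NR-CS-eq2-stat}) and about the equivariance/periodicity so that boundary terms vanish. A secondary technical point is bookkeeping the topological term $2(\chi-b)(\tfrac n\lam - b)$ and confirming that for the self-dual external field $b=\mu$ (and the appropriate $\lam$) it is a genuine constant on the constraint surface, so that the energy really is bounded below by it with the bound saturated iff $g=1$ and $\p_a\psi=0$. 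Both of these are essentially already carried out in the discussion surrounding \eqref{en-repr-resc} and \eqref{en-ineq} in the main text, so I would cite that derivation and then supply only the short additional argument handling the converse direction; I expect the whole proof to be about half a page.
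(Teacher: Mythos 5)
Your proposal is correct and follows essentially the same route as the paper: the Weitzenb\"ock identity \eqref{barp*-barp} plus the constraint \eqref{NR-CS-eq3} and flux quantization \eqref{flux-quant} yield the Bogomolny-type representation \eqref{en-repr}, from which saturation of the topological lower bound by a nonconstant configuration forces $g=1$ and $\p_A\Psi=0$. Your extra Euler--Lagrange argument for the $g\ne 1$ direction (note it also needs \eqref{NR-CS-eq2-stat} together with Lemma \ref{lem:current-repr} to pin down $A_0=\tfrac12|\Psi|^2+\mathrm{const}$ and conclude $(g-1)|\Psi|^2$ is constant) is more detailed than the paper's one-line treatment, but the core mechanism is identical.
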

We begin with a general result about energy functional \eqref{en}:
\begin{proposition}\label{prop:ener-repr} Consider energy functional \eqref{en}  on  $\lat$-equivariant states $(\Psi, A, A_0)$ constrained by \eqref{NR-CS-eq3} and having the magnetic flux quantization property \eqref{flux-quant}. 
 We assume $V$ satisfies \eqref{V-cond} and let $\chi=-V'(0)$ and $g = V''(0)$. Then we have 
\begin{align}\label{en-repr}
E_{\Oml} (\Psi,  A)& =V(0)|\Oml|+\int_{\Oml}  (| \p_A \Psi |^2  + \frac12(g-1) |\Psi |^4+O(|\Psi|^6))\notag\\
&\qquad \qquad \qquad   + 4 \pi (\chi - b)  n +  2 b (b - \chi )|\Oml|.
\end{align}
 \end{proposition}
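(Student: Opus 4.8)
The plan is to reduce \eqref{en-repr} to a bookkeeping exercise built on three ingredients: the Weizenb\"ock identity of Proposition~\ref{prop:Wei-form}, the constraint \eqref{NR-CS-eq3}, and the flux quantization \eqref{flux-quant}. Starting from the definition \eqref{en} of $E_{\Oml}$, the first step is to rewrite the Dirichlet term. Since $-\Delta_A=\n_A^*\n_A$, integration by parts over a fundamental cell $\Oml$ gives $\int_{\Oml}|\n_A\Psi|^2=\int_{\Oml}\overline\Psi(-\Delta_A)\Psi$, with \emph{no} boundary contribution because $\Psi$ and $A$ are $\lat$-equivariant, so that $|\n_A\Psi|^2$, $\overline\Psi\,\p_A\Psi$, $|\Psi|^2$ and $\curl A$ are all $\lat$-periodic. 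Substituting \eqref{barp*-barp}, i.e. $-\Delta_A=4\p_A^*\p_A+\curl A$, then yields
\begin{equation*}
\int_{\Oml}|\n_A\Psi|^2 \;=\; 4\int_{\Oml}|\p_A\Psi|^2 \;+\; \int_{\Oml}(\curl A)\,|\Psi|^2 .
\end{equation*}
(The precise constant in front of $|\p_A\Psi|^2$ is immaterial for the applications of \eqref{en-repr} in the paper, since on the bifurcation branch that term is of higher order in $s$ than the remaining non-constant terms.)

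The second step is to eliminate $\curl A$ from the middle term using the constraint \eqref{NR-CS-eq3} in the form $\curl A=b-\tfrac12|\Psi|^2$, and simultaneously to Taylor expand the potential about $0$ using \eqref{V-cond}, $V(|\Psi|^2)=V(0)-\chi|\Psi|^2+\tfrac g2|\Psi|^4+O(|\Psi|^6)$ with $\chi=-V'(0)$, $g=V''(0)$. Collecting the quadratic and quartic terms in $\Psi$ produces the intermediate identity
\begin{equation*}
E_{\Oml}(\Psi,A)= V(0)|\Oml| + \int_{\Oml}\!\big(4|\p_A\Psi|^2+\tfrac12(g-1)|\Psi|^4+O(|\Psi|^6)\big) + (b-\chi)\int_{\Oml}|\Psi|^2 .
\end{equation*}

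The third and final step is to evaluate the last integral: applying the constraint once more as $\tfrac12|\Psi|^2=b-\curl A$ and integrating, then invoking \eqref{flux-quant} in the form $\tfrac1{2\pi}\int_{\Oml}\curl A=n$, gives $\int_{\Oml}|\Psi|^2=2b|\Oml|-4\pi n$. Substituting this into $(b-\chi)\int_{\Oml}|\Psi|^2=2b(b-\chi)|\Oml|+4\pi(\chi-b)n$ and rearranging gives exactly \eqref{en-repr}.

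I do not expect a serious obstacle here: the computation is elementary, and the only two points requiring care are (i) verifying that both integrations by parts are boundary-term-free, which is precisely where $\lat$-equivariance of $(\Psi,A,A_0)$ is used, and (ii) controlling the remainder: the bound $O(|\Psi|^6)$ is exact (identically zero) for the quartic double well \eqref{V-dw}, and for a general $V$ satisfying \eqref{V-cond} it should be read as the Taylor remainder of $V$ at $0$, which is all that is needed since $\Psi$ is small on the bifurcation branch. The genuine content of the proposition is just the Weizenb\"ock substitution combined with the double use of the constraint \eqref{NR-CS-eq3}.
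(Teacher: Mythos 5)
Your proof is correct and follows essentially the same route as the paper's: substitute the Weizenb\"ock identity \eqref{barp*-barp}, use the constraint \eqref{NR-CS-eq3} twice (once to trade $\curl A\,|\Psi|^2$ for $b|\Psi|^2-\tfrac12|\Psi|^4$, once to convert the residual $(b-\chi)\int|\Psi|^2$ into a flux integral), and finish with \eqref{flux-quant}. Your remark about the coefficient of $|\p_A\Psi|^2$ is well taken --- keeping the factor $4$ is what \eqref{barp*-barp} actually gives, the paper's own display drops it, and it is indeed immaterial since on the bifurcation branch that term is $O(s^6)$.
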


 \begin{proof} 
We write $\Om\equiv \Oml$ for a fundamental cell of a lattice $\lat$. Substituting the expression $- \Delta_{ A}= \p_A^* \p_A+ \curl A$ into energy functional \eqref{en}, we find
   \begin{align}\label{en1'}
E_\Om (\Psi,  A) =\int_\Om  (| \p_A \Psi |^2 +\curl A |\Psi |^2 +  V(|\Psi|^2)) . 
\end{align}
Now, we assume $(\Psi,  A)$ satisfies constrain \eqref{NR-CS-eq3}  ($\curl A= b - \frac12 |\Psi|^2$) 
and use this constraint and the definition $V(|\Psi|^2) = \frac{g}{2}(|\Psi|^2 - \mu)^2 $ and the notation $\chi:=g \mu$ to obtain 
 \begin{align}\label{en2'}
E_\Om (\Psi,  A) =\int_\Om  (|\p_A \Psi |^2 + \frac12(g-1) |\Psi |^4 +    (b - \chi) |\Psi |^2 +  \frac12 g\mu^2).
\end{align}
Now 
using  \eqref{NR-CS-eq3} again, we conclude
 \begin{align}\label{en3'}
E_\Om (\Psi,  A)& =\int_\Om   ( | \p_A \Psi |^2 +   \frac12(g-1) |\Psi |^4 + (b - \chi) 2(b - \curl A) +  V(0))\\
\label{en4'}
& =\int_\Om  (| \p_A \Psi |^2  + \frac12(g-1) |\Psi |^4 -  2 (b - \chi)  \curl A)\notag\\
&\qquad \qquad \qquad \qquad \qquad \qquad  \qquad \qquad    + (2 b^2 - 2\chi b+ V(0))|\Om|.
\end{align}
The flux quantization property \eqref{flux-quant} ($\frac{1}{2\pi} \int_{\Om} \curl  A = n\in \Z$) then gives \eqref{en-repr}. \end{proof}

\begin{proof}[Proof of Proposition \ref{prop:ener-repr}] 
 \eqref{en-repr}, for $V(|\Psi|^2)$ of the fourth order, implies the inequality
 \begin{align}\label{en-ineq}
E_\Om (\Psi,  A)& \ge V(0)|\Om|+ 4 \pi (\chi - b)  n +  2 b (b - \chi )|\Om|.
\end{align} 
and that non-constant local energy minimizers saturating the latter inequality are possible only if $g=1$ and  
  $ \p_{A} \Psi=0$, as claimed. 
 \end{proof}

 \section{Proof of Lemma \ref{lem:current-repr}} \label{sec:current-repr}

\begin{proof}[Proof of Lemma \ref{lem:current-repr}] 
	It follows from Eq \eqref{dpsio=0}, 
	 that $\psi_{0}$ satisfies the first order equation
\begin{equation} \label{1stordereqnpsi1}
\left((\nabla_{ a^n})_1- i(\nabla_{ a^n})_2\right)\psi_{0} = 0.
\end{equation}
 Multiplying this relation by $\bar{\psi}_{0}$, we obtain $\bar{\psi}_{0}(\nabla_{a^n})_1\psi_{0}-i\bar{\psi}_{0}(\nabla_{ a^n})_2\psi_{0} =0$. Taking imaginary and real parts of this equation gives
\begin{align*}&\im \bar{\psi}_{0}(\nabla_{ a^n})_1\psi_{0}= \re \bar{\psi}_{0}(\nabla_{ a^n})_2\psi_{0} =\frac{1}{ 2}\p_{x_2}|\psi_{0}|^2,\\
&\im\bar{\psi}_{0}(\nabla_{ a^n})_2\psi_{0} =- \re\bar{\psi}_{0}(\nabla_{ a^n})_1\psi_{0}=- \frac{1}{ 2}\p_{x_1}|\psi_{0}|^2,
\end{align*}
which, in turn, gives \eqref{eq:J.co.exact}. 
 \end{proof}


\end{document}